\titlespacing*{\section}{0ex}{1ex}{1.5ex}
\titlespacing*{\paragraph}{0ex}{1ex}{1.3ex}
\titlespacing*{\optionalsubsection}{0ex}{1ex}{1ex}
\pgfplotsset{compat=newest}
\newdimen\proofrulebreadth \proofrulebreadth=.05em
\newdimen\proofdotseparation \proofdotseparation=1.25ex
\newdimen\proofrulebaseline \proofrulebaseline=2ex
\let\then\relax
\def\hfi{\hskip0pt plus.0001fil}
\mathchardef\squigto="3A3B
\newif\ifinsideprooftree\insideprooftreefalse
\newif\ifonleftofproofrule\onleftofproofrulefalse
\newif\ifproofdots\proofdotsfalse
\newif\ifdoubleproof\doubleprooffalse
\let\wereinproofbit\relax
\newdimen\shortenproofleft
\newdimen\shortenproofright
\newdimen\proofbelowshift
\newbox\proofabove
\newbox\proofbelow
\newbox\proofrulename
\def\shiftproofbelow{\let\next\relax\afterassignment\setshiftproofbelow\dimen0 }
\def\shiftproofbelowneg{\def\next{\multiply\dimen0 by-1 }%
\afterassignment\setshiftproofbelow\dimen0 }
\def\setshiftproofbelow{\next\proofbelowshift=\dimen0 }
\def\setproofrulebreadth{\proofrulebreadth}
\def\prooftree{
%
\ifnum  \lastpenalty=1
\then   \unpenalty
\else   \onleftofproofrulefalse
\fi
%
\ifonleftofproofrule
\else   \ifinsideprooftree
        \then   \hskip.5em plus1fil
        \fi
\fi
%
\bgroup
\setbox\proofbelow=\hbox{}\setbox\proofrulename=\hbox{}%
\let\justifies\proofover\let\leadsto\proofoverdots\let\Justifies\proofoverdbl
\let\using\proofusing\let\[\prooftree
\ifinsideprooftree\let\]\endprooftree\fi
\proofdotsfalse\doubleprooffalse
\let\thickness\setproofrulebreadth
\let\shiftright\shiftproofbelow \let\shift\shiftproofbelow
\let\shiftleft\shiftproofbelowneg
\let\ifwasinsideprooftree\ifinsideprooftree
\insideprooftreetrue
%
\setbox\proofabove=\hbox\bgroup$\displaystyle 
\let\wereinproofbit\prooftree
%
\shortenproofleft=0pt \shortenproofright=0pt \proofbelowshift=0pt
%
\onleftofproofruletrue\penalty1
}
\def\eproofbit{
%
\ifx    \wereinproofbit\prooftree
\then   \ifcase \lastpenalty
        \then   \shortenproofright=0pt  
        \or     \unpenalty\hfil         
        \or     \unpenalty\unskip       
        \else   \shortenproofright=0pt  
        \fi
\fi
%
\global\dimen0=\shortenproofleft
\global\dimen1=\shortenproofright
\global\dimen2=\proofrulebreadth
\global\dimen3=\proofbelowshift
\global\dimen4=\proofdotseparation
\global\count255=\proofdotnumber
%
$\egroup  
%
\shortenproofleft=\dimen0
\shortenproofright=\dimen1
\proofrulebreadth=\dimen2
\proofbelowshift=\dimen3
\proofdotseparation=\dimen4
\proofdotnumber=\count255
}
\def\proofover{
\eproofbit 
\setbox\proofbelow=\hbox\bgroup 
\let\wereinproofbit\proofover
$\displaystyle
}%
\def\proofoverdbl{
\eproofbit 
\doubleprooftrue
\setbox\proofbelow=\hbox\bgroup 
\let\wereinproofbit\proofoverdbl
$\displaystyle
}%
\def\proofoverdots{
\eproofbit 
\proofdotstrue
\setbox\proofbelow=\hbox\bgroup 
\let\wereinproofbit\proofoverdots
$\displaystyle
}%
\def\proofusing{
\eproofbit 
\setbox\proofrulename=\hbox\bgroup 
\let\wereinproofbit\proofusing
\kern0.3em$
}
\def\endprooftree{
\eproofbit 
  \dimen5 =0pt
%
\dimen0=\wd\proofabove \advance\dimen0-\shortenproofleft
\advance\dimen0-\shortenproofright
%
\dimen1=.5\dimen0 \advance\dimen1-.5\wd\proofbelow
\dimen4=\dimen1
\advance\dimen1\proofbelowshift \advance\dimen4-\proofbelowshift
%
\ifdim  \dimen1<0pt
\then   \advance\shortenproofleft\dimen1
        \advance\dimen0-\dimen1
        \dimen1=0pt
        \ifdim  \shortenproofleft<0pt
        \then   \setbox\proofabove=\hbox{%
                        \kern-\shortenproofleft\unhbox\proofabove}%
                \shortenproofleft=0pt
        \fi
\fi
%
\ifdim  \dimen4<0pt
\then   \advance\shortenproofright\dimen4
        \advance\dimen0-\dimen4
        \dimen4=0pt
\fi
%
\ifdim  \shortenproofright<\wd\proofrulename
\then   \shortenproofright=\wd\proofrulename
\fi
%
\dimen2=\shortenproofleft \advance\dimen2 by\dimen1
\dimen3=\shortenproofright\advance\dimen3 by\dimen4
%
\ifproofdots
\then
        \dimen6=\shortenproofleft \advance\dimen6 .5\dimen0
        \setbox1=\vbox to\proofdotseparation{\vss\hbox{$\cdot$}\vss}%
        \setbox0=\hbox{%
                \advance\dimen6-.5\wd1
                \kern\dimen6
                $\vcenter to\proofdotnumber\proofdotseparation
                        {\leaders\box1\vfill}$%
                \unhbox\proofrulename}%
\else   \dimen6=\fontdimen22\the\textfont2 
        \dimen7=\dimen6
        \advance\dimen6by.5\proofrulebreadth
        \advance\dimen7by-.5\proofrulebreadth
        \setbox0=\hbox{%
                \kern\shortenproofleft
                \ifdoubleproof
                \then   \hbox to\dimen0{%
                        $\mathsurround0pt\mathord=\mkern-6mu%
                        \cleaders\hbox{$\mkern-2mu=\mkern-2mu$}\hfill
                        \mkern-6mu\mathord=$}%
                \else   \vrule height\dimen6 depth-\dimen7 width\dimen0
                \fi
                \unhbox\proofrulename}%
        \ht0=\dimen6 \dp0=-\dimen7
\fi
%
\let\doll\relax
\ifwasinsideprooftree
\then   \let\VBOX\vbox
\else   \ifmmode\else$\let\doll=$\fi
        \let\VBOX\vcenter
\fi
\VBOX   {\baselineskip\proofrulebaseline \lineskip.2ex
        \expandafter\lineskiplimit\ifproofdots0ex\else-0.6ex\fi
        \hbox   spread\dimen5   {\hfi\unhbox\proofabove\hfi}%
        \hbox{\box0}%
        \hbox   {\kern\dimen2 \box\proofbelow}}\doll%
%
\global\dimen2=\dimen2
\global\dimen3=\dimen3
\egroup 
\ifonleftofproofrule
\then   \shortenproofleft=\dimen2
\fi
\shortenproofright=\dimen3
%
\onleftofproofrulefalse
\ifinsideprooftree
\then   \hskip.5em plus 1fil \penalty2
\fi
}
\newtheorem{theorem}{Theorem}[section]
\newtheorem{example}{Example}[section]
\newtheorem{remark}{Remark}[section]
\newtheorem{corollary}{Corollary}[section]
\theoremstyle{definition}
\newtheorem{definition}{Definition}
\newtheorem{proposition}{Proposition}
\newtheorem{comment}{Comment}
\newcommand{\B}[1]{\mathbf{#1}}
\newcommand{\BB}[1]{\mathbb{#1}}
\newcommand{\C}[1]{\mathcal{#1}}
\newcommand{\F}[1]{\mathfrak{#1}}
\newcommand{\RM}[1]{\mathrm{#1}}
\newcommand{\SF}[1]{\mathsf{#1}}
\newcommand{\Met}{\mathsf{Met}}
\newcommand{\Mod}{\Lawv\mathsf{Mod}}
\newcommand{\GMet}{\Lawv\mathsf{CCat}}
\newcommand{\colim}{\mathrm{colim}}
\newcommand{\Sym}{\mathrm{Sym}}
\newcommand{\matr}[1]{\hat{#1}}
\newcommand\pfun{\mathrel{\ooalign{\hfil$\mapstochar\mkern5mu$\hfil\cr$\to$\cr}}}
\newcommand{\lam}{\lambda}
\newcommand{\STLC}{\RM{STLC}}
\newcommand{\BSTLC}{\mathsf b\RM{STLC}}
\newcommand{\STDLC}{\RM{ST}\partial\RM{LC}}
\newcommand{\Der}{\SF D}
\newcommand{\Diff}[2]{\Der[#1,#2]}
\newcommand{\true}{\prog{True}}
\newcommand{\false}{\prog{False}}
\newcommand{\Te}[1]{\C T(#1)}
\newcommand{\prog}[1]{\mathtt{#1}}
\newcommand{\Lawv}{\BB L}
\newcommand{\QualREL}[1]{#1 \SF{Rel}}
\newcommand{\QREL}{\QualREL{Q}}
\newcommand{\LREL}{\QualREL{\Lawv}}
\newcommand{\op}{\mathrm{op}}
\newcommand{\menus}{\dotdiv} 
\newcommand{\norm}[1]{\lVert#1\rVert}
\newcommand{\supnorm}[1]{\lVert#1\rVert_\infty}
\newcommand{\absv}[1]{\left\lvert#1\right\rvert}
\newcommand{\trop}[1]{\SF t #1}
\newcommand{\model}[1]{\llbracket#1\rrbracket}
\newcommand{\HOM}[3]{{#1}(#2,#3)}
\newcommand{\N}{\BB N}
\newcommand{\R}{\BB R}
\newcommand{\set}[1]{\{#1\}}
\newcommand{\multiset}{\C M_{\mathrm{fin}}}
\title{Tropical Mathematics and the Lambda-calculus I:\\
{\Large Metric and Differential Analysis of Effectful Programs}
} 
\author{
Davide {Barbarossa}\\
{\small Dipartimento di Informatica - Scienza e Ingegneria}, \\
{\small Universit\`a di Bologna, Italy}\\
{\small \texttt{davide.barbarossa@unibo.it}}\\
\ \\
Paolo {Pistone}\\
{\small Dipartimento di Informatica - Scienza e Ingegneria}, \\
{\small Universit\`a di Bologna, 
Italy}\\
{\small \texttt{paolo.pistone2@unibo.it}}
}
\date{}
\begin{document}

\maketitle

\begin{abstract}
We study the interpretation of the lambda-calculus in a framework based on tropical mathematics, and we show that it provides a unifying framework for two well-developed quantitative approaches to program semantics: on the one hand program metrics, based on the analysis of program sensitivity via Lipschitz conditions, on the other hand resource analysis, based on linear logic and higher-order program differentiation. 
To do that we focus on the semantics arising from the relational model weighted over the tropical semiring, and we discuss its application to the study of “best case” program behavior for languages with probabilistic and non-deterministic effects. Finally, we show that a general foundation for this approach is provided by an abstract correspondence between tropical algebra and Lawvere’s theory of generalized metric spaces.
\end{abstract}

\section{Introduction}

In recent years, more and more interest in the programming language community has been directed towards the study of \emph{quantitative} properties of programs like e.g.~the number of computation steps or the probability of convergence, 
as opposed to purely \emph{qualitative} properties like termination or program equivalence. 
Notably, a significant effort has been made to extend, or adapt, well-established qualitative methods, like type systems, relational logics or denotational semantics, to account for quantitative properties. We can mention, for example, 
intersection type systems aimed at capturing time or space resources \cite{decarvalho2018, Accattoli2022} or convergence probabilities \cite{Breuvart2018, PistoneLICS2022},  relational logics to account for probabilistic properties like e.g.~differential privacy \cite{Barthe_2012} or metric preservation \cite{Reed2010, dallago}, as well as the study of denotational models for 
probabilistic \cite{Ehrhard2011, Staton2017} or differential \cite{difflambda} extensions of the $\lambda$-calculus. 
The main reason to look for methods relying on (quantitative extensions of) type-theory or denotational semantics is that these approaches yield \emph{modular} and \emph{compositional} techniques, that is, allow one to deduce properties of complex programs from the properties of their constituent parts.   

\subparagraph*{Two approaches to quantitative semantics}

Among such quantitative approaches, two have received considerable attention.

On the one hand one one could mention the approach of \emph{program metrics} \cite{Reed2010, Gaboardi2017, Gabo2019} and \emph{quantitative equational theories} \cite{Plotk}: when considering probabilistic or approximate computation, rather than asking whether two programs compute \emph{the same} function, it makes more sense to ask   whether they compute functions which do not differ \emph{too much}. This has motivated the study of denotational frameworks in which types are endowed with a metric, measuring similarity of behavior; this approach has found  applications in e.g.~differential privacy \cite{Reed2010} and coinductive methods \cite{Bonchi2018}, and was recently extended to account for the full $\lambda$-calculus \cite{Geoffroy2020, PistoneLICS, PistoneFSCD2022}.

On the other hand, there is the approach based on \emph{differential} \cite{difflambda} or \emph{resource-aware} \cite{Boudol1993} extensions of the $\lambda$-calculus, which is well-connected to the so-called \emph{relational semantics} \cite{Manzo2012, Manzo2013, dill} and has a syntactic counterpart in the study of \emph{non-idempotent} intersection types \cite{decarvalho2018, Mazza2016}. This family of approaches have been exploited to account for higher-order program differentiation \cite{difflambda}, to establish reasonable \emph{cost-models} for the $\lambda$-calculus \cite{Accattoli2021}, and have also been shown suitable to the probabilistic setting \cite{Manzo2013, Breuvart2018, PistoneLICS2022}.

In both approaches the notion of \emph{linearity}, in the sense of linear logic \cite{girardLl} (i.e.~of using inputs exactly once), plays a crucial role.
In metric semantics, linear programs correspond to \emph{non-expansive} maps, i.e.~to functions that do not increase distances, and the possibility of duplicating inputs leads to interpret programs with a fixed duplication bound as \emph{Lipschitz-continuous} maps \cite{Gaboardi2017}.
By contrast, in the standard semantics of the differential $\lambda$-calculus, linear programs correspond to linear maps, in the usual algebraic sense, while the possibility of duplicating inputs leads to consider functions defined as \emph{power series}.

A natural question, at this point, is whether these two apparently unrelated ways of interpreting linearity and duplication can be somehow reconciled. At a first glance, there seems to be a  ``logarithmic'' gap between the two approaches:
in metric models a program duplicating an input $n$ times yields a {linear} (hence \emph{Lipschitz}) function $n x$, whereas in differential models it would lead to a {polynomial} function $x^{n}$, thus not Lipschitz. The fundamental idea behind this work is the observation that 
this gap is naturally overcome once we interpret these functions in the framework of tropical mathematics, where, as we will see, the monomial $x^{n}$ precisely reads as the linear function $n x$.

\subparagraph*{Tropical mathematics and program semantics } 

Tropical mathematics was introduced in the seventies by the Brazilian mathematician Imre Simon \cite{Simon} as an alternative approach to algebra and geometry where the usual ring structure of numbers based on addition and multiplication is replaced by the semiring structure given, respectively, by ``$\min$'' and ``$+$''.
%
%
For instance, the polynomial $p(x,y)=x^{2}+xy^{2}+y^{3}$, when interpreted over the tropical semiring, translates as the piecewise linear function
$
\varphi(x,y)=\min\{2x, x+2y, 3y\}
$.
In the last decades, tropical geometry evolved into a vast and rich research domain, providing a combinatorial counterpart of usual algebraic geometry, with important connections with optimisation theory \cite{Sturmfelds}.
Computationally speaking, working with $\min$ and $+$ is generally easier than working with standard addition and multiplication; for instance, the fundamental (and generally intractable) problem of finding the roots of a polynomial admits a \emph{linear time} algorithm in the tropical case (and, moreover,  the tropical roots can be used to approximate the actual roots \cite{Noferini2015}).
The combinatorial nature of several methods in tropical mathematics explains why these are so widely applied in computer science, notably for convex analysis and machine learning (see \cite{Maragos2021} for a recent survey).

Coming back to our discussion on program semantics, tropical mathematics seems to be just  what we look for, as it turns polynomial functions like $x^{n}$ into Lipschitz maps like $n x$.
At this point, it is worth mentioning that a tropical variant of the usual relational semantics of linear logic and the $\lambda$-calculus has already been considered \cite{Manzo2013}, and shown capable of capturing \emph{best-case} quantitative properties, but has not yet been studied in detail. Furthermore, connections between tropical linear algebra and metric spaces have also been observed \cite{Fuji} within the abstract setting of \emph{quantale-enriched} categories \cite{Hofmann2014, Stubbe2014}.
However, a thorough investigation of the interpretation of the $\lambda$-calculus within tropical mathematics and of the potentialities of its applications has not yet been undertaken. 

In this paper we make a first step in such direction, by demonstrating that the relational interpretation of the $\lambda$-calculus based on tropical mathematics does indeed provide the desired bridge between differential and metric semantics, and suggests new combinatorial methods to study probabilistic and non-deterministic programs.

\subparagraph*{Contributions and outline of the paper}


Our contributions in this paper are the following:
\begin{itemize}

\item We first show that tropical polynomials naturally arise in the {best-case} analysis of probabilistic and non-deterministic programs, turning the study of quantitative program behavior into a purely combinatorial problem. This is in Sections \ref{section2} and \ref{section22}.

\item We study the relational model over the tropical semiring, which provides a semantics of effectful extensions of the simply typed $\lambda$-calculus ($\STLC$ in the following) and $\mathrm{PCF}$ \cite{Plotkin1977}. Notably, 
 we show that higher-order programs are interpreted by a  
generalizations of {tropical power series} \cite{Porzio2021}, and we show that these functions are locally Lipschitz-continuous, thus yielding a full-scale metric semantics. This is in Sections \ref{section3} and~\ref{sec:tls}.
 
 \item We exploit the differential structure of the relational model to study the \emph{tropical Taylor expansion} of a $\lambda$-term, which can be seen as an approximation of the term by way of Lipschitz-continuous maps, and we show that it can be used to compute approximated Lipschitz-constants for higher-order programs.
This is in Section \ref{sec:TayLip}.


\item We conclude 
by framing the connection between the 
tropical, differential and metric viewpoints at a more abstract level.
We recall a well-known correspondence between Lawvere's \emph{generalized metric spaces} \cite{Lawvere1973, Stubbe2014} and modules over the tropical semi-ring \cite{Russo2007} and we show that it yields a model of the differential $\lambda$-calculus which extends the tropical relational model. This is in Section~\ref{sec:GMS}.
\end{itemize}
%
%
%
%
%
%
%
%
%

\section{A Bridge between Metric and Differential Aspects}\label{section5bis}

In this section, we discuss in some more detail the two approaches to quantitative semantics we mentioned in the Introduction, at the same time providing an overview of how we aim at bridging them using tropical mathematics.

\subparagraph*{Metric Approach: Bounded $\lambda$-Calculus
}

In many situations (e.g.~when dealing with computationally difficult problems) one does not look for algorithms to compute a function \emph{exactly}, but rather to approximate it (in an efficient way) within some error bound. In other common situations (e.g.~in differential privacy \cite{Alvim2011, Reed2010}) one needs to verify that an algorithm is not \emph{too sensitive} to errors, that is, that a small error in the input will produce a comparably small error in the output. 
In all these cases, it is common to consider forms of denotational semantics in which types are endowed with a \emph{behavioral metric}, that is, a metric on programs which accounts for differences in behavior. 
A fundamental insight coming from this line of work is that, if one can somehow   \emph{bound} the number of times that a program may duplicate its input, the resulting program will be \emph{Lipschitz-continuous}:  
if $M$ may duplicate at most $L$ times, then an error $\epsilon$ between two inputs will result in an error less or equal to $L\cdot \epsilon$ in the corresponding outputs \cite{Reed2010, Gaboardi2017}.
For instance, the higher-order program $M=\lambda f.\lambda x.f(f(x))$, which duplicates the functional input $f$, yields a $2$-Lipschitz map between the metric space $\BB R\multimap \BB R$ of non-expansive real functions and itself: if $f,g$ are two non-expansive maps differing by at most $\epsilon$ (i.e.~for which $|f(x)-g(x)|\leq \epsilon$ holds for all $x\in \BB R$), then the application of $M$ to $f$ and $g$ will produce two maps differing by at most $2\epsilon$. 

%

These observations have led to the study of $\lambda$-calculi with \emph{graded} exponentials, like $\mathsf{Fuzz}$ \cite{Reed2010}, inspired from Girard's Bounded Linear Logic \cite{Girard92tcs}, which have been applied to the study of differential privacy \cite{Gaboardi2013, Gaboardi2017}. The types of such systems are defined by combining linear constructors with a \emph{graded linear exponential comonad} $!_{r}(-)$ \cite{Katsumata2018}.

Yet, what about the good old, ``unbounded'', simply typed $\lambda$-calculus? Actually, by using unbounded duplications, one might lose the Lipschitz property. For instance, while the functions $M_{k}=\lambda x. k\cdot x: \BB R\to \BB R$ are all Lipschitz-continuous, with Lipschitz constant $k$, the function $M=\lambda x.x^{2}$ obtained by ``duplicating'' $x$ is not Lipschitz anymore: $M$ is, so to say, \emph{too} sensitive to errors. 
More abstractly, it is well-known that the category $\Met$ is \emph{not} cartesian closed, so it is not a model of $\STLC$ (yet, several cartesian closed \emph{sub-}categories of $\Met$ do exist, see e.g.~\cite{Clementino2006, PistoneFSCD2022}).
Still, one might observe that the program $M$ above is actually Lipschitz-continuous, if not globally, at least \emph{locally} (i.e.~over any compact set). Indeed, some cartesian closed categories of locally Lipschitz maps have been produced in the literature \cite{Ehrhard2011, PistoneLICS}, and a new example will be exhibited in this paper.

\subparagraph*{Resource Approach: the Differential $\lambda$-Calculus
}

A different family of approaches to linearity and duplication arises from the study of the \emph{differential $\lambda$-calculus} \cite{difflambda} (and differential linear logic \cite{dill}) and its categorical models. 
The key ingredient is a \emph{differential constructor} $\Der[\_,\_]$,  added to the usual syntax of the $\lambda$-calculus. The intuition is that, given $M$ of type $A\to B$ and $N$ of type $A$, the program $\Der[M,N]$, still of type $A\to B$, corresponds to the \emph{linear application} of $M$ to $N$: this means that $N$ is passed to $M$ so that the latter may use it exactly once.
This is also why $\Der[M,N]$ still has type $A\to B$, since $M$ might need \emph{other} copies of an input of type $A$. 
In particular, the application of $\Der[M,N]$ to an ``error term'' $0$ ensures that $M$ will use $N$ exactly once (we say \emph{linearly}).
%

The reason why $\Der$ is called a ``differential'', is twofold: semantically, its interpretation is a generalisation of the usual differential form analysis (see \autoref{section3}); syntactically, it allows to define the so-called \emph{Taylor expansion} $\C T$ of programs:
the idea is that one can expand any application $MN$ as an infinite formal sum of \emph{linear} applications
$\Der^{k}[M,N^k]0$, i.e.~where $N$ is linearly passed exactly $k$ times to $M$; 
doing this recursively gives rise to the suggestive {Taylor formula} $\Te{MN} :=  \sum_{k=0}^{\infty}\frac{1}{!k}\cdot \Der^{k}[\Te{M},\Te{N}^k]0$.
In other words, unbounded duplications correspond to some sort of limit of bounded, but arbitrarily large, ones.

\subparagraph*{Tropical Mathematics: Lipschitz Meets Taylor}

At this point, as the Taylor formula decomposes an unbounded application as a limit of bounded ones, one might well ask whether it could be possible to see this formula as interpreting  a $\lambda$-term 
as a limit of Lipschitz maps, in some sense, thus bridging the metric and differential approaches.  
Here, a natural direction to look for is the \emph{weighted relational semantics} \cite{Manzo2013}, 
due to its strict relations with the Taylor expansion of programs.
However, in this semantics, arbitrary terms correspond to power series, and terms with bounded applications correspond to {polynomials}, hence in any case to functions which are \emph{not} Lipschitz. 

Yet, what if such polynomials were tropical ones, i.e.~piecewise linear functions? This way, the Taylor formula could really be interpreted as a decomposition of $\lambda$-terms via limits (indeed, $\inf$s) of Lipschitz maps. In other words, unbounded term application could be seen 
as a limit of \emph{more and more sensitive} operations.

This viewpoint, that we develop in the following sections, leads to the somehow unexpected discovery of a bridge between the metric and differential study of higher-order programs.
This connection not only suggests the application of optimization methods based on tropical mathematics to the study of the $\lambda$-calculus and its quantitative extensions, but it scales to a 
more abstract level, leading to introduce a 
differential operator for continuous functors between \emph{generalized} metric spaces (in the sense of \cite{Lawvere1973}), as shown in Section \ref{sec:GMS}.

\section{Tropical Polynomials and Power Series}\label{section2}


At the basis of our approach is the observation that the \emph{tropical semiring} $([0,\infty], \min, +)$, which is at the heart of tropical mathematics, coincides with the \emph{Lawvere quantale} $\Lawv=([0,\infty], \geq, +)$ \cite{Hofmann2014, Stubbe2014}, the structure at the heart of the categorical study of metric spaces initiated by Lawvere himself \cite{Lawvere1973}.
Let us recall that a quantale is a complete lattice endowed with a continuous monoid action. In the case of $\Lawv$ the lattice is defined by the reverse order $\geq$ on $\BB R$, and the monoid action is provided by addition. Notice that the lattice join operation of $\Lawv$ coincides with the idempotent semiring operation $\min$. 



Power series and polynomials over the tropical semiring are defined as follows:

\begin{definition}
A \emph{tropical power series} (tps) 
in $k$-variables is a function $f:\Lawv^k\to\Lawv$ of shape $f(x)=\inf_{i\in I}\{ix + \matr f(i) \}$, where 
$I\subseteq\mathbb N^{k}$, $i x$ is the scalar product and $\matr f\in \Lawv^{{\mathbb N}^{k}}$ is a vector of coefficients.
When $I$ is finite, $f$ is called a \emph{tropical polynomial}. 
\end{definition}
Hence, a unary tps is a function $f:\Lawv\to \Lawv$ of the form $f(x)=\inf_{i\in I}\{ix+a_{i}\}$, with $I\subseteq\mathbb N$ and the $a_{i}\in \Lawv$.
In Section \ref{section3} we also consider tps in \emph{infinitely many} variables.

A tropical polynomial is always a piece-wise linear function since, e.g.\ in one variable, it has shape $f(x)=\min_{0\leq j\leq n}\{i_{j}x+c_{i_{j}}\}$.
For example, the polynomials $\varphi_{n}(x)=\min_{0\leq j\leq n}\{jx+2^{-j}\}$
are illustrated in Fig.~\ref{fig:plot1} for $0\leq n \leq 4$.

%
%
%
%
%
%
%
%

\begin{wrapfigure}{r}{0.5\textwidth}
\begin{tikzpicture}[scale=0.9]
\begin{axis}[samples=250]
\addplot[yellow,domain=0:0.8] {1+0.02};

\addplot[orange,domain=0:0.8] {min(x+1/2, 1)+0.01};

\addplot[red,domain=0:0.8] {min(2*x+1/4, x+1/2, 1};
\addplot[blue,domain=0:0.8] {min(3*x+1/8,2*x+1/4, x+1/2, 1)-0.01};
\addplot[orange,domain=0:0.8] {min(4*x+1/16,3*x+1/8,2*x+1/4, x+1/2, 1)-0.02};

\addplot[violet,domain=0:0.8] {min(
10*x+1/1424,
9*x+1/712,
8*x+1/356,
7*x+1/128,
6*x+1/64,
5*x+1/32,
4*x+1/16,3*x+1/8,2*x+1/4, x+1/2, 1)-.03};

\end{axis}

\end{tikzpicture}
\caption{\small Tropical polynomials $\varphi_0,\dots,\varphi_4$ (top to bottom), and the limit tLs $\varphi$ (in violet). The points where the slope changes are  the tropical roots of $\varphi$, i.e.~the points $x=2^{-(i+1)}$, satisfying $ix+2^{-i}=(i+1)x+2^{-(i+1)}$.}
\label{fig:plot1}
\end{wrapfigure}
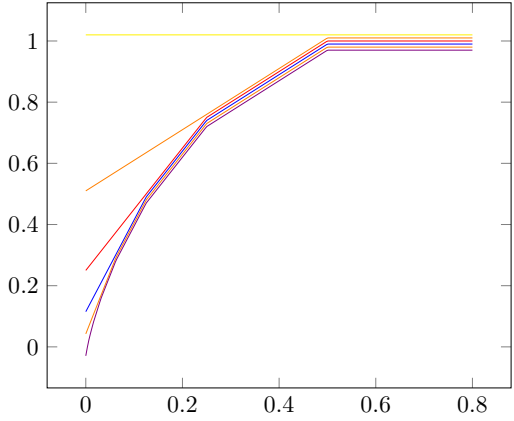 

A \emph{tropical root} of a tps $\varphi$ is a point $x\in \Lawv$ where $\varphi$ is not differentiable (i.e.~where the slope of $\varphi$ changes). When $\varphi$ is a polynomial, the roots of $\varphi$ coincide with the points where the minimum defining $\varphi$ is attained at least twice (see \autoref{fig:plot1}).
Unlike in standard algebra, tropical roots of tropical polynomials can be computed in linear time \cite{Noferini2015}.

%

While tropical polynomials are essentially combinatorial objects, this cannot be said for tps: since $\inf$s are not in general $\min$s, a tps is a ``limit'' of tropical polynomials of higher and higher degree, and its behavior is in general way more difficult to study than that of tropical polynomials~\cite{Porzio2021}. 
E.g., the tLs $\varphi(x):=\inf_{n\in\N}\set{nx+2^{-n}}$ (see Fig.~\ref{fig:plot1}) 
is the ``limit'' of the polynomials $\varphi_{n}$.

%
%



\begin{remark}\label{rmk:val_trop}
 There is a well-known relation between tropical polynomial/power series and usual polynomials/power series. 
Given $Q,L$ semirings with units and zeros, $L$ commutative idempotent, if $L$ is totally ordered by the partial order: $\alpha \preceq \beta$ iff $\alpha +  \beta = \beta$, then one defines a \emph{valuation} \cite{Izhakian2015} to be a map $\mathrm{val}:Q\to L$ s.t.\ $\mathrm{val}(0)=0$, $\mathrm{val}(1)=1$, $\mathrm{val}(ab)=\mathrm{val}(a)\mathrm{val}(b)$, $\mathrm{val}(a+b)\preceq \max\set{\mathrm{val}(a),\mathrm{val}(b)}$.
For example, the valuation which gives $1\in L$ on all the invertibles of $Q$ and $0$ otherwise, is called the \emph{trivial valuation} $\mathrm{val}_1$.

Now, for a power series $f(x)=\sum_{i\in I} a_i x^i$ (polynomials being the case for $I$ finite) with coefficients $a_i$ in a semiring $Q$, and for a fixed valuation $\mathrm{val}:Q\to \Lawv$, one defines the \emph{tropicalisation} $\trop^{\mathrm{val}}f:\Lawv \to \Lawv$ of $f$ as the tropical polynomial/power series function $\trop^{\mathrm{val}}f(\alpha):=\inf_{i\in I} \set{\mathrm{val}(a_i)+i\alpha}$.
\end{remark}

Notice that in $\Lawv$ we have that $\preceq$ is $\geq$ and $\max^{\preceq}=\min$ and the zero is $\infty$.
Therefore, for instance, the tropicalisation (w.r.t.\ any valuation) sends the infinite power series $\sum_{n\geq 2} x^n$ to the tps $\varphi(x)=\inf_n (n+2)x$.
But the latter always coincides with the tropical polynomial $\varphi(x)=2x$, since $nx\geq 0$ for all $n\in\N$, which is in turn the tropicalisation of the polynomial $x^2$.
This shows that tropicalisation is \emph{not} in general an injective operation and in fact, as we show in \autoref{theorem:fepsilon} below, tps (in finitely many variables) have a tendency to collapse, if not globally at least locally, onto tropical polynomials.

For instance, by looking at Fig~\ref{fig:plot1} it appears that, \emph{far from $0$}, $\varphi$ behaves like some of the polynomials $\varphi_{n}$.
In particular, 
$\varphi$ coincides on $[\epsilon,\infty]$ with $\varphi_{n}$,
for $\epsilon \geq 2^{-(n+1)}$ (the smallest tropical root of $\varphi_{n}$).
However, at
%
 $x=0$ we have that $\varphi(x=0)=\inf_{n\in\N} 2^{-n}=0$, and this is the only point where the $\inf$ is \emph{not} a $\min$.
Also, while the derivative of $f$ is bounded on all $(0,\infty)$, for $x\to 0^+$ it tends to $\infty$.
In fact, this is a general phenomenon, as showed below:
%

\begin{theorem}\label{theorem:fepsilon}
For all tps $f(x)=\inf_{n\in \mathbb N^{k}}\{n x+\matr f(n)\}$, for all $0<\epsilon<\infty$, there is a \emph{finite} $\C F_\epsilon \subseteq \N^k$ such that 
%
$f$ coincides on all $[\epsilon,\infty]^k$ with $P_\epsilon( x):=\min_{n\in \C F_\epsilon}\set{n x+\matr f(n)}$.
\end{theorem}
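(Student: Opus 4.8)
The plan is to take $\C F_\epsilon$ to be exactly the (finite) set of affine pieces $g_n(x):=nx+\matr f(n)$ that attain the infimum at some point of $R:=[\epsilon,\infty]^k$; call these the \emph{active} pieces. Everything rests on one elementary estimate: since every coordinate of $x\in R$ is at least $\epsilon$ and every $n\in\N^k$ has non-negative entries, $nx=\sum_j n_j x_j\ge \epsilon\sum_j n_j$, so a monomial of large total degree $\sum_j n_j$ contributes a large value \emph{everywhere} on $R$. First I would dispose of the degenerate case $f\equiv\infty$ (any nonempty $\C F_\epsilon$ works) and record that when $f\not\equiv\infty$ the infimum is attained pointwise: at a fixed $x$ the terms with $g_n(x)\le f(x)+1$ have $\sum_j n_j$ bounded by $(f(x)+1)/\epsilon$ on the finite coordinates, so only finitely many matter and $\inf$ becomes $\min$. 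Hence the theorem reduces to showing that the set $A$ of active pieces is \emph{finite}, and then $f=\min_{n\in A}g_n$ on $R$ by construction.

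I would prove finiteness of $A$ by induction on the number of variables $k$. For $k=1$ the estimate already suffices: if $g_{n_0}$ attains the minimum at the left endpoint $\epsilon$, then any $g_n$ with slope $n>n_0$ has $g_n(\epsilon)\ge g_{n_0}(\epsilon)$ and larger slope, so $g_n\ge g_{n_0}$ throughout $[\epsilon,\infty]$ and is never active; thus every active slope is $\le n_0\le f(\epsilon)/\epsilon$, a finite range. For the inductive step I argue by contradiction: if $A$ is infinite, pick distinct active $n^{(m)}$ witnessed at points $x^{(m)}\in R$, with necessarily $\sum_j n^{(m)}_j\to\infty$ (an $\ell^1$-bounded subset of $\N^k$ is finite), whence $f(x^{(m)})\ge\epsilon\sum_j n^{(m)}_j\to\infty$. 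As $R$ is compact (each factor $[\epsilon,\infty]$ is), I pass to a subsequence with $x^{(m)}\to x^\ast$ and set $J:=\set{j : x^\ast_j=\infty}$. If $J=\emptyset$ the points eventually lie in a bounded box $[\epsilon,T]^k$, where the estimate bounds $\sum_j n^{(m)}_j$ by $f(T,\dots,T)/\epsilon<\infty$, a contradiction. So $J\ne\emptyset$; moreover no finite-coefficient monomial can have support disjoint from $J$, since such a monomial would keep $f(x^{(m)})$ bounded (its value depends only on the convergent coordinates), contradicting $f(x^{(m)})\to\infty$.

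The main obstacle is precisely this surviving case, where every monomial is supported on the blowing-up coordinates $J$ and the values genuinely diverge, so the degrees cannot be bounded by a single value of $f$. Here I would invoke \textbf{Dickson's lemma}: the projections $(n^{(m)}_j)_{j\in J}\in\N^{|J|}$ admit a componentwise non-decreasing subsequence. Comparing, along it, an active $n^{(m)}$ against a later active $n^{(m')}$ at $x^{(m')}$, the $J$-coordinates contribute a non-negative amount scaled by $\min_{j\in J}x^{(m')}_j\to\infty$ while the remaining coordinates stay bounded; this forces the $J$-projections to eventually \emph{stabilize} to a common value $\pi^\ast$. Fixing the $J$-part to $\pi^\ast$ makes the common contribution $\sum_{j\in J}\pi^\ast_j x_j$ cancel in every comparison, so the competition among these monomials is governed by the tropical power series $\tilde f(y):=\inf_{\nu}\set{\nu y+\matr f(\pi^\ast,\nu)}$ in the strictly fewer variables indexed by the complement of $J$. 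Each $n^{(m)}$ is then active for $\tilde f$ at the corresponding restriction of $x^{(m)}$, so by the induction hypothesis only finitely many are possible, contradicting that the $n^{(m)}$ are infinitely many and distinct. This closes the induction and exhibits the required finite $\C F_\epsilon$.
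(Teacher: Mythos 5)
Your overall strategy is sound, but the definition of $\C F_\epsilon$ needs one correction before the argument works. As stated, \emph{active} means attaining the infimum at \emph{some} point of $R=[\epsilon,\infty]^k$; but at points with infinite coordinates $f$ can equal $+\infty$ without being identically $+\infty$, and there \emph{every} monomial attains the infimum trivially. Take $k=1$, $\matr f(1)=0$ and $\matr f(n)=\infty$ otherwise, so $f(x)=x$: at the point $x=\infty\in R$ one has $g_n(\infty)=\infty=f(\infty)$ for every $n$, so your set $A$ is all of $\N$ and your finiteness claim fails for it — concretely, the stabilization step collapses, because every comparison made at such a witness reads $\infty\le\infty$ and carries no information. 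The fix is small: declare $n$ active only if it attains the infimum at a point where $f$ is \emph{finite}. The identity $f=\min_{n\in A}g_n$ on $R$ survives (at points where $f=\infty$ all pieces, in particular those of $A$, equal $\infty$), your pointwise-attainment argument already lives at finite-value points, and the rest goes through — with the additional observation that in the stabilization comparison you need Dickson monotonicity to see that the earlier monomial $n^{(m_0)}$ vanishes on the coordinates where a later witness is exactly $\infty$, so that the right-hand side of the comparison is finite. (A cosmetic point in the base case: an active slope can exceed $n_0$ through a tie at $\epsilon$, but your final bound $f(\epsilon)/\epsilon$ is still valid.)

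With that repair your proof is correct, and it is a genuinely different route from the paper's. The paper defines $\C F_\epsilon$ combinatorially from the coefficients alone: $n\in\C F_\epsilon$ iff $\matr f(n)<\infty$ and $\matr f(m)>\matr f(n)+\epsilon$ for every $m\prec n$ in the product order. Finiteness follows because any infinite subset of $\N^k$ contains an infinite ascending $\prec$-chain, along which the coefficients would have to drop by more than $\epsilon$ at each step, contradicting the Archimedean property; and the coincidence $f=P_\epsilon$ on $[\epsilon,\infty]^k$ follows by well-founded induction on $\prec$ from the single inequality $\matr f(n')+n'x\le \matr f(n)+\epsilon+n'x\le \matr f(n)+nx$, valid since $\epsilon\le\min x\le (n-n')x$. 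That argument is uniform in the dimension, uses no topology, and produces an $\C F_\epsilon$ readable off the matrix $\matr f$ (hence with an algorithmic flavor). Your route — induction on $k$, compactness of $[\epsilon,\infty]^k$, Dickson's lemma, stabilization of the blowing-up coordinates, and reduction to a tps in fewer variables — is longer and needs topological input, but it buys a sharper structural fact: the \emph{minimal} admissible $\C F_\epsilon$, namely the set of pieces that are actually optimal somewhere, is itself finite. It is worth noting that both proofs rest on the same combinatorial core, the well-quasi-order property of $(\N^k,\preceq)$: the paper invokes it as ``infinite subsets contain infinite ascending chains'' (via K\"onig's lemma), you invoke it as Dickson's lemma.
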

As we'll see, the potential of collapsing infinitary objects (i.e.~tps) into combinatorial ones (i.e.~tropical polynomials), is one of the most intriguing features of tropical semantics. 

For the interested reader, we provide the proof of Theorem \ref{theorem:fepsilon} below. Let us first set the following:
\begin{definition}
 Let $\preceq$ be the product order on $\N^k$ (i.e.\ for all $ m  , n  \in \N^{K}$, $ m  \preceq  n  $ iff $m_{i}\leq n_{i}$ for all $1\leq i\leq K$).
 Of course $ m  \prec  n  $ holds exactly when $ m  \preceq  n  $ and $m_{i}<n_{i}$ for at least one $1\leq i\leq K$.
 Finally, we set $ m  \prec_{1} n  $  iff
$ m  \prec  n  $ and $\sum_{i=1}^{K}n_{i}-m_{i}=1$ (i.e.\ they differ on exactly one coordinate).
\end{definition}

\begin{remark}\label{rmk:AC}
If $U\subseteq \N^{K}$ is infinite, then $U$ contains an infinite ascending chain $ m  _{0}\prec  m  _{1} \prec  m  _{2} \prec \dots$.
This is a consequence of K\"onig Lemma (KL): consider the directed acyclic graph $(U,\prec_{1})$, indeed a $K$-branching tree; if there is no infinite ascending chain $  m  _{0}\prec  m  _{1} \prec  m  _{2} \prec \dots$, then in particular there is no infinite ascending chain $  m  _{0}\prec_{1}  m  _{1} \prec_{1}  m  _{2} \prec_{1} \dots$ so the tree $U$ has no infinite ascending chain; then by KL it is finite, contradicting the assumption. 
\end{remark}

\begin{proof}[Proof of Theorem \ref{theorem:fepsilon}]
We will actually show the existence of $\C F_\epsilon \subseteq_{\mathrm{fin}} \N^k$ such that:
\begin{enumerate}
 \item if $\mathcal{F}_\epsilon= \emptyset$ then $f( x ) = +\infty$ for all $ x \in \Lawv^k$;
 \item if $f( x _0) = +\infty$ for some $ x _0\in [\epsilon,\infty)^{K}$ then $\mathcal{F}_\epsilon= \emptyset$;
 \item the restriction of $f$ on $[\epsilon,\infty]^k$ coincides  with  $P_\epsilon(x):=\min\limits_{n\in \C F_\epsilon}\set{n x+\matr f(n)}$.
\end{enumerate}
Let $\mathcal F_\epsilon$ be the complementary in $\N$ of the set:
\[
 \set{ n  \in\N^{K} \mid \textit{either } \hat f ( n  )=+\infty \textit{ or there is }  m  \prec  n  \textit{ s.t.\ } \hat f( m  )\leq\hat f( n  )+\epsilon}.
\]
In other words, $ n  \in\mathcal F_\epsilon$ iff $\hat f( n  )<+\infty$ and for all $ m  \prec  n  $, one has $\hat f( m  )>\hat f( n  )+\epsilon$.

Suppose that $\mathcal F_\epsilon$ is infinite; then, using Remark~\ref{rmk:AC}, it contains an infinite ascending chain $\set{ m  _0\prec  m  _1\prec\cdots}$.  
By definition of $\mathcal F_\epsilon$ we have then 
$+\infty>\hat f( m  _0)>\hat f( m  _1)+\epsilon>\hat f( m  _2)+2\epsilon>\cdots$, 
so that $+\infty>\hat f( m  _0)>\hat f( m  _{i})+i\epsilon\geq i\epsilon$ for all $i\in\N$.
This contradicts the Archimedean property of $\R$.

1).
We show that if $\mathcal F_\epsilon=\emptyset$, then $\hat f( n  )=+\infty$ for all $ n  \in\N^{K}$.
This immediately entails the desired result.
We go by induction on the well-founded order $\prec$ over $ n  \in\N^{K}$:

- if $ n  =0^{K}\notin\mathcal F_\epsilon$, then $\hat f( n  )=+\infty$, because there is no $ m  \prec n  $.

- if $ n  \notin\mathcal F_\epsilon$, with $ n  \neq 0^{K}$ then either $\hat f( n  )=+\infty$ and we are done, or there is $ m  \prec  n  $ s.t.\ $\hat f( m  )\leq \hat f( n  )+\epsilon$.
By induction $\hat f( m  )=+\infty$ and, since $\epsilon<+\infty$, this entails $\hat f( n  )=+\infty$.

2).
If $f( x _0)=+\infty$ with $ x _0\in [\epsilon,\infty)^{K}$, then necessarily $\hat f( n  )=+\infty$ for all $ n  \in\N^{K}$.
Therefore, no $ n  \in\N^{K}$ belongs to $\mathcal F_\epsilon$.

3).
We have to show that $f( x )=P_\epsilon( x )$ for all $ x \in [\epsilon,+\infty]^{K}$.
By 1), it suffices to show that we can compute $f( x )$ by taking the $\inf$, that is therefore a $\min$, only in $\mathcal F_\epsilon$ (instead of all $\N^{K}$).
If $\mathcal F_\epsilon=\emptyset$ then by 2) we are done (remember that $\min\emptyset := +\infty$).
If $\mathcal F_\epsilon\neq\emptyset$, we show that for all $ n  \in\N^{K}$, if $ n   \notin\mathcal F_\epsilon$, then there is $ m  \in\mathcal F_\epsilon$ s.t.\ $\hat f( m  )+ m   x  \leq \hat f( n  )+ n   x $.
We do it again by induction on $\prec_{1}$:

- if $ n  =0^{K}$, then from $\mathbf  n\notin \mathcal F_{\epsilon}$, by definition of $\mathcal F_\epsilon$, we have $\hat f( n  )=+\infty$ (because there is no $ n  '\prec n  $).
So any element of $\mathcal F_\epsilon\neq\emptyset$ works.

- if $ n  \neq 0^{K}$, then we have two cases:
either $\hat f( n  )=+\infty$, in which case we are done as before by taking any element of $\mathcal F_\epsilon\neq\emptyset$.
Or $\hat f( n  )<+\infty$, in which case (again by definition of $\mathcal F_\epsilon$) there is $ n  '\prec n  $ such that $ \hat f( n  ')\leq \hat f( n  )+\epsilon$  ($\star$).
Therefore we have (remark that the following inequalities hold also for the case $x=+\infty$):
\[\begin{array}{rclr}
 \hat f( n  ')+ n  ' x  & \leq & \hat f( n  ) + \epsilon +  n' x  & \textit{by }(\star) \\
 & \leq & \hat f( n  ) + ( n  - n  ') x  +  n  ' x  & \textit{because $\epsilon\leq\min x $ and $\min  x \leq( n  -  n') x $} \\
 & = & \hat f( n  )+  n   x . &
\end{array}\]
Now, if $ n  '\in\mathcal F_\epsilon$ we are done.
Otherwise $ n  '\notin\mathcal F_\epsilon$ and we can apply the induction hypothesis on it, obtaining an $ m  \in\mathcal F_\epsilon$ s.t.\ $\hat f( m  )+ m   x  \leq \hat f( n  ')+ n  ' x $.
Therefore this $ m  $ works.
\end{proof}


\section{Tropical Semantics and First Order Effectful Programs}\label{section22}

Before discussing how full-scale higher-order programming languages can be interpreted in terms of tropical power series, we highlight how such functions may naturally arise in the study of effectful programming languages.
We will see that, when considering probabilistic and non-deterministic programs, tropical tools can be used to describe the behavior of programs in \emph{the best/worst case}, and may lead to collapse the description of infinitely many possible behaviors into a combinatorial account of the optimal ones.

\subparagraph*{Maximum Likelihood Estimators for Probabilistic Languages}

%

Let us start with a very basic probabilistic language:
the terms are $M::= \true \mid \false \mid M\oplus_p M$, for $p\in[0,1]$, and the operational semantics is $M\oplus_p N\to pM$ and $M\oplus_p N \to (1-p)N$, so that $M\oplus_p N$ plays the role of a probabilistic coin toss of bias $p$.
Consider the program
$
 M:=(\true \oplus_p\false)\oplus_p((\true\oplus_p\false)\oplus_p(\false\oplus_p\true)).
 $
 Calling $q=1-p$, to each occurrence of $\true$ or $\false$ in $M$, univocally determined by an address
$\omega\in \{l,r\}^{*}$, is associated a monomial $P_{\omega}(p,q)$ which determines the probability of the event ``$M\twoheadrightarrow_{\omega} \true/\false$'', that is, that $M$ reduces to $\true/\false$ according to the choices in $\omega$.
Thinking of $p,q$ as parameters, $P_{\omega}(p,q)$ can thus be read as the \emph{likelihood function} of the event ``$M\twoheadrightarrow_{\omega} \true/\false$''.
 For instance, we have
$P_{rll}(p,q):=qp^2$,
$P_{rrr}(p,q):=q^3$, and 
$P_{rrl}(p,q)=P_{rlr}(p,q):=q^2p$.
The polynomial function $Q_{\true}(p,q):=P_{ll}(p,q)+P_{rll}(p,q)+P_{rrr}(p,q)=p^2+p^2q+q^3$ gives instead the probability of the event ``$M\twoheadrightarrow \true$'', and analogously for $Q_{\false}(p,q):=P_{lr}(p,q)+P_{rrl}(p,q)+P_{rlr}(p,q)=pq+2pq^2$.

This way, the probabilistic evaluation of $M$ is presented as a \emph{hidden Markov model} \cite{Baum1966}, a fundamental statistical model, and notably one to which tropical methods are generally applied \cite{Pachter2004}.
Typical questions in this case would be, for a fixed $\omega_0$:
%
\begin{enumerate}
 \item What is the \emph{maximum likelihood estimator} for the event ``$M\twoheadrightarrow_{\omega_0} \false$''?
 I.e., which is the choice of $p,q$ that maximizes the probability $P_{\omega_0}$?
 \item 
Knowing that $M$ produced $\true$ (similarly for $\false$) , which is the \emph{maximum likelihood estimator} for the event ``$M\twoheadrightarrow_{\omega_0}\false$'', knowing that ``$M\twoheadrightarrow \false$''?
I.e., which is the choice of $p,q$ that makes $\omega_0$ the most likely path among those leading to $\false$ (i.e.\ that maximizes the conditional probability $\BB P(``M\twoheadrightarrow_{\omega_0} \false'' \mid ``M\twoheadrightarrow \false'')$)?
\end{enumerate}

Answering 1) and 2) amounts then at solving a maximization problem related to $P_{\omega_{0}}(p,q)$ or $Q_{\true}(p,q)$. In fact, these problems are more easily solved by passing to the associated \emph{tropical} polynomials.
For 1), the maximum values $x,y$ of $P_{rll}(p,q)$ can be computed by finding the \emph{minimum} values of $\mathsf tP_{rll}(-\log p, -\log q)= -2\log p- \log q$. Notice that the latter is precisely the \emph{negative log-probability} of the event ``$M\twoheadrightarrow_{rll} \false$''. For 2), the maximum values of $Q_{\true}(p,q): [0,1]^{2}\to [0,1]$ can be computed as $e^{-\alpha},e^{-\beta}$, where $\alpha,\beta\in[0,\infty]$ are the \emph{minimum} values of the tropical polynomial 
$\mathsf t Q_{\true}(\alpha,\beta) = \min \{ 2\alpha, 2\alpha+\beta, 3\beta\}$.

As we'll see in Section \ref{section3}, this analysis extends to PCF-style programs. For example, the program $M=\mathbf Y(\lambda x.\true \oplus_{p} x)$ yields the power series $Q_{\true}(p,q)=\sum_{n=0}^{\infty}pq^{n}=\frac{p}{1-q}$ that sums all \emph{infinitely many} ways in which $M$ may reduce to $\true$. Notice that the tropicalised series $\mathsf tQ_{\true}(-\log p,-\log q)=\inf_{n\in \mathbb N}\{-\log p -n\log q\}=-\log p$ collapses onto a single monomial describing the \emph{unique} most likely reduction path of $M$, namely the one that passes through a coin toss only once. 
%

\subparagraph*{Best Case Analysis for Non-Deterministic Languages}

This example is inspired from \cite{Manzo2013}. We consider now a basic non-deterministic language with terms $M::= \true \mid  \mathtt{Gen} \mid  M + M$, with an operation semantics comprising a non-deterministic reduction rule 
$M_{1}+M_{2} \stackrel{\alpha}{\to} M_{i}$ and a generation rule
$\mathtt{Gen}\stackrel{\beta}{\to} \true +\mathtt{Gen}$, 
where in each case the value $\alpha,\beta\in \Lawv$ indicates a \emph{cost} associated with the reduction (e.g.~the estimated clock value for the simulation of each reduction on a given machine model). 
Then, any reduction $\omega: M \twoheadrightarrow N$ of a term to (one of its) normal form is  associated with a tropical monomial $P_{\omega}( \alpha,\beta)$ consisting of the sum of the costs of all reductions in $\omega$. For a given normal form $N$, the reductions $\omega_{i}: M \twoheadrightarrow N$ give rise to a tps $\inf_{i\in I}P_{\omega_{i}}( \alpha,\beta)$. 
For example, consider the non-deterministic term
$
M :=\mathtt{Gen}+  ((\true + \true) + \mathtt{Gen})
$. 
The (infinitely many) reduction paths leading to $\true$ can be grouped as follows:
\begin{itemize}
\item left, then reduce $\mathtt{Gen}$ $n+1$-times, then left;
\item right, then left and then either left or right;
\item right twice, then reduce $\mathtt{Gen}$ $n+1$-times and then left.
\end{itemize}
This leads to the tps 
$\varphi_{M\twoheadrightarrow \true}(\alpha)=\inf_{n\in \mathbb N}\big\{2\alpha+(n+1)\beta, 3\alpha, 3\alpha+(n+1)\beta\big \}= \min\{ 2\alpha+\beta, 3\alpha\}$, which describes all possible behaviors of $M$. Notice that, since $\alpha$ and $\beta$ are always positive, the power series $\varphi_{M\twoheadrightarrow\true}(\alpha)$ is indeed equivalent to the tropical polynomial $\min\{ 2\alpha+\beta, 3\alpha\}$. In other words, of the infinitely many behaviors of $M$, only finitely many have chances to be \emph{optimal}: either left + $\mathtt{Gen}$ + left, or right + left + (left,right). Also in this case, reducing to best-case analysis leads to collapse the infinitary description of \emph{all} behaviors to a purely combinatorial description of the finitely many optimal ones. 

Once reduced $\varphi_{M\twoheadrightarrow\true}$ to a polynomial, the best behavior among these will depend on the values of $\alpha$ and $\beta$, and by studying the tropical polynomial $\varphi_{M\twoheadrightarrow\true}$ one can thus answer questions analogous to 2) above, that is, what are the best choices of costs $\alpha,\beta$ making a \emph{chosen} reduction of $M$ to $\true$ the cheapest one?

\section{Tropical Semantics of Higher-Order Programs}\label{section3}
In this section we first recall a general and well-known construction that yields, for any \emph{continuous} semiring $Q$, a model $\QREL_{!}$ of effectful extensions of the simply typed $\lambda$-calculus and $\mathrm{PCF}$, and we show how, when $Q=\Lawv$, it captures optimal program behavior; moreover, we discuss how this model adapts to \emph{graded} and \emph{differential} variants of $\STLC$. 

\subparagraph*{Linear/Non-Linear Algebra on $Q$-Modules}
For any \emph{continuous} semiring $Q$ (i.e.~a cpo equipped with an order-compatible semiring structure), one can define a category $\QREL$  (\cite{Manzo2013} calls it $Q^\Pi$) of ``$Q$-valued matrices'' as follows: $\QREL$ has sets as objects and set-indexed matrices with coefficients in $Q$ as morphisms, i.e.~$\QREL(X,Y):=Q^{X\times Y}$.
The composition $st\in Q^{X\times Z}$ of $t\in Q^{X\times Y}$ and $s\in Q^{Y\times Z}$ is given by $(st)_{a,c}:=\sum_{b\in Y} s_{b,c}t_{a,b}$ (observe that this series always converges because $Q$ is continuous).
For any set $X$, $Q^X$ is a $Q$-semimodule and we can identify $\HOM{\QREL}{X}{Y}$ with the set of linear maps from $Q^X$ to $Q^Y$, which have shape $f(x)_b:=\sum_{a\in X} \matr f_{a,b}x_a$, for some matrix $\matr f\in Q^{X\times Y}$.
%
Notice that usual linear algebra conventions correspond to work in $\QREL^{\op}$, 
e.g.\ the usual matrix-vector product defines a map $Q^Y\to Q^X$.
Following \cite{Manzo2013, Hofmann2014, Ehrhard2005}, we are instead working with transpose matrices.

$\QREL$ admits a comonad $!$ which acts on objects by taking the finite multisets.
Remember that the coKleisli category $\C C_!$ of a category $\C C$ w.r.t.\ a comonad $!$ is the category whose objects are the same of $\C C$, and $\HOM{\C C_!}{X}{Y}:=\HOM{\C C}{!X}{Y}$, with composition $\circ_!$ defined via the co-multiplication of $!$.
Now, although a matrix $t\in\HOM{\QREL_!}{X}{Y}$ yields a linear map $\Lawv^{!X}\to\Lawv^Y$, by exploiting the coKleisli structure we can also ``express it in the base $X$'', which leads to the \emph{non-linear} map $t^{!}:Q^{X}\to Q^{Y}$ defined by the power series
$
t^{!}(x)=t\circ_{!}x : b \mapsto \sum_{\mu\in !X}t_{\mu,b}\cdot x^{\mu}
$, 
where $x^{\mu}= \prod_{a\in x}x_{a}^{\mu(a)}$. 


When we instantiate $Q=\Lawv$, we obtain the category $\LREL$ of $\Lawv$-valued matrices. As one might expect, this category is tightly related to Lawvere's theory of (generalized) metric spaces. For the moment, let us just observe that a (possibly $\infty$) metric on a set $X$ is nothing but a ``$\Lawv$-valued square matrix'' $d:X\times X\to \Lawv$ satisfying axioms like e.g.~the triangular law.
We will come back to this viewpoint in \autoref{sec:GMS}.

Composition in $\LREL$ 
reads as \ $(st)_{a,c}:=\inf_{b\in Y}\set{s_{b,c}+t_{a,b}}$, and the non-linear maps $t^{!}: \Lawv^{X}\to \Lawv^{Y}$ have shape 
$t^!(x)_b=\inf_{\mu\in !X} \set{\mu x+ t_{\mu,b}}$, where $\mu x:=\sum_{a\in X} \mu(a)x_a$. These correspond to the generalisation of tps with \emph{possibly infinitely} many variables (in fact, as many as the elements of $X$).
By identifying $!\set{*}\simeq \N$ and $\Lawv^{\set{*}}\simeq\Lawv$, the tps generated by the morphisms in $\HOM{\LREL_!}{\set{*}}{\set{*}}$ are exactly the 
usual tps's of one variable. %
For example, the $\varphi$ of \autoref{fig:plot1} is indeed of shape $\varphi=t^!$, for $t\in\Lawv^{!\set{*}\times\set{*}}$, $t_{\mu,*}:=2^{-\# \mu}$.

\begin{remark}
The operation $f\mapsto f^{!}$ turning a matrix into a function is reminiscent of the well-known operation of taking the 
 \emph{convex conjugate} $f^*$ of a function $f$ defined over a vector space (itself a generalization of the Legendre transformation).
 Indeed, let $X$, $Y$ be sets and let $\langle \_,\_\rangle:X\times Y \to \mathbb{R}$.
 For $f:X\to \mathbb R$, let $f^*:Y\to \mathbb R$ be defined by $f^*(y):= \sup_{x\in X}\{\langle x,y\rangle - f(x)\}$.
 Then for $X=!A$, $Y=\Lawv^A$, where $A$ is a set, and $\langle \mu, y \rangle:= \mu y$, we have $f^!(y)=(-f)^*(-y)$ for all $f\in\Lawv^{!A}$.
\end{remark}

\subparagraph*{The $\Lawv$-Weighted Relational Model}

The categories $\QREL$ are well-known to yield a model of both probabilistic and non-deterministic versions of $\mathrm{PCF}$ (see e.g.~\cite{Manzo2013, Pagani2018}), which are called \emph{weighted relational models}.
The interpretation of the simply typed $\lambda$-calculus $\STLC$ in $\LREL$ relies on the fact that all categories $\QREL_{!}$ are cartesian closed \cite{Manzo2013}, with cartesian product and exponential objects acting on objects $X,Y$ as, respectively, $X+Y$ and $!X\times Y$. 
Hence, 
any typable term $\Gamma \vdash M:A$ gives rise to a morphism 
$\model{\Gamma \vdash M:A}\in \LREL_{!}(\model{\Gamma}, \model{A})$, and thus to a generalized tps $\model{\Gamma \vdash M:A}^{!}:\Lawv^{\model{\Gamma}}\to \Lawv^{\model{A}}$. 
\begin{example}\label{ex:zxx}
The evaluation morphism $\mathsf{ev}\in\LREL_{!}((!X\times Y) + X, Y)$ yields the tps  
$\mathsf{ev}^{!}: \Lawv^{!X\times Y}\times \Lawv^{X}\to \Lawv^{Y}$ given by 
$b\in Y \mapsto \mathsf{ev}^{!}(F,x)_{b}= \inf_{\mu,b}\{ F_{\mu,b}+ \mu x\}$. So, for instance, supposing the ground type $o$ of $\STLC$ is interpreted as the singleton set $\{*\}$, and recalling the identification $!\{*\}\simeq \mathbb N$, the interpretation of the term $x:o, z:(o\to o\to o) \vdash zxx:o$, involving two consecutive evaluations, yields the tps $\varphi:\Lawv\times \Lawv^{\mathcal M_{\mathrm{fin}}(\mathbb N\times \mathbb N)}\to \Lawv$ given by $\varphi(x,z)=\inf_{n,n'}\{z_{[(n,n')]}+(n+n')x\}$. 
\end{example}
This interpretation extends to $\mathrm{PCF}$ by interpreting the fixpoint combinator $\mathbf Y$ via the matrices $\mathrm{fix}^{X}= \inf_{n}\big\{\mathrm{fix}^{X}_{n}\big\} \in \Lawv^{  !(!X\times X) \times X  }$, where 
 $\mathrm{fix}^{X}_{0}=0$ and $\mathrm{fix}^{X}_{n+1}= \RM{ev}\circ_{!}\langle \mathrm{fix}_{n}^{X}, \RM{id} \rangle $.

One can easily check, by induction on a typing derivation, that for any program of $\STLC$ or $\mathrm{PCF}$, the associated matrix is \emph{discrete}, that is, its values are included in $\{0,\infty\}$. Indeed, as suggested in Section \ref{section2}, the actual interest of tropical semantics lies in the interpretation of effectful programs. 
%
As the homsets $\LREL_{!}(X,Y)$ are $\Lawv$-modules, it is possible to interpret in it  extensions of $\STLC$ and $\mathrm{PCF}$ comprising $\Lawv$-module operations $\alpha \cdot M$ and $M+N$ \cite{Manzo2013}, by 
letting $\model{\Gamma \vdash \alpha\cdot M:A}=\model{\Gamma \vdash M:A}+\alpha$ and 
$\model{\Gamma \vdash M+N:A}=\min\{\model{\Gamma \vdash M:A},\model{\Gamma \vdash N:A}\}$. 
More precisely, \cite{Manzo2013} considers a language $\mathrm{PCF}^{Q}$ corresponding to $\mathrm{PCF}$ extended with $Q$-module operations, with an operational semantics given by rules $M\stackrel{1}{\to}M'$ for each rule $M\to M'$ of $\mathrm{PCF}$ (here $1$ is the monoidal unit of $Q$) as well as 
$M_{1}+M_{2} \stackrel{1}{\to}M_{i}$ and $\alpha\cdot M \stackrel{\alpha}{\to} M $.
Hence, any reduction $\omega=\rho_{1}\dots \rho_{k}: M \twoheadrightarrow N$ is naturally associated with a weight $\mathsf w(\omega)=\sum_{i=1}^{k}\mathsf w(\rho_{i})\in Q$.
In particular, from [Theorem V.6]\cite{Manzo2013} we deduce the following adequation result:
\begin{proposition}
$\model{\vdash_{\mathrm{PCF}^{\Lawv}} M:\mathrm{Nat}}_{n}=  \inf\big\{ \mathsf w(\omega) \ \big \vert \ \omega : M \to \underline n\big \}$ for all $n\in \mathbb N$.
\end{proposition}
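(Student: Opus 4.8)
The plan is to obtain the statement as the instance $Q=\Lawv$ of the general computational adequacy theorem for weighted relational models of $\mathrm{PCF}^{Q}$, namely [Theorem V.6]\cite{Manzo2013}, which holds uniformly for every continuous semiring $Q$ and asserts that the interpretation of a closed term of type $\mathrm{Nat}$ coincides, at each coordinate $n$, with the semiring-sum over all reduction sequences $\omega:M\twoheadrightarrow\underline n$ of the semiring-product of the weights of the individual steps of $\omega$. All the work then consists in translating the abstract semiring operations into the concrete ones of the Lawvere quantale.

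First I would fix the dictionary for $Q=\Lawv=([0,\infty],\geq,+)$: the semiring multiplication is ordinary addition $+$ on $[0,\infty]$, the semiring sum is the lattice join of $([0,\infty],\geq)$, i.e.\ $\inf$, the semiring unit is $1_{\Lawv}=0$ and the semiring zero is $0_{\Lawv}=+\infty$. Under this dictionary the product of the step-weights along a path $\omega=\rho_{1}\cdots\rho_{k}$ becomes exactly the sum $\sum_{i=1}^{k}\mathsf w(\rho_{i})$, which is the quantity $\mathsf w(\omega)$ defined in the text, and the semiring-sum over all paths becomes $\inf_{\omega}$. Hence the general formula reads $\model{M}_{n}=\inf\{\mathsf w(\omega)\mid\omega:M\twoheadrightarrow\underline n\}$, which is the claim. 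Note that when no reduction reaches $\underline n$ the empty $\inf$ yields $+\infty=0_{\Lawv}$, consistently with $\model{M}_{n}=\infty$.

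Should a self-contained argument be preferred, I would reprove the two halves directly. For soundness (the inequality $\model{M}_{n}\leq\mathsf w(\omega)$ for every $\omega$) the key lemma is a weighted subject-reduction: if $M\stackrel{\alpha}{\to}M'$ then $\model{M}_{n}\leq\alpha+\model{M'}_{n}$ for all $n$; iterating along $\omega$ and using $\model{\underline n}_{n}=0$ gives $\model{M}_{n}\leq\mathsf w(\omega)$, whence $\model{M}_{n}\leq\inf_{\omega}\mathsf w(\omega)$. For the converse (adequacy) I would argue that $\model{M}_{n}$ is attained, up to arbitrarily small error, by genuine reductions: exploiting continuity of the model and the approximants $\mathrm{fix}^{X}_{k}$ of $\mathbf Y$, each finitary contribution to the $\inf$ defining $\model{M}_{n}$ can be matched by a reduction path of the corresponding weight, via a reducibility/logical-relation argument indexed by the quantale.

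The main obstacle is precisely this adequacy direction: the soundness half and the semiring bookkeeping are routine, but showing that the denotational $\inf$ is realised by actual reduction sequences requires the full logical-relations machinery together with a careful treatment of the fixpoint approximants. This is exactly the content imported wholesale from \cite{Manzo2013}, so in practice I would simply invoke that theorem and carry out the instantiation of the second paragraph.
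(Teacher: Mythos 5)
Your proposal matches the paper's own argument: the paper deduces the proposition directly from [Theorem V.6] of \cite{Manzo2013}, exactly as you do, with the same instantiation $Q=\Lawv$ (semiring multiplication $=$ ordinary $+$, semiring sum $=$ $\inf$, unit $0$, zero $\infty$). Your explicit dictionary and the remark about the empty $\inf$ giving $\infty$ are a correct and slightly more detailed spelling-out of what the paper leaves implicit; the self-contained soundness/adequacy sketch is not needed, since the paper likewise imports that content wholesale from the cited theorem.
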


The previous result allows to relate the tropical semantics of a program with its best-case operational behavior.  
Observe that the two examples shown in Section \ref{section2} can easily be rephrased in the language $\mathrm{PCF}^{\Lawv}$. 
For instance, for the probabilistic example, 
one can use the translation $(M\oplus_{p}N)^{\circ}= \min\{M^{\circ}+p, N^{\circ}+(1-p)\}$, so that the reductions $:M\oplus_{p}N\stackrel{p}{\to} M$ and $M\oplus_{p}N\stackrel{1-p}{\to} N$ translate into a sequence of two reductions $(M\oplus_{p}N)^{\circ} \stackrel{0}{\to} M^{\circ}\stackrel{p}{\to} M$ and $(M\oplus_{p}N)^{\circ} \stackrel{0}{\to} N^{\circ}\stackrel{1-p}{\to} N^{\circ}$. 
Let $\mathrm{PPCF}$ (for \emph{probabilistic $\mathrm{PCF}$} \cite{Pagani2018}) be standard $\mathrm{PCF}$ extended with the constructor $M\oplus_{p}N$ ($p\in[0,1]$) and its associated reduction rules. From the above discussion we deduce the following:

\begin{corollary}
Let $\vdash_{\mathrm{PPCF}} M:\mathrm{Nat}$ and $n\in\N$.
Considering its interpretation as a function of $p,1-p$, we have that $\model{\vdash_{\mathrm{PPCF}} M^{\circ}:\mathrm{Nat}}_{n}(-\log(p),-\log(1-p))$ 
is the minimum negative log-probability of any reduction from $M$ to $\underline n$, i.e.\ the negative log-probability of (any of) the (equiprobable) most likely reduction path from $M$ to $n$.
\end{corollary}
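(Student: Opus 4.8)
The plan is to derive the statement from the adequacy Proposition for $\mathrm{PCF}^{\Lawv}$ via the translation $(-)^{\circ}$, by identifying the $\Lawv$-weight of a reduction with a negative log-probability. Since $M^{\circ}$ is itself a $\mathrm{PCF}^{\Lawv}$ program, the Proposition applies verbatim: $\model{\vdash M^{\circ}:\mathrm{Nat}}_{n}$ equals the infimum over all reductions $\omega:M^{\circ}\twoheadrightarrow\underline n$ of the weight $\mathsf w(\omega)$. All that remains is to rewrite this infimum probabilistically.

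First I would set up a weight-preserving correspondence between reductions. Every deterministic step $N\to N'$ of ordinary $\mathrm{PCF}$ translates to a step $N^{\circ}\to(N')^{\circ}$ of weight $0$ (the unit of $\Lawv$), and, by definition of $(-)^{\circ}$, each probabilistic step of $\mathrm{PPCF}$ factors into a weight-$0$ administrative selection $(M\oplus_{p}N)^{\circ}\to p\cdot M^{\circ}$ (resp.\ $\to(1-p)\cdot N^{\circ}$) followed by a scalar step $p\cdot M^{\circ}\to M^{\circ}$ of weight $p$ (resp.\ $(1-p)\cdot N^{\circ}\to N^{\circ}$ of weight $1-p$). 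Contracting the weight-$0$ steps turns this into a correspondence between $\mathrm{PCF}^{\Lawv}$-reductions $M^{\circ}\twoheadrightarrow\underline n$ and $\mathrm{PPCF}$-reductions $M\twoheadrightarrow\underline n$ under which the two infima of weights coincide: a $\mathrm{PPCF}$-reduction $\omega$ making $a$ left and $b$ right choices at $\oplus_{p}$-redexes lifts to total weight $\mathsf w(\omega^{\circ})=a\,p+b\,(1-p)$, every other step contributing $0$.

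It remains to identify weights with negative log-probabilities. Reading $\model{\vdash M^{\circ}:\mathrm{Nat}}_{n}$ as a function of the two weight slots $p,1-p$ --- equivalently, as the tropicalisation (Remark~\ref{rmk:val_trop}) of the probability generating function $\sum_{\omega}\BB P(\omega)$ of paths to $\underline n$ --- the monomial $a\,p+b\,(1-p)$ becomes, under $p\mapsto-\log p$ and $(1-p)\mapsto-\log(1-p)$, the quantity $-a\log p-b\log(1-p)=-\log\!\big(p^{a}(1-p)^{b}\big)=-\log\BB P(\omega)$, since a path with $a$ left and $b$ right choices has probability $p^{a}(1-p)^{b}$. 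Hence
\[
\model{\vdash M^{\circ}:\mathrm{Nat}}_{n}(-\log p,-\log(1-p))
=\inf_{\omega:M\twoheadrightarrow\underline n}\big(-\log\BB P(\omega)\big)
=-\log\sup_{\omega}\BB P(\omega),
\]
which is precisely the minimal negative log-probability among reductions of $M$ to $\underline n$, i.e.\ the (common) negative log-probability of the most likely reduction path(s).

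The step I expect to demand the most care is the reduction correspondence: one must verify that contracting the weight-$0$ administrative steps genuinely matches $\mathrm{PPCF}$- and $\mathrm{PCF}^{\Lawv}$-reductions reaching $\underline n$, so that no path is created or lost and the two infima agree. The phrase \emph{``as a function of $p,1-p$''} should likewise be made precise by replacing the concrete scalar at each $\oplus_{p}$-redex with a variable; this is legitimate because $\model{-}$ is computed compositionally, so the resulting expression is exactly the tropicalisation of the ordinary probability generating function of $M$.
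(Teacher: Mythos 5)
Your proposal is correct and takes essentially the same route as the paper, which derives the corollary directly from the adequacy Proposition for $\mathrm{PCF}^{\Lawv}$ applied to $M^{\circ}$ together with the translation $(M\oplus_{p}N)^{\circ}=\min\{M^{\circ}+p,\ N^{\circ}+(1-p)\}$: a path with $a$ left and $b$ right choices carries the formal weight $a\,p+b\,(1-p)$, which under the substitution $p\mapsto -\log p$, $(1-p)\mapsto -\log(1-p)$ becomes $-\log\big(p^{a}(1-p)^{b}\big)$, the negative log-probability of that path. The reduction correspondence and the tropicalisation reading that you spell out are exactly the details the paper leaves implicit behind ``from the above discussion we deduce the following,'' and they are sound.
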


Remark that this implies that all solution $p\in[0,1]$ to the equation $-\log \mathsf  w(\omega)= \model{\vdash_{\mathrm{PPCF}^{\Lawv}} M^{\circ}:\mathrm{Nat}}_{n}(-\log(p),-\log(1-p))$ are the values of the probabilistic parameter which make the reduction $\omega$ the most likely.

\begin{remark}
The function $\model{\vdash_{\mathrm{PPCF}} M^{\circ}:\mathrm{Nat}}_{n}(\alpha,\beta)$ is a tps, and Theorem \ref{theorem:fepsilon} ensures that this function coincides \emph{locally} with a tropical polynomial. This means that, for any choice of $p,1-p$, the most likely reduction path of $M$ can be searched for within a \emph{finite} space.
\end{remark}

Finally, \cite{Manzo2013} obtained a similar result for a non-deterministic version of PCF, by translating each term into $\mathrm{PCF}^{\Lawv}$ via $(\lambda x.M)^{\circ}=\lambda x.M^{\circ}+1$ and $(\mathbf YM)^{\circ}= \mathbf Y(M^{\circ}+1)$
(\cite{Manzo2013} considers the discrete tropical semiring $\mathbb N\cup\{\infty\}$, but the result obviously transports to $\Lawv$), and in that case \cite[Corollary VI.10]{Manzo2013} gives that $\model{\vdash M^{\circ}:\mathrm{Nat}}_{n}$ computes the minimum number of $\beta$- and $\mathsf{fix}$- redexes reduced in a reduction sequence from $M $ to $\underline n$. 



\subparagraph*{$\mathbb N$-Graded Types}\label{sec:BSTLC}

We now show how to interpret in $\LREL_{!}$ a graded version of $\STLC$, that we call $\BSTLC$, indeed a simplified version of the well-studied language $\mathrm{Fuzz}$ \cite{Reed2010}.
This language is based on a graded exponential $!_{n}A$, corresponding to the possibility of using an element of type $A$ \emph{at most} $n$ times. In particular, if a function $ \lambda x.M$ of type $!_nA\multimap B$, then, for any $N$ of type $A$, $x$ is duplicated \emph{at most} $n$ times in any reduction of $(\lambda x.M)N $ to the normal form.

Graded simple types are defined by $A::= o \ \mid  \ !_{n}A \multimap A$; the contexts of the typing judgements are sets of declarations of the form $x :_{n}A$, with $n\in \mathbb N$;
given two contexts $\Gamma,\Delta$, we define $\Gamma+\Delta$ recursively as follows: if $\Gamma$ and $\Delta$ have no variable in common, then $\Gamma+\Delta=\Gamma\cup \Delta$; otherwise, we let $(\Gamma, x:_{m} A)+( \Delta, x:_{n} A) =  (\Gamma+\Delta), x:_{m+n}A$. 
Moreover, for any context $\Gamma$ and $m\in \mathbb N$, we let $m\Gamma$ be made all $x:_{mn}A$ for $(x:_{n}A) \in \Gamma$.  
The  typing rules of $\BSTLC$ are illustrated in Fig.~\ref{fig:rules},

\begin{figure}
\fbox{
	\begin{minipage}{0.9\textwidth}
	\begin{center}
	$\prooftree
	\justifies
	x:_{1}A\vdash x: A
	\endprooftree$
	
	\bigskip
	
	\begin{minipage}{0.42\textwidth}
	\begin{center}
	$\prooftree
		\Gamma \vdash M:A
		\justifies
		\Gamma, x:_{0}B \vdash M:A
		\endprooftree 
		$
		
	\bigskip
	
	$\prooftree
		\Gamma, x:_{n} A\vdash M: B
		\justifies
		\Gamma\vdash \lambda x.M: !_{n}A\multimap B
		\endprooftree$
		
		\medskip
				
	\end{center}
	\end{minipage} \ \ \ \ 
	\begin{minipage}{0.42\textwidth}
	\begin{center}
		
		$\prooftree
		\Gamma, x:_{n}B, y:_{m} B\vdash M:A
		\justifies
		\Gamma, x:_{n+m}B\vdash M\{x/y\}:A
		\endprooftree $
		
		\bigskip
		
		$\prooftree
		\Gamma \vdash M: !_nA\multimap B
		\quad
		\Delta\vdash N: A
		\justifies
		\Gamma +n\Delta\vdash MN: B
		\endprooftree
		 $
		 
		 \medskip
		 
	\end{center}
	\end{minipage}
	\end{center}
	\end{minipage}
	}
	\caption{Typing rules for $\BSTLC$.}
	\label{fig:rules}
	\end{figure}

Now, one can see that the comonad $!$ of $\LREL$ can be ``decomposed'' into a family of ``graded exponentials functors'' $!_n:\LREL\to\LREL$ ($n\in\BB N$), where $!_{n}X$ is the set of multisets on $X$ of cardinality \emph{at most} $n$. 
The sequence $(!_n)_{n\in\N}$ gives rise to a so-called \emph{$\N$-graded linear exponential comonad} on (the SMC) $\LREL$ \cite{Katsumata2018}. 
As such, $(\LREL,(!_n)_{n\in\N})$ yields then a model of $\BSTLC$. Remark that arrow types are interpreted via $\model{!_{n}A\multimap B}:= !_{n}\model A \times \model B$. Notice that, whenever $\model *$ is finite, the set $\model{A}$ is finite for any type $A$ of $\BSTLC$.

\subparagraph*{The Differential $\lambda$-Calculus}\label{sec:STDLC}

We recall the interpretation in $\LREL_{!}$ of the simply typed \emph{differential} $\lambda$-calculus $\STDLC$,  
 an extension of $\STLC$ ensuring exact control of duplications. The syntax of $\STDLC$ (see \cite[Section 3]{Manzo2010}) is made of \emph{terms} $M$ and \emph{sums} $\mathbb T$, mutually generated by: $M::= x\mid \lambda x.M \mid M\mathbb T \mid \Diff{M}{M}$ and $\mathbb T::= 0 \mid M \mid M+\mathbb T$,
quotiented by equations that make $+,0$ form a commutative monoid on the set of sums, 
by linearity of $\lam x.(\_)$, $\Diff{\_}{\_}$ and $(\_)\mathbb T$ (but \emph{not} of $M(\_)$) and by irrelevance of the order of consecutive $\Diff{\_}{\_}$.
We follow the tradition of quotienting also for the idempotency of $+$.
The typing rules are illustrated in Figure~\ref{fig:rules2}, where a context $\Gamma$ is a list of typed variable declarations.
The main feature of this language is that $\Der^n[\lambda x.M,N^n]0$ has a non-zero normal form iff $x$ is duplicated exactly $n$ times during reduction.

\begin{figure}
\fbox{
	\begin{minipage}{0.9\textwidth}
	\begin{center}
	\begin{minipage}{0.42\textwidth}
	\begin{center}
	$\prooftree
	\justifies
	\Gamma\vdash 0:A
	\endprooftree$
	
	\bigskip
	
	$\prooftree
		\Gamma, x: A\vdash M: B
		\justifies
		\Gamma\vdash \lambda x.M: A\to B
		\endprooftree $
		
		\bigskip
		
		$ \prooftree
		\Gamma \vdash M: A\to B
		\quad
		\Gamma \vdash N: A
		\justifies
		\Gamma \vdash \Diff{M}{N}: A\to B
		\endprooftree$
		
		\medskip
		
	\end{center}
	\end{minipage} \ \ \ \ 
	\begin{minipage}{0.42\textwidth}
	\begin{center}
	$\prooftree 
	\justifies
	\Gamma, x:A \vdash x: A
	\endprooftree $
	
	\bigskip
	
	$\prooftree
		\Gamma \vdash M: A\to B
		\quad
		\Gamma\vdash \mathbb T: A
		\justifies
		\Gamma \vdash M\mathbb T: B
		\endprooftree $
		
		\bigskip
		
		$\prooftree
		\Gamma\vdash M_1: A
		\,\cdots\,
		\Gamma \vdash M_n:A
		\justifies
		\Gamma \vdash M_1+\cdots +M_n : A
		\using (n\geq 2)
		\endprooftree
		$
		
		\medskip
		
	\end{center}
	\end{minipage}
	\end{center}
	\end{minipage}
	}
	\caption{Typing rules of $\STDLC$.}
	\label{fig:rules2}
	\end{figure}

The categorical models of $\STDLC$ are called \emph{cartesian closed
 differential $\lambda$-categories} (CC$\partial\lambda$C)\cite{Manzo2010,Blute2009, Blute2019}. These are CCCs enriched over commutative monoids (i.e.\ morphisms are summable and there is a $0$ morphism), with the cartesian closed structure compatible with the additive structure, 
and equipped with a certain \emph{differential operator} $D$, turning a morphism $f:A\to B$ into a morphism $Df: A\times A\to B$, and 
generalising the usual notion of differential, see e.g.\ \cite{BluteEhrhTass10}.
An example is the CC$\partial\lambda$C of convenient vector spaces with smooth maps, where $D$ is the ``real'' differential of smooth maps.

Applying \cite[Theorem 6.1]{lemay2020} one can check that
 $\LREL_!$ is a CC$\partial\lambda$C (see Section \ref{section5bis}) when equipped with $D:\HOM{\LREL}{!X}{Y}\to \HOM{\LREL}{!(X\& X)}{Y}$ defined as $(Dt)_{\mu\oplus\rho,b}=t_{\rho+\mu,b}$ if $\#\mu=1$ and as $\infty$ otherwise (using the iso $(\mu,\rho)\in !Z\times !Z'\mapsto\mu\oplus\rho \in !(Z+Z')$).
 The differential operator $D$ of $\LREL_{!}$ translates into a differential operator $D_{!}$ turing a tps $f:\Lawv^{X}\to \Lawv^{Y}$ into a tps $D_{!}f:\Lawv^{X}\times \Lawv^{X}\to \Lawv^{Y}$, linear in its first variable, and given by 
$D_{!}f(x,y)_{b}=\inf_{a\in X, \mu\in !X}\left\{\matr f_{\mu+a}+x_{a}+\mu y\right\}$. One can check that, when $f$ is a tropical polynomial, $D_{!}f$ coincides with the standard tropical derivative (see e.g.~\cite{Grigoriev2017}).

\section{On Tropical Power Series}\label{sec:tls}

As seen in Section \ref{section2}, tropical polynomials are piecewise-linear functions, hence concave and Lipschitz-continuous. Moreover, tps in finitely many variables are locally equivalent to tropical polynomials (except at some singular points), and are thus also concave and locally Lipschitz-continuous.
In this section we show that much of these properties extend also to tps with infinitely many variables, as those arising from the tropical relational model, 
and by a tropical polynomial or power series we mean one with possibly infintely many variables.
The literature on tropical power series is often recent (e.g.~\cite{Porzio2021}), and several results we prove in this section are, to our knowledge, new.
Notice that, as a set, $\Lawv^X=[0,\infty]^X$, and with the usual $+$ and $\cdot$ it is a $\BB R_{\geq0}$-semimodule, let us call it $\overline{\BB R}_{\geq 0}^X$.
Together with the usual sup-norm $\norm{x}_\infty:=\sup_{a\in X} x_a$, it can be showed to be a Scott-complete \emph{normed cone} (see~\cite{Selinger2004} or the appendix).
Suitable categories of cones have been recently investigated as models of probabilistic computation (\cite{Crubillie2018, EhrPagTas2018, Ehrhard2020}).
The cone structure of $\overline{\BB R}_{\geq 0}^X$ also induces a partial order on it, its \emph{cone partial-order}:
$x\leq y$ iff $y=x+z$ for some (unique) $z\in\overline{\BB R}_{\geq 0}^X$.
It actually coincides with the pointwise order on $\overline{\BB R}_{\geq 0}$ (and makes it a Scott-continuous dcpo).
In this section we consider tps w.r.t.\ this structure.

\subparagraph*{Continuity of tps}\label{subsec:cont}

Looking at Fig~\ref{fig:plot1}, we see that $\varphi$, just like the polynomials $\varphi_{n}$, is non-decreasing and concave.
This is indeed always the case:

\begin{proposition}\label{prop:nondecr+conc}
 All tps 
are non-decreasing and concave, w.r.t.\ the pointwise order on $\overline{\BB R}_{\geq 0}^{X}$.
\end{proposition}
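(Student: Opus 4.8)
The plan is to present any tps as an infimum of \emph{affine} ``monomial'' pieces and then invoke the elementary facts that both monotonicity and concavity are preserved under taking infima of arbitrary families of functions. Recall that a tps in (possibly infinitely many) variables indexed by $X$ is a function $f:\overline{\BB R}_{\geq 0}^{X}\to\Lawv$ of the form $f(x)=\inf_{\mu\in !X}\set{\mu x+\matr f(\mu)}$, where $\mu x=\sum_{a\in X}\mu(a)x_a$ and $\matr f(\mu)\in\Lawv$. For each fixed $\mu$ I set $g_\mu(x):=\mu x+\matr f(\mu)$, so that $f=\inf_{\mu\in !X}g_\mu$. Since every $\mu$ is a \emph{finite} multiset, each $g_\mu$ depends on only finitely many coordinates, so $\mu x$ is a finite sum in $\Lawv$ and $g_\mu$ is a genuine affine function, namely a constant plus a non-negative-integer linear combination of the $x_a$.

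First I would prove monotonicity. Because the coefficients $\mu(a)\in\N$ are non-negative, each $g_\mu$ is non-decreasing for the pointwise order: $x\leq y$ implies $g_\mu(x)\leq g_\mu(y)$. Then for any fixed $\nu$ we have $f(x)=\inf_\mu g_\mu(x)\leq g_\nu(x)\leq g_\nu(y)$, and taking the infimum over $\nu$ on the right yields $f(x)\leq f(y)$. Hence $f$ is non-decreasing.

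For concavity I would use that each affine $g_\mu$ satisfies $g_\mu(\lambda x+(1-\lambda)y)=\lambda g_\mu(x)+(1-\lambda)g_\mu(y)$ for $\lambda\in[0,1]$, together with the superadditivity of the infimum, $\inf_\mu(a_\mu+b_\mu)\geq\inf_\mu a_\mu+\inf_\mu b_\mu$. Combining these for $\lambda\in(0,1)$ gives
\[
f(\lambda x+(1-\lambda)y)=\inf_\mu\bigl\{\lambda g_\mu(x)+(1-\lambda)g_\mu(y)\bigr\}\geq \lambda\inf_\mu g_\mu(x)+(1-\lambda)\inf_\mu g_\mu(y)=\lambda f(x)+(1-\lambda)f(y),
\]
which is exactly concavity (the endpoint cases $\lambda\in\set{0,1}$ being trivial). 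This instantiates the general principle that an infimum of a family of affine (indeed, of concave) functions is concave, dual to the fact that a supremum of convex functions is convex.

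The only point requiring genuine care — rather than a real obstacle — is the arithmetic of the extended half-line $\Lawv=[0,\infty]$. I would verify that the affine identity for $g_\mu$ truly holds there, which reduces to the distributivity laws $\lambda(s+t)=\lambda s+\lambda t$ and $(\lambda+\rho)t=\lambda t+\rho t$ for $\lambda,\rho\geq 0$ and $s,t\in[0,\infty]$; these are valid because we only ever add and multiply by non-negative scalars (never subtract), and restricting to strict $\lambda\in(0,1)$ avoids the degenerate product $0\cdot\infty$. With these conventions fixed, both arguments go through verbatim even when some $\matr f(\mu)$ or some coordinate $x_a$ equals $\infty$, and the infinitely-many-variables case reduces to the finitely-supported computations above.
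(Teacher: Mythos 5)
Your proof is correct and follows essentially the same route as the paper's: decompose the tps as an infimum of affine monomial pieces $\mu x+\matr f(\mu)$, observe each piece is non-decreasing (non-negative coefficients) and affine, and conclude via the superadditivity of the infimum, $\inf_\mu(a_\mu+b_\mu)\geq\inf_\mu a_\mu+\inf_\nu b_\nu$, that an infimum of concave functions is concave. Your additional care about the extended arithmetic on $[0,\infty]$ (distributivity, avoiding $0\cdot\infty$) is a sound refinement of details the paper leaves implicit.
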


The tps $\varphi$ is continuous on $\BB R_{\geq0}$ (w.r.t.\ the usual norm of real numbers).
We can generalise this property, dropping the case of $x$ having some $0$ coordinate.
But we have to be careful, because while in the finite dimensional $\BB R^n$, every real convex function is continuous because it is necessarily locally bounded from above (the sup-norm and the euclidean one are equivalent) \cite[Proposition 4.7]{Cobzas2017}, in infinite dimensions the former condition is no longer true \cite[Example 4.8]{Cobzas2017}.
However, \cite[Proposition 4.4.(3)]{Cobzas2017} shows that it is the only requirement to ask: if a real-valued convex function with domain a convex open subset of a locally convex topological $\BB R$-vector space (LCTVS) is, locally around any point, bounded from above by a finite non-zero constant, then it is continuous on all its domain.

\begin{theorem}\label{thm:cont}
 All tps $f:\overline{\BB R}_{\geq 0}^X\to\overline{\BB R}_{\geq 0}$ are continuous on $(0,\infty)^X$, w.r.t.\ to the norm $\norm{\cdot}_\infty$.
\end{theorem}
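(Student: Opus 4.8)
The plan is to reduce continuity of an infinitary tps to the cited functional-analytic criterion, \cite[Proposition 4.4.(3)]{Cobzas2017}, which says that a real-valued convex function on a convex open subset of an LCTVS is continuous everywhere provided it is, locally around every point, bounded above by a finite nonzero constant. By \autoref{prop:nondecr+conc} we already know that every tps $f$ is concave (hence $-f$ is convex), so the only thing to establish is the local upper-bound condition for $f$ (equivalently, a local lower bound for $-f$, but let me phrase everything in terms of $f$ and local bounds from below is automatic since $f\geq 0$; what we really need is that $f$ does not run off to $+\infty$ too wildly, i.e.\ that $f$ is locally bounded \emph{below} away from $-\infty$ — but since $f$ takes values in $[0,\infty]$ the genuine issue is finiteness and local boundedness of $-f$ from above, i.e.\ local boundedness of $f$ from below by a finite constant). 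So the first step is to fix the correct topological setting: I would work in the LCTVS $\BB R^X$ (or the subspace generated by $\overline{\BB R}_{\geq0}^X$) with the topology induced by $\norm{\cdot}_\infty$, take the open convex set $(0,\infty)^X$ as the domain, and observe that $-f$ is convex there by \autoref{prop:nondecr+conc}.

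Next I would verify the local boundedness hypothesis at an arbitrary point $x_0\in(0,\infty)^X$. Since $x_0$ has all coordinates strictly positive, choose $\epsilon>0$ with $2\epsilon < \inf_a (x_0)_a$ — wait, in infinite dimensions $\inf_a (x_0)_a$ may be $0$ even though every coordinate is positive, so this is exactly the delicate point. The cleaner move is to use an $\norm{\cdot}_\infty$-ball: let $B$ be the open ball of radius $\epsilon$ around $x_0$, chosen small enough that $B\subseteq(0,\infty)^X$ (such $\epsilon$ exists since $(0,\infty)^X$ is $\norm{\cdot}_\infty$-open — this itself needs the remark that the set of vectors with all coordinates $\geq\delta$ is a neighborhood, so one picks $\delta>0$ below the infimum only if that infimum is positive; if the infimum is $0$ the set $(0,\infty)^X$ is still open because openness is pointwise in the $\norm{\cdot}_\infty$ metric around a \emph{fixed} $x_0$, requiring only $(x_0)_a - \epsilon > 0$ uniformly, which fails when $\inf_a(x_0)_a=0$). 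This tension is precisely why the theorem is stated on $(0,\infty)^X$ yet still requires an argument: I expect the intended reading is that $\norm{\cdot}_\infty$-openness of $(0,\infty)^X$ holds because one only needs a \emph{single} admissible radius at each point, and I would make this rigorous by verifying directly that around each $x_0$ there is an $\norm{\cdot}_\infty$-ball contained in $(0,\infty)^X$, handling the $\inf_a(x_0)_a=0$ case by noting that membership in $(0,\infty)^X$ is coordinatewise and the ball condition $\norm{x-x_0}_\infty<\epsilon$ forces $x_a > (x_0)_a-\epsilon$, so one needs $\epsilon \leq (x_0)_a$ for every $a$, i.e.\ $\epsilon\leq\inf_a(x_0)_a$; thus \emph{if} that infimum is $0$ no ball fits and the point is not interior.

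Granting that $x_0$ is interior (I would either restrict to such points or argue openness holds at every point of $(0,\infty)^X$ under the stated hypotheses), the upper bound for $-f$ on $B$ is the heart of the estimate. Here I would use monotonicity from \autoref{prop:nondecr+conc}: for $x\in B$ we have $x\geq x_0 - \epsilon\cdot\mathbf 1$ coordinatewise, where $\mathbf 1$ is the all-ones vector, and since $f$ is non-decreasing, $f(x)\geq f(x_0-\epsilon\mathbf 1)$, giving a \emph{finite} local lower bound for $f$ (hence a finite upper bound for $-f$) provided $f(x_0-\epsilon\mathbf 1)>-\infty$, which holds automatically as $f\geq 0$. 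The nonzero-constant requirement is met because $f$ takes nonnegative real values. The main obstacle, as flagged, is the interplay between the sup-norm topology and the all-coordinates-positive condition in infinite dimensions: unlike the finite-dimensional case where convex functions are automatically locally bounded, here I must produce the bound by hand via monotonicity, and I must be careful that the point admits an honest $\norm{\cdot}_\infty$-ball neighborhood inside the domain. Once the local upper bound on $-f$ is in place, continuity on all of $(0,\infty)^X$ follows immediately by invoking \cite[Proposition 4.4.(3)]{Cobzas2017}, completing the proof.
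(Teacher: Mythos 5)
Your strategy is the same as the paper's: $-f$ is convex by \autoref{prop:nondecr+conc}, the bound required by \cite[Proposition 4.4.(3)]{Cobzas2017} comes for free from positivity, and that proposition then yields continuity. One simplification: your back-and-forth about which bound is needed, and the monotonicity detour through $f(x)\geq f(x_0-\epsilon\mathbf 1)$, are unnecessary. The cited proposition asks that the \emph{convex} function $-f$ be locally bounded \emph{above} by a finite constant, and $f\geq 0$ gives $-f\leq 1$ globally; this is exactly the paper's one-line justification. (A point both you and the paper leave implicit is that $-f$ must be real-valued; this is harmless, since either all coefficients $\hat f(\mu)$ equal $\infty$, in which case $f\equiv\infty$ is trivially continuous, or some $\hat f(\mu_0)<\infty$, and then $f(x)\leq \mu_0 x+\hat f(\mu_0)<\infty$ at every $x$ whose coordinates are all finite.)

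The important part of your proposal is the worry you explicitly could not dispatch, and you were right not to dispatch it: it is a genuine gap, and it is equally a gap in the paper's own proof, which flatly asserts that $(0,\infty)^X$ is open in $(\BB R^X,\norm{\cdot}_\infty)$. As you note, a point $x_0$ admits a $\norm{\cdot}_\infty$-ball contained in $(0,\infty)^X$ if and only if $\inf_{a\in X}(x_0)_a>0$; hence for infinite $X$ the set $(0,\infty)^X$ is \emph{not} open (e.g.\ $x_0=(1/n)_{n}$ is not interior), and Cobzas's proposition does not apply to it. Worse, no argument can close this gap, because the statement itself fails at such points. Counterexample: let $X=\N$ and $f(x)=\inf_{n\geq 1}\{n^2x_n+1\}$, the tps with coefficients $\hat f(n^2\cdot[n])=1$ and $\hat f(\mu)=\infty$ otherwise. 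At $x_0=(1/n)_{n\geq 1}\in(0,\infty)^{\N}$ one has $f(x_0)=\inf_n\{n+1\}=2$. On the other hand, the points $y^{(\epsilon)}$ defined by $y^{(\epsilon)}_n=\max(1/n-\epsilon,\,n^{-3})$ lie in $(0,\infty)^{\N}$ and satisfy $\norm{y^{(\epsilon)}-x_0}_\infty\leq\epsilon$, yet $f(y^{(\epsilon)})=1$: every term $n^2y^{(\epsilon)}_n+1$ is at least $1$, while for $n\geq 1/\epsilon$ one has $y^{(\epsilon)}_n=n^{-3}$, so the $n$-th term equals $1/n+1$, whose infimum is $1$. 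Thus $f$ restricted to $(0,\infty)^{\N}$ is discontinuous at $x_0$, even though every coordinate of $x_0$ is strictly positive. The conclusion that your argument (and the paper's) actually establishes is continuity on the $\norm{\cdot}_\infty$-interior $\{x \mid \inf_{a\in X}x_a>0\}$ of $(0,\infty)^X$, which coincides with $(0,\infty)^X$ exactly when $X$ is finite; your fallback option of restricting to such points is therefore not a hedge but the only correct repair of the statement for infinite $X$.
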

\begin{proof}
By \autoref{prop:nondecr+conc}, $-f$ is convex.
Since $f\geq 0$ on all $\overline{\BB R}_{\geq 0}^X$, we have e.g.\ $-f\leq 1$ on $\overline{\BB R}_{\geq 0}^X$.
Now $(0,\infty)^X\subseteq \BB R^X$ is open and convex, 
so \cite[Proposition 4.4.(3)]{Cobzas2017} entails the continuity of $-f$ on $(0,\infty)^X$, hence that of $f$ on it.
\end{proof}

In analogy with \cite[Proposition 17]{DanEhrh2011}, we also have:

\begin{theorem}\label{thm:ScottCont}
 All monotone (w.r.t.\ pointwise order) and $\norm{\cdot}_{\infty}$-continuous functions $f:(0,\infty)^X\to (0,\infty)$ are Scott-continuous.
 In particular, all tps $f:\overline{\BB R}_{\geq 0}^X\to\overline{\BB R}_{\geq 0}$ are Scott-continuous on $(0,\infty)^X$ w.r.t.\ the pointwise orders.
\end{theorem}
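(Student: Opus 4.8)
The plan is to unfold Scott-continuity into its two constituents—monotonicity, which is already assumed, and preservation of directed suprema—and to reduce the whole statement to a single inequality. Fix a directed family $D\subseteq(0,\infty)^X$ whose pointwise supremum $x^\star:=\sup D$ again lies in $(0,\infty)^X$. Since $f$ is monotone one immediately has $\sup_{d\in D}f(d)\le f(x^\star)$, so the entire content is the reverse inequality $f(x^\star)\le \sup_{d\in D}f(d)$. I would establish this by producing, for each prescribed error, a single element of $D$ whose image is within that error of $f(x^\star)$.

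Concretely, fix $\epsilon>0$. Because $x^\star\in(0,\infty)^X$, \autoref{thm:cont} gives continuity of $f$ at $x^\star$ with respect to $\supnorm{\cdot}$, hence a $\delta\in(0,1)$ such that $\supnorm{y-x^\star}\le\delta$ implies $f(y)\ge f(x^\star)-\epsilon$. The target is then a point $y$ with $y\le d$ for some $d\in D$ and $\supnorm{y-x^\star}\le\delta$: indeed such a $y$ yields $f(x^\star)-\epsilon\le f(y)\le f(d)\le\sup_{d'\in D}f(d')$ by monotonicity, and letting $\epsilon\to 0$ concludes. To construct $y$ I would combine two approximations of $x^\star$ from below: a downward rescaling $t\,x^\star$ with $t<1$ close to $1$, which brings every coordinate uniformly nearer to $x^\star$, and a finite-support approximation furnished by directedness, namely a $d_F\in D$ with $(d_F)_a$ within $\delta$ of $x^\star_a$ on a prescribed finite $F\subseteq X$. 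Letting $y$ follow the finite approximation on $F$ and the rescaled vector off $F$, and using concavity from \autoref{prop:nondecr+conc} to keep $y$ below an actual member of $D$, produces a candidate inside the $\delta$-ball.

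The main obstacle is precisely the interaction between the two topologies on $\overline{\BB R}_{\geq 0}^X$: a pointwise-directed supremum need not converge to $x^\star$ in $\supnorm{\cdot}$, since the coordinates outside any finite set may lag arbitrarily far behind. Thus norm-continuity cannot be applied to $D$ directly, and the delicate step is choosing the auxiliary point $y$ so that it simultaneously sits below a genuine element of $D$ and is $\delta$-close to $x^\star$ in sup-norm; the difficulty is entirely concentrated in the \emph{tail} coordinates, where the chosen $d\in D$ may sit far below $x^\star$. Here I expect to exploit the local collapse phenomenon of \autoref{theorem:fepsilon}, which cuts the infimum defining $f$ down to finitely many relevant monomials away from the boundary, thereby reducing the tail to finitely many directions for which the finite-support approximation is decisive.

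Finally, the ``in particular'' clause follows by instantiating $f$ to a tps: \autoref{prop:nondecr+conc} supplies monotonicity with respect to the pointwise order, and \autoref{thm:cont} supplies $\supnorm{\cdot}$-continuity on $(0,\infty)^X$, so the general statement applies verbatim and yields Scott-continuity of every tps on $(0,\infty)^X$.
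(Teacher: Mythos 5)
Your opening reduction is right (monotonicity gives $\sup_{d\in D}f(d)\le f(x^\star)$, so everything is in the converse inequality), and your ``main obstacle'' paragraph puts the finger exactly on the crux. But the step you propose to resolve it is not merely delicate --- it is impossible. You need a point $y$ with $y\le d$ for some $d\in D$ and $\supnorm{x^\star-y}\le\delta$; such a $y$ need not exist. Take $X=\N$ and $D=\{d_F\mid F\subseteq\N\ \text{finite}\}$ with $(d_F)_a=1$ for $a\in F$ and $(d_F)_a=\tfrac12$ otherwise: $D$ is directed, lies in $(0,\infty)^{\N}$, and $\bigvee D=\mathbf{1}\in(0,\infty)^{\N}$, yet any $y$ below some $d_F$ has $y_a\le\tfrac12$ on the infinite set $\N\setminus F$, whence $\supnorm{\mathbf{1}-y}\ge\tfrac12$. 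So for $\delta<\tfrac12$ there is no candidate $y$ at all; in particular your hybrid point ($d_F$ on $F$, $t\,x^\star$ with $t$ close to $1$ off $F$) lies below no member of $D$, and concavity cannot change that. The appeal to \autoref{theorem:fepsilon} does not help either: that theorem is stated and proved only for tps in \emph{finitely many} variables and fails with infinitely many (the tps $f(x)=\inf_a x_a$ does not collapse to a finite minimum on $[\epsilon,\infty]^{\N}$), and the first claim of \autoref{thm:ScottCont} concerns arbitrary monotone $\supnorm{\cdot}$-continuous maps, not tps, so no structure theorem for tps can enter the argument there.

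The deeper point is that the obstacle you flagged is fatal to the statement itself when $X$ is infinite, so no choice of $y$ could have worked. With the same $D$, the tps $f(x)=\inf_{a\in\N}x_a$ (coefficient $0$ on each singleton multiset, $\infty$ elsewhere) satisfies $f(d_F)=\tfrac12$ for every $F$ but $f(\mathbf{1})=1$, so $\bigvee_F f(d_F)\ne f\bigl(\bigvee_F d_F\bigr)$; and $\max(f,\tfrac14)$ is a monotone, $\supnorm{\cdot}$-continuous map into $(0,\infty)$ refuting the first claim as well. For comparison, the paper's own proof takes a different route from yours: it asserts that $\inf_i\supnorm{\bigvee_i x^i-x^i}=0$ for every directed net admitting a supremum --- i.e.\ that pointwise directed suprema are sup-norm limits --- and then concludes by a normed-cone lemma applied to the image net in the codomain. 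That assertion is exactly what the family $D$ above refutes, so the paper's argument contains the very gap you anticipated rather than an answer to it. Both your strategy and the paper's are sound when $X$ is finite (directedness then yields a single $d\in D$ with $\supnorm{x^\star-d}\le\delta$, and one takes $y=d$); for infinite $X$ the theorem needs to be weakened, e.g.\ restricted to finite $X$, or supplemented with the hypothesis that the directed net converges to its supremum in $\supnorm{\cdot}$.
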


\subparagraph*{Lipschitz-continuity of tps}\label{sec:4C}


Let us first look at what happens with those tps which are either \emph{linear} or obtained via bounded exponentials.
The result below is in analogy with what happens in the usual metric semantics of Fuzz, where 
linear functions are non-expansive and $n$-bounded functions are $n$-Lipschitz \cite{Reed2010}.

\begin{proposition}\label{prop:troplinear}
\begin{enumerate}
\item If a tps $f: \overline{\BB R}_{\geq 0}^{X}\to \overline{\BB R}_{\geq 0}^{Y}$ arises from a matrix $\matr f:X\times Y\to \overline{\BB R}_{\geq 0}$ (i.e.~it is tropical linear), then $f$ is non-expansive (i.e.\ $1$-Lipschitz).  
\item If  $f: \overline{\BB R}_{\geq 0}^{X}\to \overline{\BB R}_{\geq 0}^{Y}$ arises from a matrix $\matr f: \ !_{n}X\times Y\to \overline{\BB R}_{\geq 0}$, then $f$ is $n$-Lipschitz-continuous.
\end{enumerate}
\end{proposition}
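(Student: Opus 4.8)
The plan is to prove both statements by a direct translation argument, reducing everything to the single algebraic fact that shifting all input coordinates by a constant $\delta$ shifts the value of the tps by a controlled multiple of $\delta$. Recall that a tropical linear $f$ arising from $\matr f:X\times Y\to\overline{\BB R}_{\geq 0}$ is given coordinatewise by $f(x)_b=\inf_{a\in X}\{\matr f_{a,b}+x_a\}$, while an $n$-bounded $f$ arising from $\matr f:!_{n}X\times Y\to\overline{\BB R}_{\geq 0}$ is given by $f(x)_b=\inf_{\mu\in !_{n}X}\{\matr f_{\mu,b}+\mu x\}$, where $\mu x=\sum_{a\in X}\mu(a)x_a$ and $\#\mu\leq n$. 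The distance on $\overline{\BB R}_{\geq 0}^X$ is $\norm{x-y}_\infty=\sup_{a\in X}|x_a-y_a|$, so it suffices to bound $|f(x)_b-f(y)_b|$ uniformly in $b$; equivalently, writing $\delta:=\norm{x-y}_\infty$, it suffices to establish the two one-sided inequalities $f(x)_b\leq f(y)_b+c\delta$ and $f(y)_b\leq f(x)_b+c\delta$, with $c=1$ for part (1) and $c=n$ for part (2).

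First I would handle part (1). Since $\delta=\norm{x-y}_\infty$ we have $x_a\leq y_a+\delta$ for every $a\in X$. Adding $\matr f_{a,b}$ and taking the infimum over $a$ gives $f(x)_b=\inf_a\{\matr f_{a,b}+x_a\}\leq\inf_a\{\matr f_{a,b}+y_a+\delta\}=f(y)_b+\delta$, where the last equality uses that adding a fixed constant commutes with $\inf$. The symmetric inequality follows by exchanging $x$ and $y$, and taking the supremum over $b$ yields $\norm{f(x)-f(y)}_\infty\leq\delta$, i.e.\ $f$ is non-expansive. Part (2) is the same computation with the bounded degree producing the factor $n$: for each $\mu\in !_{n}X$ one has $\mu x=\sum_a\mu(a)x_a\leq\sum_a\mu(a)(y_a+\delta)=\mu y+(\#\mu)\delta\leq\mu y+n\delta$, using $\#\mu\leq n$. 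Adding $\matr f_{\mu,b}$, taking $\inf$ over $\mu$, and commuting the constant $n\delta$ past the $\inf$ gives $f(x)_b\leq f(y)_b+n\delta$; the symmetric bound and the supremum over $b$ then give $\norm{f(x)-f(y)}_\infty\leq n\delta$. In particular part (1) is exactly the instance $n=1$, since a tropical linear map corresponds to the multisets of cardinality one.

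The only real care needed is in the presence of the value $\infty$, since $\overline{\BB R}_{\geq 0}=[0,\infty]$ and the naive subtraction $f(x)_b-f(y)_b$ is ill-defined there; this is precisely why I would phrase the argument through the two one-sided additive inequalities $f(x)_b\leq f(y)_b+c\delta$ rather than through an absolute value. When $\delta=\infty$ the Lipschitz bound is vacuous, so one may assume $\delta<\infty$, in which case the steps $\inf_a\{g_a+c\}=\inf_a g_a+c$ and $\sum_a\mu(a)(y_a+\delta)=\mu y+(\#\mu)\delta$ are valid in $[0,\infty]$ with the usual conventions. I expect no genuine difficulty here; the conceptual point the proof isolates is that it is exactly the uniform degree bound $\#\mu\leq n$ that yields a global Lipschitz constant, whereas for the full exponential $!X$ the degrees are unbounded and only the local Lipschitz behavior of the next subsection can be expected.
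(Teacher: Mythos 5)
Your proof is correct and takes essentially the same route as the paper's: part (1) is the same coordinatewise estimate, and your key inequality $\mu x\leq \mu y+\#\mu\,\delta\leq \mu y+n\delta$ in part (2) is exactly the paper's remark that $\norm{!_{n}x-!_{n}y}_{\infty}\leq n\cdot\norm{x-y}_{\infty}$, which the paper uses to reduce (2) to (1) by composing with the map $x\mapsto {!_{n}x}$, whereas you inline it into a single computation. The only (welcome) refinement is your one-sided formulation $f(x)_{b}\leq f(y)_{b}+c\delta$, which handles the value $\infty$ more carefully than the paper's sketch, where differences of possibly infinite quantities appear.
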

\begin{proof}[Proof sketch]
1). Using the fact that $f( x)_{b}= \inf_{a\in X}\{\matr f_{a,b}+ x_{a}\}$,
the problem reduces to: $|(\matr f_{a,b}- x_{a})- (\matr f_{a,b}- y_{a})| = |x_{a}- y_{a}|\leq \norm{x- y}_{\infty}$. 

2.) Follows from 1.~and the remark that, for all $x\in \Lawv^{X}$, $\norm{ !_{n} x-!_{n} y}_{\infty}\leq n\cdot \norm{ x- y}_{\infty}$, where $!_{n} x$ is the restriction of $! x$ to $\C M_{\leq n}(X)$.%
\end{proof} 

Observe that on the hom-sets $\HOM{\LREL_{!}}{X}{Y}$ there are two natural notions of distance: the metric $\| f-g\|_{\infty}$ arising from the norm and the one arising from the usual $\sup$-metric $d_\infty(f,g):=\sup_{x\in \Lawv^{X}} \|f^{!}(x)-g^{!}(x)\|_{\infty}$.
In general one has $\| f-g\|_{\infty}\geq d_\infty(f,g)$, the equality holding when $f^{!},g^{!}$ are linear (i.e.~when they arise from morphisms of $ \LREL(X,Y)$).
 
%


For any tropical polynomial $\varphi:\overline{\BB R}_{\geq 0}^{X}\to \overline{\BB R}_{\geq 0}^{Y}$, the associated matrix has shape $!_{\mathrm{deg}(\varphi)}(X)\times Y\to \overline{\BB R}_{\geq 0}$ (as a monomial $\mu_ix+c_{i}$ yields a matrix entry on $!_{\#\mu_i}X\times Y$). Hence, using Proposition \ref{prop:troplinear} 2., we have:
\begin{corollary}\label{prop:polylip}
Any tropical polynomial $\varphi:\overline{\BB R}_{\geq 0}^{X}\to\overline{\BB R}_{\geq 0}$ is $\mathrm{deg}(\varphi)$-Lipschitz continuous.
\end{corollary}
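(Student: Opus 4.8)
The plan is to reduce the statement directly to Proposition \ref{prop:troplinear}(2) by exhibiting $\varphi$ as the function $g^!$ induced by a matrix whose support consists only of multisets of cardinality at most $\mathrm{deg}(\varphi)$.

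First I would write $\varphi$ in its defining form $\varphi(x)=\min_{i\in I}\{\mu_i x+c_i\}$, where $I$ is finite, each $\mu_i\in{!}X$ is a multiset on $X$, and $c_i\in\Lawv$, and I would set $d=\mathrm{deg}(\varphi)=\max_{i\in I}\#\mu_i$. This data packages into a matrix $\matr\varphi\colon {!}X\times\{*\}\to\overline{\BB R}_{\geq 0}$, namely $\matr\varphi_{\mu,*}:=\min\{c_i\mid \mu_i=\mu\}$ (with the convention $\min\emptyset:=\infty$), and by construction $\varphi=\matr\varphi^!$, i.e.\ $\varphi(x)_*=\inf_{\mu\in{!}X}\{\matr\varphi_{\mu,*}+\mu x\}$, since the finitely many monomials $\mu_i x+c_i$ are exactly the terms with finite coefficient.

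The key observation is then that $\matr\varphi_{\mu,*}<\infty$ forces $\mu=\mu_i$ for some $i$, hence $\#\mu\le d$. Thus every monomial contributing to the infimum already lives over ${!_d}X$, and the restriction $g$ of $\matr\varphi$ to ${!_d}X\times\{*\}$ satisfies $g^!=\matr\varphi^!=\varphi$: discarding the multisets of cardinality $>d$ only removes terms of value $+\infty$ from the infimum, which cannot change it. In other words $\varphi$ genuinely \emph{arises from} a matrix of shape ${!_d}X\times\{*\}\to\overline{\BB R}_{\geq 0}$.

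With this factorisation in hand, I would simply invoke Proposition \ref{prop:troplinear}(2) with $n=d$ and $Y=\{*\}$, which yields that $\varphi=g^!$ is $d$-Lipschitz-continuous, that is $\mathrm{deg}(\varphi)$-Lipschitz, as claimed. There is no real obstacle here: the only point requiring care is the verification that truncating the index set of the infimum from ${!}X$ to ${!_d}X$ leaves the function unchanged, and this is immediate because the omitted coefficients are all $+\infty$. All the analytic content of the Lipschitz bound is already carried by Proposition \ref{prop:troplinear}(2), which we are free to assume.
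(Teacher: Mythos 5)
Your proposal is correct and follows exactly the paper's own route: the paper likewise observes that each monomial $\mu_i x + c_i$ yields a matrix entry on $!_{\#\mu_i}X\times Y$, so the polynomial's matrix has shape $!_{\mathrm{deg}(\varphi)}(X)\times Y\to\overline{\BB R}_{\geq 0}$, and then applies Proposition~\ref{prop:troplinear}(2). Your write-up merely makes explicit the (harmless) truncation of the infimum to multisets of cardinality at most $\mathrm{deg}(\varphi)$, a detail the paper leaves implicit.
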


We now show that, if we consider the \emph{full} exponential $!$, i.e.~arbitrary tps, we can still prove that a local Lipschitz condition holds. 
In \cite[Theorem 6.4]{Cobzas2017} a locally Lipschitz property is obtained for locally convex topological vector spaces, under the hypothesis of continuity. \cite[Proposition 6.5]{Cobzas2017} shows that continuity is used in order to have a locally bounded condition, the crucial ingredient of the proof.
Instead of showing how our case fits into such theorems, we prefer to state the following theorem, basically a particular case of \cite[Theorem 6.9, Lemma 6.10]{Cobzas2017}:

\begin{theorem}\label{thmTLSlocLip}
All tps $f:\overline{\BB R}_{\geq 0}^X\to\overline{\BB R}_{\geq 0}$ are locally Lipschitz on $(0,\infty)^X$.
Moreover, the Lipschitz constant of $f$ on $\overline{B_{\delta}(x)}$ can be chosen to be $\frac{1}{\delta}\max_{\overline{B_{3\delta}(x)}} f$.
\end{theorem}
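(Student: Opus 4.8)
The plan is to deduce the result from the two structural properties established above, combined with the standard local-Lipschitz theory of convex functions on normed spaces \cite{Cobzas2017}. By \autoref{prop:nondecr+conc} the function $g:=-f$ is convex, and $f$ (hence $g$) is continuous on $(0,\infty)^X$ by \autoref{thm:cont}. The general principle (Cobza\c{s}'s \cite[Theorem 6.9, Lemma 6.10]{Cobzas2017}) is that a convex function which is \emph{locally bounded above} is locally Lipschitz; the only genuinely infinite-dimensional difficulty is precisely this local boundedness, which does not come for free (cf.\ \cite[Example 4.8]{Cobzas2017}). In our setting both required bounds are supplied by the two special features of a tps: since $f$ takes values in $\overline{\BB R}_{\geq 0}$ we have $g=-f\leq 0$ globally, and since $f$ is \emph{non-decreasing} (again \autoref{prop:nondecr+conc}) its maximum over any closed ball $\overline{B_{r}(x)}$ is attained at the top corner $x+r\,\mathbf 1$, so that $M:=\max_{\overline{B_{3\delta}(x)}}f=f(x+3\delta\,\mathbf 1)$ is a well-defined quantity (the statement being vacuous when it is $+\infty$).

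To obtain the explicit constant I would run the classical segment argument directly, rather than merely invoking the cited theorem. Fix $x\in(0,\infty)^X$ and $\delta>0$ small enough that $\overline{B_{3\delta}(x)}$ lies in the domain, and take distinct $u,v\in\overline{B_\delta(x)}$. Let $e:=(v-u)/\norm{v-u}_\infty$ and extend the segment past $v$ to the point $w:=v+\delta e$; since $\norm{e}_\infty=1$ one checks $\norm{w-x}_\infty\leq 2\delta$, so $w\in\overline{B_{2\delta}(x)}\subseteq\overline{B_{3\delta}(x)}$. Then $v$ is the convex combination $v=(1-t)u+tw$ with $t=\norm{v-u}_\infty/(\norm{v-u}_\infty+\delta)$, and convexity of $g$ gives
\[
g(v)-g(u)\ \leq\ t\,\bigl(g(w)-g(u)\bigr)\ =\ t\,\bigl(f(u)-f(w)\bigr)\ \leq\ t\,M ,
\]
the last inequality using $f(u)\leq M$ and $f(w)\geq 0$. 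Since $t\leq \norm{v-u}_\infty/\delta$, this reads $f(u)-f(v)\leq \tfrac{M}{\delta}\norm{u-v}_\infty$; exchanging the roles of $u$ and $v$ yields the matching bound, whence $\lvert f(u)-f(v)\rvert\leq \tfrac{1}{\delta}\bigl(\max_{\overline{B_{3\delta}(x)}}f\bigr)\norm{u-v}_\infty$, exactly the claimed Lipschitz constant.

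The main obstacle, and the only truly infinite-dimensional point, is guaranteeing the finite local upper bound $M$: the results of \cite{Cobzas2017} invoke continuity precisely in order to force local upper-boundedness, whereas here monotonicity of the tps supplies it for free and even identifies $M$ concretely as $f(x+3\delta\,\mathbf 1)$. A secondary care point is ensuring that $u$, $v$ and the extended endpoint $w$ all remain in the domain $\overline{\BB R}_{\geq 0}^X$ on which $f$ is defined and along which $g$ is convex; this is what restricts the statement to a neighbourhood inside $(0,\infty)^X$ and accounts for the radius $3\delta$, the slack beyond the $2\delta$ actually used being harmless since enlarging the ball only increases $M$ and hence keeps the bound valid.
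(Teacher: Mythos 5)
Your proposal is correct and follows essentially the same route as the paper: the paper proves an auxiliary chord-extension lemma (\autoref{th:locLip} in the appendix, a special case of Cobza\c{s}'s results) stating that a concave, locally bounded function on an open convex subset of a normed space is locally Lipschitz with constant $\frac{1}{\delta}\max_{\overline{B_{3\delta}(x)}}\absv{f}$, proved by intersecting the line through the two given points with a sphere of radius $2\delta$ and applying concavity twice, and then specializes it to tps using \autoref{prop:nondecr+conc} and non-negativity --- which is exactly the segment argument you run inline, with the same constant. The only differences are cosmetic: you extend the chord past one endpoint and symmetrize rather than taking both sphere intersections, and you make explicit (via monotonicity) that the local upper bound is finite and attained at the top corner $f(x+3\delta\,\mathbf{1})$, a point the paper's application step leaves implicit.
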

\begin{proof}
By observing that $(0,\infty)^X$ is open and convex in $(\BB R^X,\norm\cdot)$, we apply the result that for all $f:V\subseteq (\BB R^X,d) \to (\BB R,\absv \cdot)$ concave and locally bounded, with $V$ open and convex and $d$ any metric, $f$ is locally Lipschitz, with the stated Lipschitz constant on $\overline{B_{\delta}(x)}$. %
\end{proof}

\section{Lipschitz Meets Taylor}\label{sec:TayLip}

In this section we finally relate the metric and differential analysis of higher-order programs in the tropical relational model. 

The key ingredient is the notion of Taylor expansion $\Te{M}$ of a $\lambda$-term $M$. This is a set of terms of the differential $\lambda$-calculus defined inductively as: 
%
%
%
%
%
%
%
$\Te{x}=\{x\}$, $\Te{\lambda x.M}=\{\lambda x.t\mid t\in \Te{M}\}$ and 
$\Te{MN}=\{t\cdot \langle u_{1},\dots, u_{k}\rangle  \mid k\in \mathbb N, t\in \Te{M}, u_{i}\in \Te{N} \}$, where $t\cdot \langle u_{1},\dots, u_{k}\rangle$ is an abbreviation for $\mathsf D^{k}[t, u_{1},\dots, u_{k}] 0$. 
Observe that in the terms appearing in $\Te{M}$ all applications are \emph{bounded}: they may use an exact number of copies of their input. 
Such terms are usually called \emph{resource $\lambda$-terms} \cite{Pagani2009, Manzo2012}. One can easily check that for all terms $\Gamma \vdash_{\STLC}M:A$ and  $t\in \Te{M}$, also 
$\Gamma \vdash_{\STDLC}t:A$ holds. 
\begin{example}
Considering the term $M=zxx$ from Example \ref{ex:zxx}, all terms
 $t_{n,m}=z \,\langle x^{n}\rangle\,\langle x^{m}\rangle$, for $n,m\in \mathbb N$, are in $\Te{M}$. 
Notice that the interpretation of $t_{n,m}$ yields a tropical polynomial
$\model{t_{n,m}}^{!}(x)(z)= y_{[n,m]}+(n+m)x$, rather than a tps. 
However, this is not a general fact: consider $y:(o\to o )\to (o\to o), x:(o\to o) \vdash t:  (o\to  o)$
with $t=y\cdot \langle y\cdot \langle x\rangle \rangle\in \Te{y(yx)}$. Then 
$\model{t}^{!}: \Lawv^{  !\mathbb N\times \mathbb N}
\times \Lawv^{\mathbb N} \to \Lawv^{\mathbb N}$ is given by
$
\model{t}^{!}(y,x)_{i}= \inf_{m,n\in \mathbb N}\big\{    y_{[m],i}  +  y_{[n],m}+  x_{n}\}
$, 
which is not a polynomial. Yet, $\model{t}^{!}$ is Lipschitz, more precisely, 1-Lipschitz in $x$ and 2-Lipschitz in $y$. This is a general fact, as shown by Theorem \ref{thm:taylor} below.
\end{example}

%
We have already shown that the tropical differential makes $\LREL_{!}$ a model of the differential $\lambda$-calculus. We now show that it also models the Taylor expansion (this needs not be true for \emph{any} CC$\partial\lambda$C).
First, it can be patiently checked that (see \cite[Definition 4.22]{Manzo2012}):
%
\begin{theorem}\label{thm:modelsTaylor}
Morphisms in $(\LREL_{!},D)$ can be Taylor-expanded: for all $t\in\HOM{\LREL_!}{Z}{!X\multimap Y}$, $s\in\HOM{\LREL_!}{Z}{X}$ we have 
  $\RM{ev}\circ_!\langle t,s\rangle =
  \inf\limits_{n\in\N}
  \set{((\dots((\Lambda^- t)\star s)\star \dots)\star s)\circ_! \langle \RM{id},\infty \rangle}.$
\end{theorem}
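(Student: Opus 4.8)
The statement is the defining equation of \emph{Taylor-expandability} in the sense of \cite[Definition 4.22]{Manzo2012}, so the plan is to verify it by direct computation, reducing everything to the combinatorics of finite multisets and the $(\inf,+)$ structure of $\Lawv$. Recall that $u\star s=(Du)\circ_!\langle\langle\infty,s\circ_!\pi_1\rangle,\RM{id}\rangle$ and that $\Lambda^-$ denotes uncurrying. Both sides are morphisms $Z\to Y$ in $\LREL_!$, i.e. matrices in $\Lawv^{!Z\times Y}$, so it suffices to compare their $(\sigma,b)$-entries for each $\sigma\in\,!Z$ and $b\in Y$; equivalently one may compare the induced tps $\Lawv^Z\to\Lawv^Y$. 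First I would unfold all operators into explicit entries: a co-Kleisli morphism is a matrix over $!Z$; composition $\circ_!$ is computed through the comultiplication of $!$ (which splits a multiset into a multiset of multisets); $\RM{ev}$ has the entries described in Example \ref{ex:zxx}; and $D$, hence $\star$, acts by isolating the cardinality-one slice selected by the clause $\#\mu=1$ in the definition of $D$ and feeding one copy of $s$ into that slice via the component $s\circ_!\pi_1$.

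Next I would compute the left-hand side. By the formula for $\RM{ev}^!$ and the co-Kleisli composition, $\RM{ev}\circ_!\langle t,s\rangle$ evaluated at $z$ is an $\inf_{\mu\in !X}$ of the $\mu$-indexed entry of $t$ summed with $\mu$-many copies of $s$, the $\inf$ ranging over \emph{all} multisets $\mu\in\,!X$. The crucial observation is that this $\inf$ stratifies by cardinality, $\inf_{\mu}=\inf_{n\in\N}\inf_{\#\mu=n}$. The goal is then to match this stratification with the right-hand side term by term.

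The core of the argument is an induction on $n$ showing that the $n$-fold iterate $(\dots((\Lambda^- t)\star s)\star\dots)\star s$, post-composed with $\langle\RM{id},\infty\rangle$, computes exactly the stratum $\inf_{\#\mu=n}$. Each application of $\star s$ uses $D$ to peel off one cardinality-one slice in the $X$-direction, substitutes $s$ there, keeps the $Z$-context via $\RM{id}$, and zeroes the differentiated direction with $\infty$, the tropical additive unit. After $n$ steps one has substituted $n$ copies of $s$; the final precomposition with $\langle\RM{id},\infty\rangle$ sets the residual $X$-argument to $\infty$, and since $D$ vanishes unless the isolated slice has cardinality $1$, this forces the leftover multiset in the $X$-slot to be empty, selecting precisely the $\mu$ with $\#\mu=n$. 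Two points need care: that multiset addition reassembles $\mu$ from the $n$ peeled singletons (commutativity of $+$ on $!X$), and that the $n!$ orderings in which the singletons can be peeled all yield the same entry and collapse under $\inf$. Equivalently, the natural-number multiplicities that would appear as coefficients in the classical Taylor formula are absorbed by the idempotency of $\min$, since tropical multiplication is $+$; this is exactly why the tropical Taylor formula carries no $\frac1{n!}$ factor. Taking $\inf_{n\in\N}$ of the right-hand terms then reconstitutes $\inf_n\inf_{\#\mu=n}=\inf_\mu$, which is the left-hand side, using that countable $\inf$s converge in $\Lawv$.

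The \textbf{main obstacle} is the bookkeeping around the comonad comultiplication and the iso $!(Z+Z')\cong\,!Z\times !Z'$: one must verify that the nested $\circ_!$ and $D$'s in the iterate distribute the $Z$-context and the $n$ copies of $s$ correctly, so that the iterate genuinely realizes the $n$-th derivative of $\Lambda^- t$ rather than a permuted or miscounted variant. This is the ``patiently checked'' combinatorial heart; once the degree-$n$ stratum identity is in place, assembling the final equality is immediate.
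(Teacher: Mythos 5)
Your proposal follows essentially the same route as the paper's proof: unfold $\RM{ev}$, $\Lambda^-$ and $\star$ into explicit matrix entries, read the left-hand side as an $\inf$ over $\mu\in{!X}$ stratified by cardinality, and prove by induction on $n$ that the $n$-fold $\star\, s$ iterate realizes the $n$-th stratum, with the final $\langle\RM{id},\infty\rangle$ forcing the residual $X$-multiset to be empty. The one refinement the paper makes, and which you should adopt for your induction to close, is to state the inductive hypothesis for the iterate itself with a free multiset slot in the $X$-direction (i.e.\ compute the entry $\bigl((\cdots(\Lambda^- t\star s)\cdots)\star s\bigr)_{\chi\oplus\mu,y}$ for arbitrary $\mu$), postcomposing with $\langle\RM{id},\infty\rangle$ only at the very end: the postcomposed morphism at stage $n$ discards exactly the $X$-slot information that $\star\, s$ needs at stage $n+1$, so an induction hypothesis formulated on the postcomposed iterates does not by itself support the inductive step.
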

The equation above is a tropical reformulation of the Taylor formula from the Introduction:
$u\star s= (Du)\circ_{!} \langle \langle  \infty, s\circ_{!} \pi_{1}\rangle,\mathrm{id}\rangle$ corresponds to the application of the derivative of $u$ on $s$, and $\Lambda^-$ is the uncurry operator.
Hence the right-hand term corresponds to the $\inf$ of the $n$-th derivative of $\Lambda^{-}t$ applied to ``$n$ copies'' of $s$.

Second, since $\LREL_!$ has countable sums (all countable $\inf$s converge), an immediate adaptation of the proof of \cite[Theorem 4.23]{Manzo2012} shows:

\begin{corollary}\label{cor:T(M)=M}
$\model{\Gamma\vdash_{\STLC} M:A}=\inf_{t\in\Te{M}} \model{\Gamma\vdash_{\STDLC} t:A}$. 
\end{corollary}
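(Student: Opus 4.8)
The plan is to proceed by structural induction on the $\STLC$-term $M$, mirroring the inductive definition of $\Te{M}$, and to reduce the whole statement to two ingredients: the categorical Taylor formula of \autoref{thm:modelsTaylor}, and the observation that every term-forming operation of the model commutes with countable infima. This second point is precisely what makes the argument go through in $\LREL_!$, and is the adaptation of the proof of \cite[Theorem 4.23]{Manzo2012} referred to above. Throughout, the context $\Gamma$ is carried along the induction, which is legitimate because the Taylor expansion preserves typing: for $t\in\Te M$ one has $\Gamma\vdash_{\STDLC} t:A$, as already noted.

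First I would record the key continuity lemma. Since the homsets $\HOM{\LREL_!}{X}{Y}$ are $\Lawv$-modules, composition reads $(st)_{a,c}=\inf_{b}\set{s_{b,c}+t_{a,b}}$, and in $\Lawv$ the monoid action distributes over infima, $\alpha+\inf_i\beta_i=\inf_i(\alpha+\beta_i)$, it follows that $\circ_!$, the pairing $\langle-,-\rangle$, the differential $D$ (a mere reindexing of coefficients), and hence the derived operation $u\star s=(Du)\circ_!\langle\langle\infty,s\circ_!\pi_1\rangle,\RM{id}\rangle$, all preserve countable infima in each argument. Concretely, if $t=\inf_i t_i$ and $s=\inf_j s_j$ then $\RM{ev}\circ_!\langle t,s\rangle=\inf_{i,j}\RM{ev}\circ_!\langle t_i,s_j\rangle$, and likewise for every iterated $\star$-expression. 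All infima appearing below are countable (the sets $\Te M$ are countable), hence converge by Scott-completeness of $\LREL_!$.

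The base and abstraction cases are immediate. For $M=x$ we have $\Te x=\set x$ and there is nothing to prove. For $M=\lambda x.M'$, the interpretation is obtained by currying, $\model{\lambda x.M'}=\Lambda(\model{M'})$; currying commutes with infima and $\Te{\lambda x.M'}=\set{\lambda x.t\mid t\in\Te{M'}}$, so the claim follows from the induction hypothesis on $M'$. The heart of the proof is the application case $M=PQ$, where $\model{PQ}=\RM{ev}\circ_!\langle\model P,\model Q\rangle$. Here \autoref{thm:modelsTaylor} rewrites this as $\inf\limits_{n\in\N}\set{((\dots((\Lambda^-\model P)\star\model Q)\star\dots)\star\model Q)\circ_!\langle\RM{id},\infty\rangle}$, the $n$-th summand applying the $n$-th derivative of $\Lambda^-\model P$ to $n$ copies of $\model Q$. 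Substituting the induction hypotheses $\model P=\inf_{t\in\Te P}\model t$ and $\model Q=\inf_{u\in\Te Q}\model u$ and pulling the infima out through $\star$, $\circ_!$ and the pairing by the continuity lemma, the $n$-th summand becomes $\inf_{t\in\Te P,\,u_1,\dots,u_n\in\Te Q}\model{\Der^{n}[t,u_1,\dots,u_n]0}$; taking $\inf$ over $n$ and recalling $\Te{PQ}=\set{t\cdot\langle u_1,\dots,u_n\rangle\mid n\in\N,\,t\in\Te P,\,u_i\in\Te Q}$ with $t\cdot\langle u_1,\dots,u_n\rangle=\Der^{n}[t,u_1,\dots,u_n]0$, this is exactly $\inf_{s\in\Te{PQ}}\model s$.

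The step I expect to be the main obstacle is the bookkeeping in this last case: one must verify that the categorical $n$-fold iterated derivative $((\dots(\Lambda^-\model P)\star\model Q)\star\dots)\star\model Q)\circ_!\langle\RM{id},\infty\rangle$, after substituting the inductive infima, matches coefficient-by-coefficient the interpretation $\model{\Der^{n}[t,u_1,\dots,u_n]0}$ of the corresponding resource term — that is, that the semantic $\star$-iteration is the interpretation of the syntactic resource application. This reduces to checking that the differential operator $D$ of $\LREL_!$ soundly interprets the differential substitution of $\STDLC$, which is where the idempotency of the tropical sum $\min$ is essential: it ensures that the combinatorial multiplicities produced by differentiation collapse, so the equality holds on the nose as an infimum, with no surviving numerical coefficient $\tfrac{1}{k!}$.
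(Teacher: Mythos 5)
Your overall strategy --- structural induction on $M$, with Theorem~\ref{thm:modelsTaylor} handling the application case and preservation of countable infima doing the rest --- is exactly the paper's route: it is the announced ``immediate adaptation'' of the proof of \cite[Theorem 4.23]{Manzo2012}. The genuine gap is your ``key continuity lemma'', which is false, and not incidentally so. The formula $(st)_{a,c}=\inf_{b}\{s_{b,c}+t_{a,b}\}$ you invoke is composition in $\LREL$, not coKleisli composition in $\LREL_!$: the latter \emph{promotes} its inner argument, i.e.\ uses it at arbitrary multiplicities, and therefore does not commute with infima in that argument. In particular your ``concrete'' identity $\RM{ev}\circ_!\langle t,s\rangle=\inf_{i,j}\RM{ev}\circ_!\langle t_i,s_j\rangle$ fails. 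Using the explicit description of $\RM{ev}\circ_!\langle f,g\rangle$ computed in the appendix (step 1 of the proof of Theorem~\ref{thm:modelsTaylor}), take $a\neq b$ in $X$, $y\in Y$, let $t$ have $t_{\emptyset,\langle[a,b],y\rangle}=0$ and $\infty$ elsewhere, and let $s^{(1)},s^{(2)}$ have $s^{(1)}_{\emptyset,a}=0$, $s^{(2)}_{\emptyset,b}=0$ and $\infty$ elsewhere. Then $\RM{ev}\circ_!\langle t,\min(s^{(1)},s^{(2)})\rangle$ has entry $0$ at $(\emptyset,y)$, because the two tokens demanded by the multiset $[a,b]$ can be fetched one from each $s^{(i)}$, while both $\RM{ev}\circ_!\langle t,s^{(i)}\rangle$ are identically $\infty$; so the left-hand side is $0$ and the right-hand side is $\infty$. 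This non-linearity of application in its argument is precisely the phenomenon the Taylor formula exists to circumvent: if your lemma were true, Theorem~\ref{thm:modelsTaylor} would be superfluous, and it would in any case produce an infimum over pairs $(t,u)$ with a \emph{single} $u\in\Te{Q}$ per term, which is not the shape of $\Te{PQ}$.

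The repair keeps your skeleton but replaces the blanket lemma by the three inf-preservation facts actually needed, each true for a specific reason: (i) $\circ_!$ preserves infima in its \emph{outer} argument (the outer matrix occurs exactly once in the defining formula), which handles the final precomposition with $\langle\RM{id},\infty\rangle$; (ii) pairing, $\Lambda^-$ and $D$ are mere reindexings of matrices; and (iii) $\star$ preserves infima in \emph{both} arguments --- not because $\circ_!$ does, but because $Du$ is linear in the differentiated coordinate (its matrix is $\infty$ off singleton multisets there), so that $s$ is used exactly once in $u\star s$; this is manifest in the explicit formula for $\star$ in step 3 of the same appendix proof, where $f$ and $g$ each occur exactly once. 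With (i)--(iii), substituting the induction hypotheses into the right-hand side of Theorem~\ref{thm:modelsTaylor} and pulling out the countable infima gives, as you state, $\inf_n\inf_{t,u_1,\dots,u_n}$ of the $\star$-iterates, each equal to $\model{\Der^{n}[t,u_1,\dots,u_n]0}$ by soundness of the CC$\partial\lambda$C interpretation of $\STDLC$; your variable and abstraction cases, and the closing remark that idempotency of $\min$ absorbs the coefficients $1/k!$, are fine as stated.
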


Using the results of the previous section, as well as the results above, we now deduce the following properties:

\begin{theorem}\label{thm:taylor}
Let $\mathcal S$ be one of $\RM{PCF}^{\Lawv},\STLC,\BSTLC,\STDLC$. Let $\Gamma\vdash_{\mathcal S}M:A$ and $a\in \model{A}$.
\begin{enumerate}
\item For $\mathcal S=\BSTLC$, $\model{\Gamma\vdash_{\mathcal S}M:A}^{!}_{a}$ is a tropical polynomial, and thus Lipschitz;

\item For $\mathcal S=\STDLC$, if $t\in \Te{M}$, then $\model{\Gamma\vdash_{\mathcal S}t:A}^{!}_{a}$ is Lipschitz;

\item For $\mathcal S=\STLC,\RM{PCF}^{\Lawv}$, then $\model{\Gamma\vdash_{\mathcal S}M:A}^{!}_{a}$ is locally Lipschitz;

\item For $\mathcal S=\STLC$, $\Te{M}$ decomposes $\model{\Gamma \vdash_{\STLC} M:A}^!_{a}$ as an $\inf_{t\in\Te{M}}\model{\Gamma\vdash_{\STDLC} t:A}^!_{a}$ of {Lipschitz} functions.
\end{enumerate}

\end{theorem}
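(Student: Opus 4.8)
The plan is to treat the four items separately, in each case reducing to a metric property of tps already established in Section~\ref{sec:tls} and, where the full exponential is involved, to the Taylor machinery of Corollary~\ref{cor:T(M)=M}. For item (1), I would first recall that, since $\model *$ is a singleton, every $\BSTLC$-type is interpreted by a \emph{finite} set. A straightforward induction on the typing derivation of Fig.~\ref{fig:rules} then shows that $\model{\Gamma\vdash_{\BSTLC}M:A}$ is a matrix over a \emph{finite} index set; hence each coordinate $\model{\Gamma\vdash_{\BSTLC}M:A}^!_a$ is a finite $\inf$ of affine maps, i.e.\ a tropical polynomial, and Corollary~\ref{prop:polylip} yields Lipschitz-continuity.

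For item (2), the key observation is that every $t\in\Te M$ is a \emph{resource term}, in which all applications are bounded (each is some $\mathsf D^k[\cdots]0$ with $k$ fixed). By induction on $t$ I would establish the invariant that each free variable $x_i:A_i$ is used at most a fixed number $d_i\in\mathbb N$ of times, so that $\model{\Gamma\vdash_{\STDLC}t:A}$ is supported, in the slot of $x_i$, on the bounded exponential $!_{d_i}\model{A_i}$. Proposition~\ref{prop:troplinear}(2) then gives $d_i$-Lipschitz-continuity in each variable separately, and taking the maximum over the finitely many variables yields joint Lipschitz-continuity. Note that, as the example $y\cdot\langle y\cdot\langle x\rangle\rangle$ shows, the resulting coordinate need \emph{not} be a polynomial (the defining $\inf$ genuinely ranges over infinitely many multisets); what matters is only that the \emph{multiplicity} of each variable stays bounded.

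For item (3), every $\model{\Gamma\vdash_{\mathcal S}M:A}^!_a$ is by definition a coordinate of the non-linear map $t^!$ attached to a morphism $t$ of $\LREL_!$, hence a tps (for $\RM{PCF}^{\Lawv}$ this uses that the fixpoint $\inf_n\mathrm{fix}^X_n$ is still a tps); Theorem~\ref{thmTLSlocLip} then gives at once its local Lipschitz-continuity on $(0,\infty)^{\model\Gamma}$. Finally, for item (4) I would invoke Corollary~\ref{cor:T(M)=M} to write $\model{\Gamma\vdash_{\STLC}M:A}=\inf_{t\in\Te M}\model{\Gamma\vdash_{\STDLC}t:A}$, and pass to coordinates using the pointwise identity $\bigl(\inf_{t}f_t\bigr)^!=\inf_{t}f_t^!$ (valid because all countable $\inf$s converge in $\Lawv$), obtaining $\model{\Gamma\vdash_{\STLC}M:A}^!_a=\inf_{t\in\Te M}\model{\Gamma\vdash_{\STDLC}t:A}^!_a$; each function under the $\inf$ is Lipschitz by item (2), which is exactly the claimed decomposition.

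The main obstacle is item (2): one must make the inductive invariant on variable multiplicity precise and check that it is preserved under the differential application $\mathsf D^k$ and under linear substitution, so that the interpretation genuinely arises from a \emph{bounded} exponential even though the associated tps is itself infinitary. In particular one has to verify that passing through $\mathsf D$ (which is linear in its differentiated argument) does not inflate the multiplicity of the remaining free variables. Once this bookkeeping is in place, the metric estimate is immediate from Proposition~\ref{prop:troplinear}(2), and items (1), (3), (4) follow by the routine reductions above.
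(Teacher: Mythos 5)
Your proposal is correct and follows essentially the same route as the paper: item (1) via finiteness of $\BSTLC$ type interpretations together with Proposition~\ref{prop:troplinear}(2)/Corollary~\ref{prop:polylip}, item (2) via the bounded-multiplicity invariant for resource terms and Proposition~\ref{prop:troplinear}(2), item (3) via Theorem~\ref{thmTLSlocLip}, and item (4) via Corollary~\ref{cor:T(M)=M} and the identity $\left(\inf_{t}f_t\right)^!=\inf_{t}f_t^!$. The only minor imprecision is in item (2): combining per-variable Lipschitz bounds into a joint one gives a constant like the \emph{sum} $\sum_i d_i$ (equivalently, viewing the whole context as one variable whose total multiplicity is bounded by $\sum_i d_i$), not the maximum, but this does not affect the qualitative Lipschitz claim.
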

\begin{proof}
1). From Proposition~\ref{prop:troplinear} 2.~and the remark that for any type of $\BSTLC$, $\model{A}$ is finite.
2.) From Proposition~\ref{prop:troplinear} 2.~observing that a resource term $t(x)$ may use a variable $x$ a fixed number $n$ times, so that its matrix lies in $\Lawv^{!_{n}X\times Y}$. 
3). From Theorem~\ref{thmTLSlocLip}.
4). It follows from \autoref{cor:T(M)=M} plus the fact that, for $(f_n)_{n\in\N}\subseteq\Lawv^{!X\times Y}$, we have 
$\left(\inf_{n\in\N} f_n\right)^!=\inf_{n\in\N} f_n^!$.
\end{proof}

%
%
%

We conclude our discussion with an application of the Taylor expansion in $\LREL_{!}$: as proved in the previous section, all tps are locally Lipschitz; now, Theorem \ref{thm:taylor} can be used to compute approximations of the Lipschitz constants of an actual higher-order program.

\begin{corollary}\label{cor:taylor}
Suppose $x: A \vdash_{\STLC}M:B$ and $\vdash_{\STLC} N:A$. 
Then for all $t\in \Te{M}$ such that $\model{t}^{!}(\model{N})\neq \infty$, and $\delta>0$, the tps $\model{ x:A\vdash_{\STLC}M:B}^{!}$ is $\frac{\model{t}^{!}(\model{N}+2\delta)}{\delta}$-Lipschitz over the open ball
 $B_{\delta}(\model{N})$.
\end{corollary}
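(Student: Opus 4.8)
The plan is to combine the Taylor-expansion identity of Corollary~\ref{cor:T(M)=M} with the concavity and monotonicity of tps, turning the local Lipschitz estimate behind Theorem~\ref{thmTLSlocLip} into an explicit bound evaluated along the diagonal. Write $f:=\model{x:A\vdash_{\STLC}M:B}^{!}$. By Corollary~\ref{cor:T(M)=M} we have $f=\inf_{t\in\Te{M}}\model{t}^{!}$, so that $f\leq \model{t}^{!}$ pointwise for \emph{every} $t\in\Te{M}$; moreover, by Proposition~\ref{prop:nondecr+conc}, $f$ is concave and non-decreasing, and $f\geq 0$. The guiding idea is that the Lipschitz constant of a non-negative concave function on a ball is controlled by its supremum on a slightly larger ball, and that for a \emph{monotone} function this supremum is attained at the ``top corner'' of the ball, where it can be over-approximated by any chosen $\model{t}^{!}$.

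First I would establish the core concavity estimate. Fix $y,z\in B_{\delta}(\model{N})$ with $y\neq z$, set $d:=\supnorm{z-y}$, and extend the segment past $z$ to the point $w:=z+\delta\,\frac{z-y}{d}$, which satisfies $\supnorm{w-\model{N}}\leq\supnorm{z-\model{N}}+\delta< 2\delta$, hence $w\in\overline{B_{2\delta}(\model{N})}$; note also that $y\leq \model{N}+2\delta$ coordinatewise since $y\in B_{\delta}(\model{N})$. As $z=(1-\lambda)y+\lambda w$ with $\lambda=\frac{d}{d+\delta}\leq \frac{d}{\delta}$, concavity gives $f(z)\geq(1-\lambda)f(y)+\lambda f(w)$, whence $f(y)-f(z)\leq\lambda\,(f(y)-f(w))\leq \lambda\, f(y)\leq \frac{d}{\delta}\,f(\model{N}+2\delta)$, using $f(w)\geq 0$ and monotonicity. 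Extending past $y$ instead symmetrises this, yielding $\absv{f(y)-f(z)}\leq \frac{f(\model{N}+2\delta)}{\delta}\,\supnorm{y-z}$, i.e.\ $f$ is $\frac{f(\model{N}+2\delta)}{\delta}$-Lipschitz on $B_{\delta}(\model{N})$.

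To conclude I would replace $f(\model{N}+2\delta)$ by $\model{t}^{!}(\model{N}+2\delta)$ using $f\leq\model{t}^{!}$; since each $\model{t}^{!}$ is a tropical polynomial (a resource term uses its variable a bounded number of times, cf.\ Theorem~\ref{thm:taylor}) and is non-decreasing, the hypothesis $\model{t}^{!}(\model{N})\neq\infty$ guarantees $\model{t}^{!}(\model{N}+2\delta)<\infty$, so the stated constant is finite; minimising over admissible $t$ then gives the best such approximation. The estimate is essentially the concave instance underlying Theorem~\ref{thmTLSlocLip} (and Cobzas' Lemma~6.10), sharpened from the radius $3\delta$ there to $2\delta$ by working on the \emph{open} ball and extending only by a single $\delta$.

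The main obstacle I expect is the boundary behaviour on $\overline{\BB R}_{\geq 0}^{\model{A}}=[0,\infty]^{\model{A}}$: the extended point $w$ has coordinates bounded below only by $\model{N}-2\delta$, so to keep $w$ in the domain, indeed in the interior $(0,\infty)^{\model{A}}$ on which tps are continuous and locally Lipschitz (Theorem~\ref{thm:cont}, Theorem~\ref{thmTLSlocLip}), one needs $\model{N}$ sufficiently interior, e.g.\ $\model{N}>2\delta$ coordinatewise. Additional care is required on coordinates where $\model{N}\in\{0,\infty\}$ (as happens for pure $\STLC$ terms, whose interpretation is discrete): there the sup-ball degenerates and the estimate must be read on the remaining genuinely interior coordinates. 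Handling these degeneracies, rather than the concavity inequality itself, is the delicate point.
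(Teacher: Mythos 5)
Your proposal has the same skeleton as the paper's proof: both rest on the pointwise bound $\model{M}^{!}=\inf_{t\in\Te{M}}\model{t}^{!}\leq\model{t}^{!}$ from Corollary~\ref{cor:T(M)=M} (Theorem~\ref{thm:taylor}.4), on concavity, monotonicity and non-negativity of tps (Proposition~\ref{prop:nondecr+conc}), and on a ball-extension estimate for concave functions. The difference is packaging, and here your version is actually the better one: the paper invokes Theorem~\ref{thmTLSlocLip} as a black box, getting the constant $\frac{1}{\delta}\max_{\overline{B_{3\delta}(\model{N})}}\model{M}^{!}$, bounds it by the same max of $\model{t}^{!}$, and evaluates that max at the top corner, ending with $\frac{1}{\delta}\model{t}^{!}(\model{N}+3\delta)$ --- radius $3\delta$, which does not match the $2\delta$ announced in the statement. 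By re-running the extension argument directly (extending only $\delta$ beyond $z$), you land on the stated $2\delta$ constant (indeed even $\model{t}^{!}(\model{N}+\delta)$ would do), so your inlined computation is the one that actually delivers the corollary as written. One slip to repair: it is false that $\model{t}^{!}$ is a tropical polynomial for $t\in\Te{M}$ --- the paper's own example $t=y\cdot\langle y\cdot\langle x\rangle\rangle\in\Te{y(yx)}$ gives an infinite $\inf$. What is true, and what your parenthetical actually uses, is that a resource term uses each variable a bounded number of times, so its matrix lies in $\Lawv^{!_{n}X\times Y}$; hence $\model{t}^{!}$ is globally $n$-Lipschitz (Proposition~\ref{prop:troplinear}.2, Theorem~\ref{thm:taylor}.2) and $\model{t}^{!}(\model{N}+2\delta)\leq\model{t}^{!}(\model{N})+2n\delta<\infty$. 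This is cleaner than the paper's own appeal to continuity, which by itself does not exclude the value $\infty$ at the larger point.

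The boundary issue you flag at the end is a genuine gap, and your instinct that it, not the concavity computation, is the delicate point is exactly right --- but note that it afflicts the paper's proof identically and more severely than you suggest. For pure $\STLC$ terms the matrix $\model{N}$ is $\{0,\infty\}$-valued, so $\model{N}$ is \emph{never} in $(0,\infty)^{\model{A}}$, no choice of small $\delta$ restores interiority, and Theorem~\ref{thmTLSlocLip} (whose proof places auxiliary points in $\overline{B_{2\delta}(z)}$, hence needs $\overline{B_{3\delta}(\model{N})}$ inside the open convex domain) never literally applies; your $w=z+\delta\frac{z-y}{d}$ has the same defect. Worse, the gap cannot be closed using only the properties both proofs rely on (concave, non-decreasing, non-negative, dominated by a finite Lipschitz $g$): take $f(x_{a},x_{b})=\min\{10x_{a},x_{a}+x_{b}\}$, $g(x_{a},x_{b})=x_{a}+x_{b}$ and $\model{N}=(0,0)$; the claimed constant is $\frac{g(2\delta,2\delta)}{\delta}=4$, yet $f$ has slope $10$ in $x_{a}$ wherever $x_{b}>9x_{a}$, including inside $B_{\delta}((0,0))$. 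So the corner case is not a removable technicality: any correct proof needs either an interiority hypothesis on $\model{N}$ (as your $\model{N}>2\delta$ coordinatewise) or finer information about how $\model{M}^{!}$, $\model{t}^{!}$ and $\model{N}$ are jointly produced by the semantics, which neither your write-up nor the paper supplies. In short: you reproduce and sharpen the paper's argument, inherit its one real gap, and --- unlike the paper --- have the merit of naming it.
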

\begin{proof}
Thm.~\ref{thmTLSlocLip} yields the estimate $\max_{\overline{B_{3\delta}(\model{N})}}\model{M}^{!}$. As from Thm.~\ref{thm:taylor} 4.~it follows that $\model{t}^{!}\geq \model{M}^{!}$, we deduce that $K=\max_{\overline{B_{3\delta}(\model{N})}}\model{t}^{!}\geq \max_{\overline{B_{3\delta}(\model{N})}}\model{M}^{!}$ is also a local Lipschitz constant for $\model{M}^{!}$. Moreover, since $\model{t}^{!}$ is concave and non-decreasing, the $\max$ of $\model{t}^{!}$ is attained at the maximum point of $\overline{B_{3\delta}(\model{N})}$, that is, 
$K= \model{t}^{!}(x+3\delta)$. Finally, from $\model{t}^{!}(\model{N})<\infty$ and the continuity of $\model{t}^{!}$ we deduce $K<\infty$.
\end{proof}

\begin{example}
Consider again the term $M=zxx$ from Example \ref{ex:zxx}. 
%
 The (generalized) tps $\model{M}^{!}(x)(y)= \inf_{n,n'\in \mathbb N}\{y_{[(n,n')]}+(n+n')x\}$ is not (globally) Lipschitz: for any 
$L>0$, choose a natural number $N>L$, let $Y\in \Lawv^{\mathcal M_{\mathrm{fin}}(\mathbb N\times \mathbb N)}$ be such that $Y_{\mu}<\infty$ only if $\mu=[(n,n')]$ with $n+n'\geq N$; then $|\model{M}^{!}(x)(Y)- \model{M}^{!}(x+\epsilon)(Y)|\geq N\epsilon > L\epsilon$. 
Now take the approximant $t= z \langle x^{N-1}\rangle \langle x\rangle \in \Te{M}$ (chosen so that $\model t^{!}(x)(Y)<\infty$). Its interpretation is the monomial 
$\model{t}^{!}(x)(Y) = Y_{[(N-1,1)]}+Nx$. We can then compute a Lipschitz-constant for $\model{M}^{!}$ around $\langle x,Y\rangle$
as $\frac{1}{\delta}\model{t}^{!}(\langle x,Y\rangle+\delta)= 3N+3 + \frac{Y_{[(N-1,1)]}+Nx}{\delta}$. 
\end{example}

\section{Generalized Metric Spaces and $\Lawv$-Modules}\label{sec:GMS}

As we have seen, the morphisms of $\LREL$ can be seen as continuous functions between the $\Lawv$-modules $\Lawv^{X}$, when the latter are taken with the metric induced by the $\infty$-norm. This viewpoint gives a metric flavor to $\LREL$, and allowed us to relate differential and metric structure. Yet, how far can this correspondence be pushed?
In particular, is this correspondence restricted to $\Lawv$-modules of the form $\Lawv^{X}$ (i.e.~with a fixed base), or does it hold in some sense for arbitrary $\Lawv$-modules? Is this correspondence restricted to the $\infty$-norm metric, or does it hold for other metrics too?

\subsection{$\Lawv$-Modules and Cocomplete $\Lawv$-Categories}

An answer to the questions above comes from an elegant categorical correspondence between tropical linear algebra and the theory of \emph{generalized metric spaces}, initiated by Lawvere's pioneering work \cite{Lawvere1973}, and at the heart of the emergent field of \emph{monoidal topology} \cite{Hofmann2014, Stubbe2014}.

%
%
%


On the one hand we have $\Lawv$-modules: these are triples $(M,\preceq, \star)$ where $(M, \preceq)$ is a sup-lattice, and $\star: \Lawv \times M \to M$ is a continuous (left-)action of $\Lawv$ on it, where continuous means that $\star$ commutes with both joins in $\Lawv$ and in $M$. 
A $\Lawv$-module homomorphism is a map $f:M\to N$ commuting with both joins and the $\Lawv$-action. We let $\Mod$ indicate the category of $\Lawv$-modules and their homomorphisms. 

On the other hand we have Lawvere's \emph{generalized metric spaces}  \cite{Lawvere1973, Hofmann2014, Stubbe2014}:
Lawvere was the first to observe that a metric space can be described as a \emph{$\Lawv$-enriched} category. Indeed, spelling out the definition, a $\Lawv$-enriched category (in short, a $\Lawv$-category) is given by a set $X$ together with a ``hom-set'' $X(-,-):X\times X\to \Lawv$ satisfying 
$0  \geq X(x,x)$ and $X(y,z)+X(x,y)\geq  X(x,z)$. 
This structure clearly generalizes the usual definition of metric spaces, which are indeed precisely the  
$\Lawv$-categories which are \emph{skeletal} (i.e.~$X(x,y)=0$ implies $x=y$) and \emph{symmetric} (i.e.~$X(x,y)=X^{\mathrm{op}}(x,y)$, where $X^{\mathrm{op}}(x,y)=X(y,x)$).
A basic example of $\Lawv$-category is $\Lawv$ itself, with the distance $\Lawv(x,y)=y \menus x$. 

Moreover, a $\Lawv$-enriched \emph{functor} between $\Lawv$-categories is nothing but a non-expansive map $f:X\to Y$, since functoriality reads as $Y(f(x),f(y))\leq X(x,y)$.
Functors of shape $\Phi: X\times Y^{\op}\to \Lawv$ are called \emph{distributors} and usually noted $\Phi: Y \pfun X$. Notice that distributors $\Phi: Y\pfun X$ and $\Psi: Z\pfun Y$ can be composed just like 
ordinary matrices in $\LREL$: $\Psi\diamond \Phi : Z\pfun X$ is given by
$(\Psi\diamond \Phi)_{z,x}=\inf_{y\in Y}\Psi(z,y)+\Phi(y,x)$.

Lawvere also observed that the usual notion of Cauchy-completeness can be formulated, in this framework, as the existence of suitable colimits \cite{Lawvere1973}. Let us recall the notion of weighted colimit in this context:
 
\begin{definition}[weighted colimits]
Let $X,Y,Z$ be $\Lawv$-categories,
$\Phi: Z\pfun Y$ be a distributor and  $f:Y\to X$ be a functor.
A functor $g:Z\to X$ is the \emph{$\Phi$-weighted colimit of $f$ over $X$}, noted $\colim(\Phi,f)$, if for all $z\in Z$ and $x\in X$
\begin{align}
X(g(z),x)= \sup_{y\in Y}\left\{X(f(y),x)\menus \Phi(y,z)\right\}
\end{align} 

\end{definition}

A functor $f:X\to Y$ is said \emph{cocontinuous} if it commutes with all existing weighted colimits in $X$, i.e.~$f(\colim(\Phi,g))=\colim(\Phi,f\circ g)$. A $\Lawv$-enriched category.
A $\Lawv$-category $X$ is said \emph{cocomplete} if all weighted colimits over $X$ exist. 
We let $\GMet$ indicate the category of cocomplete $\Lawv$-categories and cocontinuous $\Lawv$-enriched functors as morphisms.

Observe that the usual Cauchy completeness for a $\Lawv$-category $X$ follows from its cocompleteness. Indeed, a Cauchy sequence $(x_{n})_{n\in \mathbb N}$ in $X$ yields two adjoint distributors $\phi^{*}:1\pfun X$ and $\phi_{*}:X\pfun 1$, where 
$\phi^{*}(x')=\lim_{n\to \infty}X(x_{n},x')$ and
$\phi_{*}(x')=\lim_{n\to \infty}X(x',x_{n})$. Hence,  
$\colim(\phi^{*},1_{X}):1\to X$ must be a point $x$ satisfying $0=X(x,x)=\sup_{y\in X}\lim_{n\to \infty}(X(y,x)\menus X(x_{n},y))$, which implies $\lim_{n\to \infty}X(x_{n},x)=0$. 


It turns out that the notions of $\Lawv$-module and cocomplete $\Lawv$-category are indeed equivalent. More precisely, the categories $\Mod$ and $\GMet$ are isomorphic \cite{Stubbe2006}. 
First, any $\Lawv$-module $(M,\preceq, \star)$ can be endowed with the structure of a $\Lawv$-category by letting
$M(x,y) = \inf\left\{ \epsilon \mid \epsilon \star x\geq y\right\}$. Moreover, a homomorphism of $\Lawv$-modules induces a cocontinuous functor of the associated $\Lawv$-categories. 
Conversely, in cocomplete $\Lawv$-categories it is possible to introduce a continuous $\Lawv$-action
via suitable weighted colimits called \emph{tensors} (cf.~\cite{Stubbe2014}):

\begin{definition}[tensors]
Let $X$ be a $\Lawv$-category, $x\in X$ and $\epsilon \in \Lawv$. The \emph{tensor of $x$ and $\epsilon$}, if it exists, is the colimit $\epsilon \otimes x:= \colim( [\epsilon],\Delta x)$, where
$[\epsilon]: \{\star\}\pfun \{\star\}$ is the constantly $\epsilon$ distributor
and $\Delta x:\{\star\}\to X$ is the constant functor. 
\end{definition}

A cocomplete $\Lawv$-category can thus be endowed with a $\Lawv$-module structure with order given by  $x\preceq_{X}y $ iff $X(y,x)=0$, and 
action given by tensors $\epsilon \otimes x$. 
Moreover, a cocontinuous functor between cocomplete $\Lawv$-categories is the same as a homomorphism of the associated $\Lawv$-modules. 

\subsection{Exponential and Differential Structure of $\Mod\simeq\GMet$}

We now show that the correspondence between $\Lawv$-modules and cocomplete $\Lawv$-categories lifts to a model of the differential $\lambda$-calculus, generalizing the co-Kleisli category $\LREL_{!}$. 

In order to define a Lafont exponential $!$ over $\Mod$, we exploit a well-known recipe from \cite{Mellies2018, Manzo2013}.
The first step is to define a symmetric algebra $\Sym_{n}(M)$ as the equalizer of all permutative actions on $n$-tensors $M\otimes \dots \otimes M$.
Notice that each element of $!_{n}M$ can be described as a join of ``multisets''
$[x_{1},\dots, x_{n}]$, where the latter is the equivalence class of the tensors 
$x_{1}\otimes \dots \otimes x_{n}\in M^{\otimes_{n}}$ under the action of permutations $\sigma\in \F S_{n}$.
The $\Lawv$-module $!_{n}M$ is a cocomplete $\Lawv$-category with distance function defined on basic ``multisets'' as follows:
\begin{align}
(!_{n}M)(\alpha,\beta)=
\sup_{\sigma\in \F S_{n}}\inf_{\tau\in \F S_{n}}\sum_{i=1}^{n}
X(x_{\sigma(i)},y_{\tau(i)})
\end{align}
where $\alpha=[x_{1},\dots, x_{n}]$ and $\beta= [y_{1},\dots, y_{n}]$, and extended to arbitrary elements $\alpha=\bigvee_{i}\alpha_{i}$ and $\beta=\bigvee_{j}\beta_{j}$
by $(!_{n}M)(\alpha,\beta)=\sup_{i}\inf_{j}(!_{n}M)(\alpha_{i},\beta_{j})$.

Next, we define $!M$ as the infinite biproduct $\prod_{n}!_{n}M$, yielding the cofree commutative comonoid over $M$ (cf.~\cite[Proposition 1]{Mellies2018}).
Using the fact that biproducts commute with tensors in $\LREL$, by standard results \cite{Mellies2018}, we obtain that the coKleisli category $\Mod_{!}$ is a CCC.
Moreover, the constructions for $\Mod$ generalize those of $\LREL$, in the sense that $!(\Lawv^{X})\simeq \Lawv^{\multiset(X)}$ and that $\Mod_{!}(\Lawv^{X},\Lawv^{Y})\simeq \LREL_{!}(X,Y)$.

Finally, since the coKleisli category of a Lafont category with biproducts is always a CC$\partial$C \cite[Theorem 21]{LemayCALCO2021}, we can endow $\Mod_{!}$ with a  differential operator $E$,  generalizing $D^{!}$, given by 
\begin{align}\label{eq:dermod}
Ef(\alpha)=
\bigvee\left\{
f(\beta\cup [x]) \ \Big \vert  \ 
\iota_{n}(\beta)\otimes \iota_{1}(x) \leq S(\alpha)
\right\}
\end{align}
where $\iota_{k}: M_{k}\to \prod_{i\in I}M_{i}$ is the injection morphism given by $\iota_{k}(x)( k)=x$ and $\iota_{k}(x)(i\neq k)=\infty$,
and $S: !(M\times N)\to !M\otimes !N$ is the Seely isomorphism \cite{Mellies2018}, and conclude that:
\begin{theorem}\label{thm:lemay}
($\Mod_{!}/\GMet_{!},E$) is a CC$\partial$C.
\end{theorem}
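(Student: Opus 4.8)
The plan is to reduce the statement to the cited structural result \cite[Theorem 21]{LemayCALCO2021}, which guarantees that the coKleisli category of any Lafont category with biproducts is a CC$\partial$C. Accordingly, I would verify exactly two things: that $\Mod$ (equivalently $\GMet$) is a Lafont category with biproducts, and that the canonical differential operator produced by that theorem coincides with the explicit operator $E$ of \eqref{eq:dermod}. Once both are in place, the cartesian closed and differential axioms are inherited verbatim from the cited theorem.

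First I would record that $\Mod$ has biproducts. Each hom-object $\Mod(M,N)$ is a sup-lattice, hence a commutative monoid under join (with the constant-$\infty$ homomorphism as zero), so $\Mod$ is enriched over commutative monoids; consequently finite products and coproducts coincide, and more generally the set-indexed products and coproducts needed to form the infinite biproduct $\prod_{n}!_{n}M$ exist and agree. Next I would check the Lafont structure: $\Mod$ carries the symmetric monoidal structure given by the tensor product of $\Lawv$-modules (classifying bimorphisms), and the construction $!M=\prod_{n}\Sym_{n}(M)$ exhibits the cofree commutative comonoid over $M$. This is precisely the recipe of \cite[Proposition 1]{Mellies2018}, which applies here because $\Mod$ is symmetric monoidal with biproducts in which countable biproducts commute with the tensor, as noted in the text. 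Hence $\Mod$ is a Lafont category with biproducts.

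Applying \cite[Theorem 21]{LemayCALCO2021} then yields at once that $\Mod_{!}$ is a CC$\partial$C, its cartesian closure being the one already obtained from the standard Lafont--Seely machinery. It remains to identify the differential structure explicitly. The canonical operator supplied by the theorem is built from the comultiplication and codereliction of the exponential; I would unwind this definition on the basic ``multiset'' presentation of $!_{n}M$ and push it through the Seely isomorphism $S$, expecting to recover precisely the formula \eqref{eq:dermod}. Concretely, this computation generalizes the earlier verification that $D$ is the tropical derivative on $\LREL_{!}$, transported along the isomorphisms $!(\Lawv^{X})\simeq\Lawv^{\multiset(X)}$ and $\Mod_{!}(\Lawv^{X},\Lawv^{Y})\simeq\LREL_{!}(X,Y)$ recorded above.

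The main obstacle is exactly this last identification, together with the underlying check that the equalizers defining $\Sym_{n}(M)$ genuinely exist in the $\Lawv$-enriched setting and that countable biproducts really do distribute over the tensor in $\Mod$ (the analogue, at the level of arbitrary $\Lawv$-modules, of the fact used for $\LREL$). I expect the enriched-limit bookkeeping for $\Sym_{n}(M)$ and the matching of the abstractly-defined codereliction with the concrete ``add one copy $[x]$'' prescription in \eqref{eq:dermod} to absorb the bulk of the work; everything downstream is then a direct appeal to the cited theorem.
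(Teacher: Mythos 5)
Your proposal follows essentially the same route as the paper's own proof: establish that $\Mod$ is a Lafont SMCC with (countable) biproducts commuting with the tensor, using the symmetric-algebra construction $!M=\prod_n \Sym_n(M)$ and \cite[Proposition 1]{Mellies2018} for the cofree commutative comonoid, then invoke \cite[Theorem 21]{LemayCALCO2021} (the paper routes this through the differential-category structure of $\Mod$ and \cite[Proposition 3.2.1]{Blute2009}) to conclude that $\Mod_{!}$ is a CC$\partial$C, and finally identify the induced differential with the explicit formula \eqref{eq:dermod}. The only differences are cosmetic: you derive the biproducts directly from sup-lattice enrichment where the paper cites \cite[Theorem 4.7.11]{Russo2007}, and you rightly flag the matching of the abstract codereliction with \eqref{eq:dermod} as the remaining computation, a step the paper likewise asserts rather than carries out.
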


\section{Related Work}\label{section7} 

The applications of tropical mathematics in computer science abound, e.g.~in automata theory \cite{Chua1992, Simon}, machine learning \cite{Maragos2021, Pachter2004, Zhang2018}, optimization \cite{Akian2011, Akian2012}, and convex analysis \cite{Lucet2009}. 

As we said, the relational semantics over the tropical semiring was quickly explored in \cite{Manzo2013}, to provide a ``best case'' resource analysis of a $\mathrm{PCF}$-like language with non-deterministic choice. 
The connections between differential $\lambda$-calculus (and differential linear logic), relational semantics, and non-idempotent intersection types are very well-studied (see \cite{decarvalho2018}, and more recently, \cite{Mazza2016} for a more abstract perspective, and \cite{Olimpieri2021, Galal2021} for a 2-categorical, or proof-relevant, extension).
\emph{Probabilistic coherent spaces} \cite{Ehrhard2011}, a variant of  the relational semantics, provide an interpretation of higher-order probabilistic programs
as analytic functions. In \cite{Ehrhard2022} it was observed that such functions satisfy a local Lipschitz condition somehow reminiscent of our examples in Section \ref{section2}.

The study of linear or bounded type systems for sensitivity analysis was initiated in \cite{Girard92tcs} and later developed \cite{Schopp, SchoppDalLago, Reed2010}.
Related approaches, although not based on metrics, are provided by \emph{differential logical relations} \cite{dallago} and \emph{change action} models \cite{Picallo2019}.
More generally, the literature on program metrics in denotational semantics is vast. Since at least \cite{VANBREUGEL20011} metric spaces, also in Lawvere's generalized sense \cite{Lawvere1973}, have been exploited as an alternative framework to standard, domain-theoretic, denotational semantics; notably models of $\STLC$ and $\mathrm{PCF}$ based on 
\emph{ultra}-metrics and \emph{partial} metrics have been proposed  \cite{Escardo1999,PistoneLICS, PistoneFSCD2022}.

Motivated by connections with computer science and fuzzy set-theory, 
the abstract study of generalized metric spaces in the framework of \emph{quantale}- or even \emph{quantaloid}-enriched categories has led to a significant literature in recent years (e.g.~\cite{Hofmann2014, Stubbe2014}), 
and connections with tropical mathematics also have been explored e.g.~in \cite{Fuji, Willerton2013}. Moreover, applications of quantale-modules to both logic and computer science have also been studied \cite{Abramsky1993b, Russo2007}.

Finally, connections between program metrics and the differential $\lambda$-calculus have been already suggested in \cite{PistoneLICS}; moreover, \emph{cartesian difference categories} \cite{Picallo2020} have been proposed as a way to relate derivatives in differential categories with those found in change action models.

%
%
%
%

\section{Conclusion and Future Work}\label{section8} 
%
%
%

The main goals of this paper are two. Firstly,  to
demonstrate the existence of a conceptual bridge between two well-studied quantitative approaches to higher-order programs, and to highlight the possibility of transferring results and techniques from one approach to the other. 
Secondly, to suggest that tropical mathematics, a
field which has been largely and successfully applied in computer science, could be used for the quantitative analysis of functional programming languages. 
While the first goal was here developed in detail, and at different levels of abstraction, for the second goal we only sketched a few interesting directions, and we leave their development to a second paper of this series. 

While the main ideas of this article only use basic concepts from the toolbox of tropical mathematics, an exciting direction is that of looking at potential applications of more advanced tools from tropical algebraic and differential geometry (e.g.~Newton polytopes, tropical varieties, tropical differential equations). Another interesting question is how much of our results on tps and their tropical Taylor expansion can be extended to the abstract setting of generalized metric spaces and continuous functors. 


\bibliography{main}

\appendix

\section{Proofs from \autoref{section2}: Theorem \ref{theorem:fepsilon}}

We give below the complete statement of Theorem \ref{theorem:fepsilon} together with its proof.

First, let us set the following:
\begin{definition}
 Let $\preceq$ be the product order on $\N^k$ (i.e.\ for all $ m  , n  \in \N^{K}$, $ m  \preceq  n  $ iff $m_{i}\leq n_{i}$ for all $1\leq i\leq K$).
 Of course $ m  \prec  n  $ holds exactly when $ m  \preceq  n  $ and $m_{i}<n_{i}$ for at least one $1\leq i\leq K$.
 Finally, we set $ m  \prec_{1} n  $  iff
$ m  \prec  n  $ and $\sum_{i=1}^{K}n_{i}-m_{i}=1$ (i.e.\ they differ on exactly one coordinate).
\end{definition}

We will exploit the following:

\begin{remark}\label{rmk:AC}
\text{If $U\subseteq \N^{K}$ is infinite, then $U$ contains an infinite ascending chain $ m  _{0}\prec  m  _{1} \prec  m  _{2} \prec \dots$.}.

This is a consequence of K\"onig Lemma (KL): consider the directed acyclic graph $(U,\prec_{1})$, indeed a $K$-branching tree; if there is no infinite ascending chain $  m  _{0}\prec  m  _{1} \prec  m  _{2} \prec \dots$, then in particular there is no infinite ascending chain $  m  _{0}\prec_{1}  m  _{1} \prec_{1}  m  _{2} \prec_{1} \dots$ so the tree $U$ has no infinite ascending chain; then by KL it is finite, contradicting the assumption. 
\end{remark}

\begin{theorem}[Theorem \ref{theorem:fepsilon}]
Let $k\in\N$ and $f:\Lawv^k\to\Lawv$ a tps with matrix $\matr f:\N^k\to\Lawv$.
 For all $0<\epsilon<\infty$, there is $\C F_\epsilon \subseteq \N^k$ such that:
\begin{enumerate}
 \item $\C F_\epsilon$ is finite
 \item If $\mathcal{F}_\epsilon= \emptyset$ then $f( x ) = +\infty$ for all $ x \in \Lawv^k$
 \item If $f( x _0) = +\infty$ for some $ x _0\in [\epsilon,\infty)^{K}$ then $\mathcal{F}_\epsilon= \emptyset$
 \item The restriction of $f$ on $[\epsilon,\infty]^k$ coincides  with the tropical {polynomial} \[P_\epsilon(x):=\min\limits_{n\in \C F_\epsilon}\set{nx+\matr f(n)}\]
where $nx:=\sum_{i=1}^k n_ix_i$.
\end{enumerate}
\end{theorem}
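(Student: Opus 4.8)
The plan is to single out, for each threshold $\epsilon$, the sub-collection of exponents that can possibly realize the infimum on $[\epsilon,\infty]^k$, and then to prove that this collection is finite. Concretely, I would define
\[
\C F_\epsilon := \set{\, n\in\N^k \mid \matr f(n)<+\infty \text{ and } \matr f(m)>\matr f(n)+\epsilon \text{ for every } m\prec n \,},
\]
namely the exponents with finite coefficient that are not ``dominated'' by a strictly smaller one. Everything then hinges on showing, on the one hand, that exponents outside $\C F_\epsilon$ are redundant on $[\epsilon,\infty]^k$, and on the other hand that $\C F_\epsilon$ is finite.

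The engine of the argument is one elementary estimate. If $m\prec n$ and $x\in[\epsilon,\infty]^k$, then $(n-m)x\ge\epsilon\sum_{i}(n_i-m_i)\ge\epsilon$, since $n-m$ has nonnegative integer entries summing to at least $1$ and each $x_i\ge\epsilon$ (this reading stays valid when some $x_i=+\infty$). Consequently, whenever $\matr f(m)\le\matr f(n)+\epsilon$ one gets $mx+\matr f(m)\le mx+\matr f(n)+\epsilon\le nx+\matr f(n)$ throughout $[\epsilon,\infty]^k$, so the monomial indexed by $m$ dominates the one indexed by $n$ there. This is exactly what licenses discarding every $n\notin\C F_\epsilon$.

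The step I expect to be the crux is finiteness of $\C F_\epsilon$, and I would argue it by contradiction. If $\C F_\epsilon$ were infinite, then Remark~\ref{rmk:AC} (via K\"onig's Lemma) supplies an infinite ascending chain $m_0\prec m_1\prec\cdots$ inside it. Since each $m_{i+1}$ lies in $\C F_\epsilon$ and $m_i\prec m_{i+1}$, the defining condition forces $\matr f(m_i)>\matr f(m_{i+1})+\epsilon$, and hence $\matr f(m_0)>\matr f(m_i)+i\epsilon\ge i\epsilon$ for all $i\in\N$; as $\matr f(m_0)<+\infty$, this violates the Archimedean property of $\R$.

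The remaining items are bookkeeping built on the same two ideas. For item~2, I would show by well-founded induction on $\prec$ that $\C F_\epsilon=\emptyset$ forces $\matr f\equiv+\infty$: the base case $0^k$ has no $\prec$-predecessor, so $0^k\notin\C F_\epsilon$ already gives $\matr f(0^k)=+\infty$, and the inductive step propagates $+\infty$ upward using $\epsilon<+\infty$; thus $f\equiv+\infty$. Item~3 is immediate, since $f(x_0)=+\infty$ with $x_0\in[\epsilon,\infty)^k$ makes every $nx_0$ finite and therefore forces $\matr f(n)=+\infty$ for all $n$, whence $\C F_\epsilon=\emptyset$. For item~4, note first that $P_\epsilon\ge f$ trivially, a minimum over the subset $\C F_\epsilon$ dominating the global infimum; conversely, the domination estimate together with a well-founded induction on $\prec$ shows that each $n\notin\C F_\epsilon$ is out-performed on $[\epsilon,\infty]^k$ by some $m\in\C F_\epsilon$ (either $\matr f(n)=+\infty$, in which case any element of a nonempty $\C F_\epsilon$ works, or one descends along a witness $m\prec n$). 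Hence the infimum defining $f$ may be taken over $\C F_\epsilon$ alone, and being finite it is attained, so $f$ coincides with the tropical polynomial $P_\epsilon$ on $[\epsilon,\infty]^k$.
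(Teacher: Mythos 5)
Your proposal is correct and follows essentially the same route as the paper's proof: the same definition of $\C F_\epsilon$ (exponents with finite coefficient not dominated within $\epsilon$ by a $\prec$-smaller one), the same K\"onig-lemma/Archimedean contradiction for finiteness, the same well-founded inductions for items 2 and 4, and the same key estimate $\epsilon \leq (n-m)x$ on $[\epsilon,\infty]^k$. The only (cosmetic) difference is that you phrase the descent in item 4 as well-founded induction on $\prec$, which is in fact slightly cleaner than the paper's stated induction on $\prec_1$.
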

\begin{proof}
We let $\mathcal F_\epsilon$ to be the complementary in $\N$ of the set:
\[
 \set{ n  \in\N^{K} \mid \textit{either } \hat f ( n  )=+\infty \textit{ or there is }  m  \prec  n  \textit{ s.t.\ } \hat f( m  )\leq\hat f( n  )+\epsilon}.
\]
In other words, $ n  \in\mathcal F_\epsilon$ iff $\hat f( n  )<+\infty$ and for all $ m  \prec  n  $, one has $\hat f( m  )>\hat f( n  )+\epsilon$.

1).
Suppose that $\mathcal F_\epsilon$ is infinite; then, using Remark~\ref{rmk:AC}, it contains an infinite ascending chain
\[\set{ m  _0\prec  m  _1\prec\cdots}.\]
By definition of $\mathcal F_\epsilon$ we have then:
\[+\infty>\hat f( m  _0)>\hat f( m  _1)+\epsilon>\hat f( m  _2)+2\epsilon>\cdots\]
so that $+\infty>\hat f( m  _0)>\hat f( m  _{i})+i\epsilon\geq i\epsilon$ for all $i\in\N$.
This contradicts the Archimedean property of $\R$.

2).
We show that if $\mathcal F_\epsilon=\emptyset$, then $\hat f( n  )=+\infty$ for all $ n  \in\N^{K}$.
This immediately entails the desired result.
We go by induction on the well-founded order $\prec$ over $ n  \in\N^{K}$:

- if $ n  =0^{K}\notin\mathcal F_\epsilon$, then $\hat f( n  )=+\infty$, because there is no $ m  \prec n  $.

- if $ n  \notin\mathcal F_\epsilon$, with $ n  \neq 0^{K}$ then either $\hat f( n  )=+\infty$ and we are done, or there is $ m  \prec  n  $ s.t.\ $\hat f( m  )\leq \hat f( n  )+\epsilon$.
By induction $\hat f( m  )=+\infty$ and, since $\epsilon<+\infty$, this entails $\hat f( n  )=+\infty$.

3).
If $f( x _0)=+\infty$ with $ x _0\in [\epsilon,\infty)^{K}$, then necessarily $\hat f( n  )=+\infty$ for all $ n  \in\N^{K}$.
Therefore, no $ n  \in\N^{K}$ belongs to $\mathcal F_\epsilon$.

4).
We have to show that $f( x )=P_\epsilon( x )$ for all $ x \in [\epsilon,+\infty]^{K}$.
By 1), it suffices to show that we can compute $f( x )$ by taking the $\inf$, that is therefore a $\min$, only in $\mathcal F_\epsilon$ (instead of all $\N^{K}$).
If $\mathcal F_\epsilon=\emptyset$ then by 2) we are done (remember that $\min\emptyset := +\infty$).
If $\mathcal F_\epsilon\neq\emptyset$, we show that for all $ n  \in\N^{K}$, if $ n   \notin\mathcal F_\epsilon$, then there is $ m  \in\mathcal F_\epsilon$ s.t.\ $\hat f( m  )+ m   x  \leq \hat f( n  )+ n   x $.
We do it again by induction on $\prec_{1}$:

- if $ n  =0^{K}$, then from $\mathbf  n\notin \mathcal F_{\epsilon}$, by definition of $\mathcal F_\epsilon$, we have $\hat f( n  )=+\infty$ (because there is no $ n  '\prec n  $).
So any element of $\mathcal F_\epsilon\neq\emptyset$ works.

- if $ n  \neq 0^{K}$, then we have two cases:
either $\hat f( n  )=+\infty$, in which case we are done as before by taking any element of $\mathcal F_\epsilon\neq\emptyset$.
Or $\hat f( n  )<+\infty$, in which case (again by definition of $\mathcal F_\epsilon$) there is $ n  '\prec n  $ s.t.\ \begin{equation}\label{eq:n'neps} \hat f( n  ')\leq \hat f( n  )+\epsilon.\end{equation}
Therefore we have (remark that the following inequalities hold also for the case $x=+\infty$):
\[\begin{array}{rclr}
 \hat f( n  ')+ n  ' x  & \leq & \hat f( n  ) + \epsilon +  n' x  & \textit{by \eqref{eq:n'neps}} \\
 & \leq & \hat f( n  ) + ( n  - n  ') x  +  n  ' x  & \textit{because $\epsilon\leq\min x $ and $\min  x \leq( n  -  n') x $} \\
 & = & \hat f( n  )+  n   x . &
\end{array}\]
Now, if $ n  '\in\mathcal F_\epsilon$ we are done.
Otherwise $ n  '\notin\mathcal F_\epsilon$ and we can apply the induction hypothesis on it, obtaining an $ m  \in\mathcal F_\epsilon$ s.t.\ $\hat f( m  )+ m   x  \leq \hat f( n  ')+ n  ' x $.
Therefore this $ m  $ works.
\end{proof}

\section{Proofs from \autoref{section3}: $(\LREL, !_{n})$ is a model of $\BSTLC$}


Given SMCs $\C C,\C D$,  let $\B{SMC}_{l}(\C C, \C D)$ indicate the category of symmetric lax monoidal functors and monoidal natural transformations between them.
$\B{SMC}_{l}(\C C, \C D)$  is itself a SMC, with monoidal structure defined pointwise.

The set 
$\BB N$ can be seen as the category with objects the natural numbers and a morphism between $r$ and $r'$ precisely when $r\leq r'$. 
Moreover, $\BB N$ can be seen as a SMC in two ways:
\begin{itemize}

\item we indicate as $\BB N^{+}$ the SMC with monoidal product given by addition;
\item we indicate as $\BB N^{\times}$ the SMC with monoidal product given by multiplication.
\end{itemize}


\begin{definition}[cf.~\cite{Katsumata2018}]
A \emph{$\BB N$-graded linear exponential comonad} on a symmetric monoidal category $\C C$ is a tuple
$(D, w,c,\epsilon,\delta)$ where:
\begin{itemize}

\item $D: \BB N\to \B{SMC}_{l}(\C C, \C C)$ is a functor. We write 
$m_{r}:\{\star\} \to D(r)(\{\star\})$ and $m_{r,A,B}: D(r)(A)\otimes D(r)(B) \to D(r)(A\otimes B)$ for the symmetric lax monoidal structure of $D(r)$;

\item $(D,w,c): \BB N^{+}\to \B{SMC}_{l}(\C C, \C C)$ is a symmetric colax monoidal functor;

\item $(D,\epsilon,\delta):\BB N^{\times}\to (\B{SMC}_{l},\mathrm{Id},\circ) $ is a colax monoidal functor.

\end{itemize}
further satisfying the axioms below:
\begin{align}
w_{A}& =  w_{D(s)(A)}\circ \delta_{0,s,A}\\
w_{A} & = D(s)(w_{A} )\circ \delta_{s,0,A} \\
(\delta_{r,s,A}\otimes \delta_{r',s,A})\circ c_{rs,r's,A}
&=
c_{r,r',D(s)(A)}\circ \delta_{r+r',s,A}\\
m_{s,D(r)(A),D(r')(A)}\circ (\delta_{r,s,A}\otimes \delta_{s,r',A})\circ c_{sr,sr',A}&=
D(s)(c_{r,r',A})\circ \delta_{s,r+r',A}
\end{align}
\end{definition}

Concretely, the definition above requires 6 natural transformations:
\begin{align*}
m_{r} & :\{\star\}\to  D(r)(\{\star\})\\
m_{r,A,B}& :  D(r)(A)\otimes D(r)(B)\to  D(r)(A\otimes B)\\
w_{A}& : D(0)(A)\to \{\star\} \\
c_{r,r',A}& : D(r+r')(A) \to D(r)(A)\otimes D(r')(A) \\
\epsilon_{A}& : D(1)(A)\to A \\
\delta_{r,r',A}&: D(r r')(A)\to D(r)(D(r')(A))
\end{align*}
subject to the following list of equations:
\begin{itemize}
\item $D(r)$ is a lax monoidal functor:
\begin{align}
m_{r,A\otimes B,C}\circ (m_{r,A,B}\otimes D(r)(C)) & = 
m_{r,A, B\otimes C}\circ (D(r)(A)\otimes m_{r,B,C})\\
m_{r,A,\{\star\}}\circ (D(r)(A)\otimes m_{r}) & = D(r)(A)\\
m_{r,\{\star\}, B}\circ (m_{r}\otimes D(r)(B))&= D(r)(B)
\end{align}

\item $(D,w,c)$ is a symmetric colax monoidal functor:
\begin{align}
(c_{r,s,-}\otimes D(t)(-))\circ c_{r+s,t} & =
(D(r)(-)\otimes c_{s,t,-})\circ c_{r,s+t}\\
(D(r)(-)\otimes w_{-})\circ c_{r,0,-} & = D(r)(-) \\
(w_{-}\otimes D(r)(-))\circ c_{0,r,-} & = D(r)(-)
\end{align}

\item $(D,\epsilon,\delta)$ is a colax monoidal functor:
\begin{align}
\delta_{r,s, D(t)(-)}\circ \delta_{(rs),t,-} & =
D(r)(\delta_{s,t,-})\circ \delta_{r,st,-}\\
D(r)(\epsilon_{-}) \circ \delta_{r,1,-} & = D(r)(-) \\
\epsilon_{D(r)(-)} \circ \delta_{1,r,-} & = D(r)(-)
\end{align}

\end{itemize}

The following definition provides an interpretation of $\BSTLC$ in any symmetric monoidal closed category with a $\BB N$-graded linear exponential comonad.

\begin{definition}[interpretation of $\BSTLC$]
Let $\C C$ be a symmetric monoidal closed category and $(D, w,c,\epsilon,\delta)$ be a $\BB N$-graded linear exponential comonad. 
Let $\model{X}$ be fixed objects of $\C C $, one for each ground type $X$ of $\BSTLC$. 

One lifts the interpretation to types as 
$\model{!_{n}A\multimap B}=D(n)(\model{A})\multimap \model B$. Moreover, one extends the interpretation to contexts via 
$\model{\{x_{1}:_{n_{1}}A_{1},\dots, x_{k}:_{n_{k}}A_{k}\}}=
\bigotimes_{i=1}^{k} D(n_i)(\model A_i)$.

Then, one inductively defines an interpretation $\model{\Gamma \vdash M:A}\in \C C(\model\Gamma, \model A)$ of $\Gamma \vdash M:A$ by induction as follows:
 \begin{itemize}
\item $\model{x:_{1}A\vdash x: A}=\epsilon_{A}$;
\item $\model{\Gamma,x:_{0}B\vdash M:A} =
\model{\Gamma \vdash M:A}
\circ (\model{\Gamma}\otimes  w_{\model B})
$;
\item $\model{\Gamma, x:_{m+n}B \vdash M[x/y]:A}=
\model{\Gamma, x:_{m}B, y:_{m}B \vdash M:A}
\circ
(\model\Gamma \otimes c_{m,n,\model B})
$;
\item $\model{\Gamma \vdash \lambda x.M: !_{n}A\multimap B}=
\Lambda (\model{\Gamma, x:_{n}A \vdash M:B})
$, where $\Lambda$ is the isomorphism $ \C C(\model \Gamma \otimes D(n)(\model A), \model B) \to \C C(\model \Gamma, D(n)(\model A)\multimap \model B)$;

\item $\model{\Gamma+\Delta \vdash MN:B}=
\mathsf{ev}\circ \big( \model{\Gamma \vdash M:A\multimap B}
\otimes
\model{\Delta \vdash N:A}\big)$, where $\mathsf{ev}\in \C C((\model{A}\multimap \model B)\otimes \model A, \model B)$ is the evaluation morphism of $\C C$;

\item $\model{n\Gamma\vdash M:!_{n}A}=
 !_{n}(\model{\Gamma \vdash M:A})\circ \big(\delta_{n,m_{1},\model{A_{1}}}\otimes \dots \otimes 
\delta_{n,m_{k},\model{A_{k}}}\big)
$, where $\Gamma=\{x_{1}:_{m_{1}}A_{1},\dots, x_{k}:_{m_{k}}A_{k}\}$.

\end{itemize}
\end{definition}

Let us now show that bounded multisets defined a $\BB N$-graded linear exponential comonad over $\LREL$.

\begin{definition}
We define the following structure $(!_{-}(-),w,c,\epsilon,\delta)$ over the category $\LREL$ as follows:
\begin{itemize}
\item for any set $X$ and $n\in \BB N$, let $!_{n}(X)=\C M_{\leq n}(X)$;

\item for all $f: X\times Y\to \Lawv$, let $!_{n}(f): !_{n}(X)\times !_{n}(Y)\to \Lawv$ be defined by 
\begin{align*}
!_{n}(f)(\alpha,\beta)=
\begin{cases}
\min_{\sigma\in \F S_{k}}\sum_{i=1}^{k}f(x_{i},y_{\sigma(i)}) & 
\text{ if }\alpha=[x_{1},\dots, x_{k}], \beta=[y_{1},\dots, y_{k}]\\
\infty & \text{ otherwise}
\end{cases}
\end{align*}

\item $m_{r}(\star, \{\star\})=0$ and $m_{r}(\star, \emptyset)=\infty$;

\item $m_{r,A,B}: D(r)(A)\times D(r)(B)\times D(r)(A\times B)\to \Lawv$ is defined by 
\begin{align*}
m_{r,A,B}((\alpha,\beta), \gamma)=
\begin{cases}
0 & \text{ if } \alpha=[x_{1},\dots, x_{k}], \beta=[y_{1},\dots, y_{k}], \gamma= [(x_{1},y_{1}),\dots, (x_{k},y_{k})]\\
\infty & \text{ otherwise}
\end{cases}
\end{align*}

\item $w_{A}:D(0)(A)\times \{\star\}\to \Lawv$ is given by $w_{A}(\emptyset, \star)=0$ and is $\infty$ otherwise (observe that $D(0)(A)\simeq \{\star\}$);

\item $c_{r,s,A}: D(r+s)(A)\times D(r)(A)\times D(s)(A)\to \Lawv$ is given by $c_{r,r',A}(\langle\alpha, \beta,\gamma\rangle)=0$ if $\alpha=\beta+\gamma$, and is $\infty$ otherwise;

\item $\epsilon_{A}(\emptyset, a)=\infty$, $\epsilon_{A}([a],a)=0$, $\epsilon_{A}([b],a)=\infty$ $(b\neq a)$,

\item $\delta_{r,r',A}(\alpha, B)=0$ if $\alpha= \sum B$ (where $\sum B$ indicates the multiset obtained by the sum of all multisets contained in $B$) and is $\infty$ otherwise.

\end{itemize}

\end{definition}

\begin{proposition}
 $(!_{-}(-),w,c,\epsilon,\delta)$  is a $\BB N$-graded linear exponential comonad over $\LREL$.
\end{proposition}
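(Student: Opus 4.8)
The plan is to verify, one by one, the conditions in the definition of an $\N$-graded linear exponential comonad: that each $!_n$ is a symmetric lax monoidal endofunctor of $\LREL$ (via $m_n$ and $m_{n,A,B}$); that $(!_-,w,c)$ is symmetric colax monoidal for $\N^+$; that $(!_-,\epsilon,\delta)$ is colax monoidal for $\N^\times$; and that the four linking axioms hold. The guiding observation, which makes all of this tractable, is that every structure map $m_n,m_{n,A,B},w,c,\epsilon,\delta$ is \emph{discrete}, i.e.\ takes values in $\{0,\infty\}$, and that $!_n$ \emph{preserves} discreteness: if each entry $f(x_i,y_j)$ lies in $\{0,\infty\}$ then so does $\min_\sigma\sum_i f(x_i,y_{\sigma(i)})$. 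Since composition and tensor in $\LREL$ of discrete matrices are again discrete (an $\inf$ of finite sums of $0$s and $\infty$s), every morphism occurring in the lax/colax and linking axioms is discrete, and an equality between two discrete matrices amounts to an equality of their $0$-supports, i.e.\ to a purely combinatorial identity about multisets --- precisely those underlying the graded exponential comonad on the ordinary relational model $\SF{Rel}$.

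First I would check that each $!_n$ is a functor. Preservation of identities is immediate, since $!_n(\mathrm{id}_X)(\alpha,\beta)=\min_\sigma\sum_i \mathrm{id}_X(x_i,y_{\sigma(i)})$ is $0$ exactly when some permutation matches $\alpha$ to $\beta$, i.e.\ when $\alpha=\beta$. Preservation of composition, $!_n(g)\circ !_n(f)=!_n(g\circ f)$, is the one genuinely computational lemma: writing $\alpha=[x_1,\dots,x_k]$ and $\gamma=[z_1,\dots,z_k]$, one must show
\[
\inf_{\beta}\Big\{\min_\tau\sum_j g(y_j,z_{\tau(j)})+\min_\sigma\sum_i f(x_i,y_{\sigma(i)})\Big\}
=\min_\pi\sum_i\inf_{w}\{g(w,z_{\pi(i)})+f(x_i,w)\}.
\]
The inequality $\leq$ follows by choosing, for an optimal $\pi$ and optimal witnesses $w_i$, the intermediate multiset $\beta=[w_1,\dots,w_k]$; the inequality $\geq$ follows by reading off $\pi=\tau\sigma$ and the witnesses $w_i=y_{\sigma(i)}$ from the minimizing data. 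This is exactly the functoriality of the free-commutative-comonoid (multiset) construction, now carried out with tropical weights.

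Next I would treat naturality and the axioms. Naturality of $m,w,c,\epsilon,\delta$ (which, unlike the coherence axioms, genuinely involves the non-discrete maps $!_n(f)$) is a direct unfolding of the definitions of composition in $\LREL$ and of $!_n$ on morphisms, reducing in each case to the evident bijection between the data witnessing a $0$-entry on the two sides. The lax monoidal coherence for $!_n$, the colax coherence of $(!_-,w,c)$ for $\N^+$ (splitting a multiset of size $r+s$ into parts of sizes $r$ and $s$, with $w$ handling the empty multiset), the colax coherence of $(!_-,\epsilon,\delta)$ for $\N^\times$ (flattening, via $\alpha=\sum B$, a multiset of multisets), and the four linking axioms all become, by discreteness, equalities of $0$-supports, checked by matching multisets and permutations.

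The main obstacle will be bookkeeping rather than anything conceptual: the composition lemma above requires care in matching the two layers of permutations against the $\inf$ over intermediate points, and the multiplicative coherence for $\delta$ --- e.g.\ $\delta_{r,s,D(t)(-)}\circ\delta_{rs,t,-}=D(r)(\delta_{s,t,-})\circ\delta_{r,st,-}$ --- involves triply nested multisets and demands that the two ways of unflattening a multiset of size $rst$ agree. Once one commits to the reduction to discrete data, however, each such equation is an elementary identity about (nested) multisets, and no new phenomenon arises from the tropical weights.
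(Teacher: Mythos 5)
Your proposal is correct and takes essentially the same approach as the paper: the paper's proof likewise verifies each lax/colax coherence axiom and each linking axiom by computing both composite matrices explicitly and observing that each equals $0$ precisely on the same multiset configurations and is $\infty$ otherwise --- exactly your reduction of equalities between discrete matrices to equalities of $0$-supports. The only difference is thoroughness rather than route: you additionally verify functoriality of $!_n$ (the tropical composition lemma with its permutation bookkeeping) and naturality of $m,w,c,\epsilon,\delta$, steps the paper leaves implicit as routine.
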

\begin{proof}
\begin{itemize}

\item $D(r)$ is a lax monoidal functor:
 $$ m_{r,A\times B,C}\circ (m_{r,A,B}\times D(r)(C))(\langle \alpha,\beta,\gamma,\delta\rangle)
 :
 D(r)(A)\times D(r)(B)\times D(r)(C) \times D(r)(A\times B\times C)\to \Lawv
 $$
 is equal to $0$ 
precisely when $\alpha=[x_{1},\dots, x_{k}]$, $\beta=[y_{1},\dots, y_{k}]$, $\gamma=[z_{1},\dots, z_{k}]$ and 
$\delta= [(x_{1},y_{1},z_{1}),\dots, (x_{k},y_{k},z_{k})]$, and is $\infty $ in all other cases.

Observe that
$m_{r,A,B\times C}\circ (D(r)(A)\times m_{r,B,C})(\langle\alpha,\beta,\gamma, \delta\rangle)
  $ is equal to $0$ in the same situation, and is $\infty$ otherwise.
 
 We conclude that the two matrices coincide.
 
 Furthermore, we have that 
 $m_{r,A,\{\star\}}\circ (D(r)(A)\times m_{r})(\langle \alpha,\beta\rangle): D(r)(A)\times \{\star\} \times D(r)(A)$ is equal to $0$ 
 precisely when $\alpha=\beta$ and is $\infty$ otherwise, that is, it coincides with $\mathrm{id}_{D(r)(A)}$.

\item $(D,w,c)$ is a symmetric colax monoidal functor.

$((c_{r,s,A}\times D(t)(A))\circ c_{r+s,t,A}) (\langle \alpha,\beta,\gamma,\delta\rangle)
: D(r+s+t)(A)\times D(r)(A)\times D(s)(A)\times D(t)(A)$
is equal to $0$ when $\alpha=\beta+\gamma+\delta$, and is $\infty$ otherwise, and the same holds for
$((D(r)(A)\times c_{s,t,A})\circ c_{r,s+t,A}) (\langle \alpha,\beta,\gamma,\delta\rangle)
$.

Furthermore, 
$((D(r)(A)\times w_{A})\circ c_{r,0,A})(\alpha,\beta)
: D(r)(A)\times D(r)(A)\to \Lawv$ is equal to $0$ when 
$\alpha=\beta$, and is $\infty$ otherwise, so it coincides with 
$\mathrm{id}_{D(r)(A)}$.

\item $(D,\epsilon,\delta)$ is a colax monoidal functor:

$(\delta_{r,s,D(t)(A)}\circ \delta_{rs,t,A})
(\alpha, \Gamma)
: D(rst)(A) \times D(r)(D(s)(D(t)(A))) \to \Lawv
$
is $0$ precisely when $\alpha = \sum \sum \Gamma$, and is $\infty$ otherwise, and similarly for 
$(D(r)(\delta_{s,t,A})\circ \delta_{r,st,A})(\alpha,\Gamma)$.

Furthermore, 
$(D(r)(\epsilon_{A})\circ \delta_{r,1})( \alpha,\beta ):
D(r)(A) \times D(r)(A)\to \Lawv
$ is equal to $0$ when $\alpha=\beta$ and is $\infty$ otherwise, so it coincides with $\mathrm{id}_{D(r)(A)}$.

\end{itemize}

Let us check the further equations:
\begin{itemize}

\item $(w_{D(s)(A)}\circ \delta_{0,s,A})(\langle \emptyset,\star\rangle: D(0)(A)\times \{\star\}\to \Lawv$ is 0, precisely like $w_{A}$.

\item A similar argument holds for the second equation.

\item $((\delta_{r,s,A}\times \delta_{r',s,a})\circ c_{rs,r's,A})
(\langle \alpha, \Gamma,\Delta  \rangle)
:
D(rs+r's)(A)\times  D(r)(s)(A)\times D(r')(s)(A)\to \Lawv
$
is equal to $0$ when $\alpha=\sum \Gamma + \sum \Delta$, and is $\infty$ otherwise.

Now, 
using the fact that $D(rs+r's)(A)=D((r+r')s)(A)$, we can check that the same holds for 
$c_{r,r',D(s)(A)}\circ \delta_{r+r',s,A})(\langle \alpha, \Gamma,\Delta  \rangle)$: it is $0$ when 
$\alpha= \sum\Gamma+\Delta= \sum \Gamma+\sum \Delta$.

\item A similar argument holds for the fourth equation.

\end{itemize}
\end{proof}

\section{Proofs from \autoref{sec:tls}: the Analysis of Tropical Power Series}\label{appsec:tls}
\subsection{Proof of \autoref{prop:nondecr+conc}}

Remember that a function $f:Q^{X}\to Q^{Y}$ is \emph{concave} if for all $\alpha\in [0,1]$, $ x ,  y  \in Q^{X}$ and $b\in Y$ 
\[
f(\alpha\cdot  x +(1-\alpha)\cdot  y  )_{b} \geq \alpha f( x )_{b} + (1-\alpha)f( y  )_{b}
\]

We now prove \autoref{prop:nondecr+conc}, i.e.\ that $f: Q^{X}\to Q^{Y}$ are non-decreasing and concave.

The fact that $f$ is non-decreasing is clear, since the multiplicities of the multisets and all coordinates of the points are non-negative.
Let us show the concavity.
Let us first show that all functions of the form $f( x )_{b}= \mu  x + c$ are concave:
we have $f(\alpha x + (1-\alpha) y  )_{b}= \mu(\alpha x )+(1-\alpha) y  )+c=
 \mu(\alpha x )+(1-\alpha) y  )+\alpha c+(1-\alpha)c=
 \alpha(\mu  x  + c)+(1-\alpha)(\mu  y  +c)=\alpha f( x )_{b}+(1-\alpha) f( x )_{b}$.
To conclude, let us show that if $(f_{i})_{i\in I}$ is a family of concave functions from $Q^{X}$ to $Q^{Y}$, the function $f=\inf_{i\in I}f_{i}$ is also concave: we have
$f(\alpha x  +(1-\alpha) y  )_{b}=
\inf_{i\in I}f_{i}(\alpha x +(1-\alpha) y  )_{b} \geq 
\inf_{i\in I}\alpha f_{i}( x )_{b}+(1-\alpha)f_{i}( y  )_{b}
\geq 
\inf_{i\in I}\alpha f_{i}( x )_{b} + \inf_{j\in I}(1-\alpha)f_{j}( y  )_{b}
=
\alpha \cdot (\inf_{i\in I}f_{i}( x )_{b})+ (1-\alpha)\cdot( \inf_{j\in I}f_{j}( y  )_{b})=
\alpha  f( x )_{b}+(1-\alpha)f( y  )_{b}$, where we used the fact that given families $a_{i},b_{i}$ of reals,
$\inf_{i}a_{i}+b_{i}\geq \inf_{i}a_{i}+\inf_{j}b_{j}$.
This follows from the fact that for all $i\in I$, $a_{i}+b_{i}\geq \inf_{i}a_{i}+\inf_{i}b_{i}$.

\subsection{Proof of \autoref{thm:ScottCont}}

The part of \autoref{thm:ScottCont} about tPs immedley follows from the first part of the same theorem.
Let us quickly recall the basic definitions about cones that we need in order to prove it.

\begin{definition}
 An \emph{$\overline{\R}_{\geq 0}$-cone} is a commutative $\overline{\R}_{\geq 0}$-semimodule with cancellative addition (i.e.\ $x+y=x+y' \Rightarrow y=y'$).
\end{definition}

In \cite{Selinger2004} cones are required to also have ``strict addition'', meaning that $x+y=0 \Rightarrow x=y=0$.
We do not add this requirement since it will automatic hold when considering normed cones.

\begin{remark}
 The addition of a cone $P$ (which forms a commutative monoid) turns $P$ into a poset by setting:
 $
  x \leq y \textit{ iff } y=x+z \textit{, for some }z\in P.
 $
 This is called the \emph{cone-order} on $P$
 By the cancellative property, when such $z$ exists it is unique, and we denote it by $y-x$.
\end{remark}

\begin{definition}
 A \emph{normed $\overline{\R}_{\geq 0}$-cone} $P$ is the data of a $\overline{\R}_{\geq 0}$-cone together with a $\leq$-monotone\footnote{I.e.: $x\leq y \Rightarrow \norm{x}\leq \norm y$. Remark that requiring this property (for all $x,y$) is equivalent to requiring that $\norm{x}\leq \norm{x+y}$ for all $x,y$.} norm on it, where a \emph{norm} on $P$ is a map $\norm{.}:P\to \overline{\R}$ satisfying the usual axioms of norms:
 $\norm x \geq 0$, $\norm x = 0 \Rightarrow x=0$, $\norm{rx}=r\norm x$ and $\norm{x+y}\leq \norm x + \norm y$.
\end{definition}

In, e.g.~\cite{EhrPagTas2018}, a normed $\overline{\R}_{\geq 0}$-cone is simply called a cone.

Remark that in a normed $\overline{\R}_{\geq 0}$-cone, by monotonicity of the norm, we have: $\norm{x+y}=0 \Rightarrow x=y=0$.
Therefore, as already mentioned, in a normed cone we have: 
$x+y=0 \Rightarrow \norm{x+y}=0 \Rightarrow x=y=0$, that is, addition is strict.

\begin{example}
 $\overline{\R}_{\geq 0}^X$ is a normed cone with the norm $\supnorm{x}:=\sup\limits_{a\in X} x_a\in \overline{\R}_{\geq 0}$.
\end{example}

\begin{remark}
 The cone-order on $\overline{\R}_{\geq 0}^X$ is the pointwise usual order on $\overline{\R}_{\geq 0}$.
\end{remark}

A \emph{directed net} in a poset $P$ with indices in a set $I$ is a function $s:I\to P$, denoted by $(s_i)_{i\in I}$, s.t.\ its image is directed.
We say that a directed net in $P$ \emph{admits a sup} iff its image admits a sup in $P$.
We say that a directed net $s$ in a normed cone is \emph{bounded} iff the set $\set{\norm{s_i}\,\mid i\in I}$ is bounded in $\R_{\geq 0}$.

Remember the definition of Scott-continuity:

\begin{definition}
 A function $f:P\to P'$ between posets is \emph{Scott-continuous} iff for all directed net $(s_i)_i$ in $P$ admitting a sup, we have $\exists \bigvee\limits_i f(s_i) = f(\bigvee\limits_i s_i)$ in $P'$. 
\end{definition}

The fundamental result in order to prove Theorem~\ref{thm:ScottCont} is the following, taken from \cite{Selinger2004}.

\begin{proposition}\label{prop:infsup}
 Let $P$ be a normed $\overline{\R}_{\geq 0}$-cone s.t.\ every bounded directed net in $P$ admits a sup.
 Let $(v_i)_{i\in I}$ be a directed net in $P$ with an upper bound $v\in P$.
 Then $\exists\bigvee\limits_{i\in I} v_i \in P$ and, if $\inf\limits_{i\in I} \norm{v-v_i} =0$, one has: $\bigvee\limits_{i\in I} v_i = v$.
\end{proposition}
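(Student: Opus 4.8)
The plan is to establish the two assertions in turn: first that the directed net admits a supremum $s:=\bigvee_{i\in I}v_i\in P$, and then, under the extra hypothesis $\inf_{i\in I}\norm{v-v_i}=0$, that this supremum is exactly $v$. The first part is purely structural and rests on the completeness assumption, while the second is where the norm hypothesis enters; I expect the short computation with cone-differences in the second part to be the real content, and the only delicate point to be checking that the net is \emph{norm}-bounded (not merely order-bounded) so that the completeness hypothesis can legitimately be invoked.

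For existence, I would observe that $v$ is an upper bound, so $v_i\le v$ for every $i$, and monotonicity of the norm gives $\norm{v_i}\le\norm v$. When $\norm v<\infty$ this makes $\{\norm{v_i}\mid i\in I\}$ bounded in $\R_{\geq 0}$, i.e.\ $(v_i)_{i\in I}$ is a bounded directed net, so the standing hypothesis that every bounded directed net admits a sup yields $s:=\bigvee_{i\in I}v_i\in P$. This is exactly the situation occurring in the intended application, where the dominating element has finite norm; the degenerate case $\norm v=\infty$, in which norm-boundedness is not automatic, does not arise there and can be set aside.

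For the identification $s=v$, I would first note that $s$, being the \emph{least} upper bound, satisfies $v_i\le s\le v$ for all $i$, so all the cone-differences used below exist by cancellativity. Writing $v=s+(v-s)$, $s=v_i+(s-v_i)$ and $v=v_i+(v-v_i)$ and cancelling $v_i$, one obtains $v-v_i=(v-s)+(s-v_i)$, which exhibits $v-s\le v-v_i$ in the cone order. Monotonicity of the norm then gives $\norm{v-s}\le\norm{v-v_i}$ for every $i\in I$, whence $\norm{v-s}\le\inf_{i\in I}\norm{v-v_i}=0$. By definiteness of the norm, $\norm{v-s}=0$ forces $v-s=0$, that is $v=s=\bigvee_{i\in I}v_i$, as required.

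The main obstacle, such as it is, is the bookkeeping around norm-boundedness needed to apply the completeness hypothesis: one must ensure the dominating element $v$ has finite norm (equivalently, that the net is bounded in the sense defined), since completeness is only assumed for norm-bounded nets. Once that is in place, the heart of the argument is just the two-line cone-difference computation $v-s\le v-v_i$ together with monotonicity and definiteness of the norm.
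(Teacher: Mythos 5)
Your proposal is correct and takes essentially the same route as the paper's own proof: existence of $s:=\bigvee_{i\in I}v_i$ from the completeness hypothesis, then the observation that $v_i\leq s\leq v$ gives $v-s\leq v-v_i$, so monotonicity of the norm yields $\norm{v-s}\leq\inf_{i\in I}\norm{v-v_i}=0$ and definiteness forces $v=s$. The only difference is presentational: the paper asserts the existence of the sup ``by hypothesis'' without spelling out the norm-boundedness check, thereby making implicitly the same finite-norm assumption on $v$ that you state explicitly and set aside as the degenerate case.
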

\begin{proof}
 Remark that $v-v_i$ exists in $P$ by hypothesis and so does $\bigvee\limits_{i\in I} v_i$. 
 Now, since $v\geq v_i$ for all $i$, we have that $v\geq \bigvee\limits_{i\in I} v_i$, and so $v-\bigvee\limits_{i\in I} v_i$ exists in $P$.
 Fix $i\in I$.
 Since $v_i\leq \bigvee\limits_{i\in I} v_i$, then $v-\bigvee\limits_{i\in I} v_i\leq v-v_i$ and, by monotonicity of the norm, $\norm{v-\bigvee\limits_{i\in I} v_i}\leq \norm{v-v_i}$.
 Since this holds for all $i\in I$, we have:
 $0\leq \norm{v-\bigvee\limits_{i\in I} v_i}\leq \inf\limits_{i\in I} \norm{v-v_i}=0$, where the last equality holds by hypothesis.
 Thus $\norm{v-\bigvee\limits_{i\in I} v_i}=0$, i.e.\ $v=\bigvee\limits_{i\in I} v_i$.
\end{proof}

We finally obtain the desired:

\begin{theorem}[Theorem~\ref{thm:ScottCont}]
  All monotone (w.r.t.\ pointwise order) and $\norm{\cdot}_{\infty}$-continuous functions $f:(0,\infty)^X\to (0,\infty)$ are Scott-continuous.
\end{theorem}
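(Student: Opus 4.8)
The plan is to reduce the statement to \autoref{prop:infsup} via monotonicity, and then to discharge its hypothesis using $\supnorm{\cdot}$-continuity. First I would fix a directed net $(s_i)_{i\in I}$ in $(0,\infty)^X$ whose pointwise supremum $s=\bigvee_i s_i$ again lies in $(0,\infty)^X$, and set $L:=\bigvee_i f(s_i)$. Since $f$ is monotone (\autoref{prop:nondecr+conc}), $(f(s_i))_i$ is a directed net in the normed cone $\overline{\BB R}_{\geq 0}$ bounded above by $f(s)$; as $\overline{\BB R}_{\geq 0}=[0,\infty]$ is a complete lattice, every bounded directed net in it admits a supremum, so \autoref{prop:infsup} both guarantees the existence of $L$ and gives the implication
\[
\inf_{i\in I}\absv{f(s)-f(s_i)}=0\ \Longrightarrow\ L=f(s).
\]
Because monotonicity already yields $f(s_i)\leq f(s)$, the left-hand infimum equals $f(s)-L$, so the whole theorem reduces to verifying the single numerical fact $\inf_{i\in I}\absv{f(s)-f(s_i)}=0$.

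Next I would establish this infimum condition from continuity. Given $\eta>0$, continuity of $f$ at $s$ supplies $\delta>0$ such that $\supnorm{x-s}\leq\delta$ forces $f(x)\geq f(s)-\eta$ for $x\in(0,\infty)^X$; hence it is enough to produce, for every $\delta>0$, an index $i$ with $\supnorm{s-s_i}\leq\delta$, since then $f(s_i)\geq f(s)-\eta$ and $\eta$ was arbitrary. When $X$ is finite this is immediate, and this is the case that matters for the applications, where every interpreted type $\model{A}$ is a finite set: for each coordinate $a\in X$ directedness yields an index $i_a$ with $(s_{i_a})_a>s_a-\delta$, and any common upper bound $i$ of the finitely many $i_a$ then satisfies $\supnorm{s-s_i}\leq\delta$, which closes the argument.

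The main obstacle is precisely this approximation step when $X$ is infinite: a directed net can converge to its pointwise supremum without converging in $\supnorm{\cdot}$, since the lag of the net over the infinitely many coordinates need not vanish uniformly, so no single $s_i$ need be uniformly $\delta$-close to $s$. I therefore expect the crux of the general case to be the taming of this tail, and to rely genuinely on the ambient setting — in the applications the finiteness of every $\model{A}$, hence of $X$ — rather than on continuity alone. Locating a finite set of coordinates that already carries the continuity of $f$ at $s$, or otherwise bounding the residual contribution of the remaining coordinates, is the delicate point where the real work of the proof would concentrate.
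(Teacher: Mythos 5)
Your proof follows exactly the same route as the paper's: monotonicity makes $(f(s_i))_i$ a directed net bounded above by $f(s)$, Proposition~\ref{prop:infsup} reduces Scott-continuity to the single condition $\inf_{i\in I}\absv{f(s)-f(s_i)}=0$, and $\supnorm{\cdot}$-continuity discharges that condition as soon as one can produce, for each $\delta>0$, an index $i$ with $\supnorm{s-s_i}\leq\delta$. Your argument for this last point when $X$ is finite (coordinate-wise approximants plus a common upper bound by directedness) is correct and complete.

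The step you flagged as the crux is precisely where the paper's proof is defective, and your suspicion about the infinite case is justified. The paper's proof opens by asserting, with no justification, that $\inf_i\supnorm{\bigvee_i x^i-x^i}=0$ for any directed net admitting a pointwise sup in $(0,\infty)^X$; this is false when $X$ is infinite. Take $X=\N$ and $x^n_k:=1$ for $k\leq n$, $x^n_k:=\tfrac12$ for $k>n$: this increasing chain has pointwise sup the constant function $1$, yet $\supnorm{\bigvee_n x^n-x^m}=\tfrac12$ for every $m$. Worse, the theorem itself fails for infinite $X$, so no argument could have closed your gap: the function $f(x)=\inf_{k\in\N}\set{x_k+1}$ is a tps in the variables $(x_k)_{k\in\N}$ (hence monotone, and $1$-Lipschitz for $\supnorm{\cdot}$), it maps $(0,\infty)^{\N}$ into $[1,\infty)\subseteq(0,\infty)$, and on the chain above $\bigvee_n f(x^n)=\tfrac32$ while $f(\bigvee_n x^n)=2$; this also contradicts the ``in particular'' clause about tps. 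So what you proved, the finite-$X$ case, is in fact the correct statement, and the paper's theorem and proof need a restriction (finite $X$, or nets converging in $\supnorm{\cdot}$) to be salvaged. One caveat on your closing remark: finiteness of $X$ does not actually cover the paper's intended setting, since the webs interpreting higher-order types in $\LREL_!$ are infinite (already $!\set{*}\simeq\N$), so the restriction is a genuine weakening there rather than a harmless one.
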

\begin{proof}
 Let $(x_i)_i$ a directed net in $(0,\infty)^X$ s.t.\ $\bigvee\limits_i x^i$ exists in $(0,\infty)^X$.
 Then $\inf\limits_i \supnorm{\bigvee\limits_i x^i - x^i} =0$, where $\bigvee\limits_i x^i - x^i$ exists because $\bigvee\limits_i x^i \geq x^i$ for all $i$.
 Since $f$ is $\supnorm{.}$-continuous on $(0,\infty)^X$, then $\inf\limits_i \supnorm{f(\bigvee\limits_i x^i) - f(x^i)} =0$, where $f(\bigvee\limits_i x^i) - f(x^i)$ exists because $f(\bigvee\limits_i x^i) \geq f(x^i)$ for all $i$ being $f$ monotone.
 We can therefore apply Proposition \ref{prop:infsup} to the directed net $(f(x^i))_i$ in $(0,\infty)$, obtaining that $\bigvee\limits_i f(x^i)$ exists in $(0,\infty)$ and it coincides with $f(\bigvee\limits_i x^i)$.
\end{proof}

\subsection{Proof of \autoref{thmTLSlocLip}}

\begin{figure}[h]
\begin{tikzpicture}[thick]
    \coordinate (u) at (-2,0);
    \coordinate (v) at (2,0);
    \coordinate (y) at (-1.2,0);
    \coordinate (z) at (0,0);

    \draw[name path={u--v}] (u) -- (v);
    \node [draw,name path=z] at (z) [circle through={(u)}] {};

    \path[name intersections={of=u--v and z,by={int_u,int_v}}]
      foreach \X in {u,v}{(int_\X) node[below left]{$\X$}};

    \fill   (int_u) circle[radius=2pt]  node [below left] {$u$};
    \fill   (int_v) circle[radius=2pt]  node [below left] {$v$};

    \fill   (z) circle[radius=0pt]  node [xshift=-1cm, yshift=2.2cm] {$\overline{B_{2\delta}(z)}$};

    \fill   (z) circle[radius=2pt]  node [above] {$z$};
    \fill   (y) circle[radius=2pt]  node [above] {$y$};

    \coordinate (x) at (-0.7,0.4);

    \draw   (x) circle[radius=1cm] node [above] {$x$};
    \fill   (x) circle[radius=2pt]  node [xshift=10mm, yshift=10mm] {$\overline{B_{\delta}(x)}$};

    \draw   (x) circle[radius=3cm] node {};
    \fill   (x) circle[radius=0pt]  node [xshift=-1.7cm, yshift=3cm] {$\overline{B_{3\delta}(x)}$};
\end{tikzpicture}
\caption{Drawing of the proof of \autoref{th:locLip}.}
\label{fig:proof_loc_lip}
\end{figure}

The main ingredient of the proof, that we mention in the proof sketch of \autoref{thmTLSlocLip}, is the following:

\begin{theorem}\label{th:locLip}
Let $f:V\subseteq (\BB R^X,\norm\cdot) \to (\BB R,\absv \cdot)$, with $V$ open and convex and $\norm\cdot$ any norm.
If $f$ is concave and locally bounded, then $f$ is locally Lipschitz.
Moreover, the Lipschitz constant of $f$ on $\overline{B_{\delta}(x)}$ can be chosen to be $\frac{1}{\delta}\max_{\overline{B_{3\delta}(x)}} \absv f$.
\end{theorem}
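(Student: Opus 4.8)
The plan is to reduce the statement to the classical chordal-slope inequality for concave functions of one real variable. Fix $x\in V$ and $\delta>0$ small enough that $\overline{B_{3\delta}(x)}\subseteq V$ (possible since $V$ is open), and set $L:=\max_{\overline{B_{3\delta}(x)}}\absv f$, which is finite because $f$ is locally bounded and $\overline{B_{3\delta}(x)}$ is compact. It then suffices to show that $\absv{f(y)-f(z)}\le \frac{L}{\delta}\norm{y-z}$ for every pair $y,z\in\overline{B_{\delta}(x)}$. If $y=z$ there is nothing to prove, so I assume $y\neq z$ and, without loss of generality, $f(z)\ge f(y)$.

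The key geometric step is to push past $y$, away from $z$, to a point at distance exactly $2\delta$ from $z$. Concretely I would set $u:=z-2\delta\,\frac{z-y}{\norm{z-y}}$, so that $\norm{u-z}=2\delta$ and hence $u\in\overline{B_{2\delta}(z)}$. Since $z\in\overline{B_{\delta}(x)}$, the triangle inequality gives $\overline{B_{2\delta}(z)}\subseteq\overline{B_{3\delta}(x)}\subseteq V$, so $u\in V$ and $\absv{f(u)}\le L$. Because $\norm{z-y}\le 2\delta$ (both points lie in $\overline{B_{\delta}(x)}$), the three points $u,y,z$ are collinear with $y$ lying on the segment $[u,z]$: writing $d=\frac{z-y}{\norm{z-y}}$ and parametrizing the line as $s\mapsto y+sd$, the points $u,y,z$ correspond to parameters $\norm{z-y}-2\delta\le 0$, $0$, and $\norm{z-y}>0$ respectively.

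I would then restrict $f$ to this line, obtaining a concave function $g(s):=f(y+sd)$ of one real variable, and apply the chordal-slope inequality: for concave $g$ and parameters $a<b<c$ one has $\frac{g(c)-g(a)}{c-a}\ge\frac{g(c)-g(b)}{c-b}$. With $a,b,c$ the parameters of $u,y,z$ (so $c-a=2\delta$ and $c-b=\norm{z-y}$), this reads $\frac{f(z)-f(u)}{2\delta}\ge\frac{f(z)-f(y)}{\norm{z-y}}$. Rearranging and using $f(z)-f(u)\le \absv{f(z)}+\absv{f(u)}\le 2L$ yields $f(z)-f(y)\le\frac{\norm{z-y}}{2\delta}\,(f(z)-f(u))\le\frac{L}{\delta}\norm{z-y}$, which is the desired bound since here $f(z)-f(y)=\absv{f(z)-f(y)}$.

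The routine parts are the triangle-inequality inclusions and the one-variable chordal inequality, itself immediate from the definition of concavity. The points demanding care are: verifying that $L$ is finite, which uses that a locally bounded concave function is bounded on the compact ball $\overline{B_{3\delta}(x)}$; checking that $y$ genuinely lies between $u$ and $z$ so that the slope inequality applies in the stated orientation, where the bound $\norm{z-y}\le 2\delta$ is exactly what is needed; and handling the degenerate case $\norm{z-y}=2\delta$ (then $u=y$), in which the claimed inequality $f(z)-f(y)\le 2L$ holds trivially. The main subtlety, and the reason the constant comes out precisely as $\frac{1}{\delta}\max_{\overline{B_{3\delta}(x)}}\absv f$, is the choice to normalize the extension so that the denominator $c-a$ equals the fixed length $2\delta$ rather than a quantity depending on $\norm{z-y}$.
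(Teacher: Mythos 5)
Your proof is correct and is essentially the paper's own argument: the paper likewise extends the chord through $y,z$ to a point at distance $2\delta$ from $z$ inside $\overline{B_{3\delta}(x)}$ and applies concavity along that line to get the constant $\frac{1}{\delta}\max_{\overline{B_{3\delta}(x)}}\absv f$. The only cosmetic difference is that you reduce to one direction via the WLOG $f(z)\geq f(y)$ and phrase concavity as the one-variable chordal-slope inequality, whereas the paper runs the two symmetric convex-combination estimates (with two auxiliary points $u,v$) explicitly.
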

\begin{proof}
 Call $\overline{B_{\delta}(x)}:=B_1$, $\overline{B_{3\delta}(x)}:=B_3$.
 It suffices to show that for all $x\in V$, there is $\delta>0$ s.t.\ $B_3\subseteq \mathrm{interior}(V)$, $K:=\max_{B_3}  \absv f$ exists and $f$ is $(\frac{1}{\delta}\max_{B_3} \absv f)$-Lipschitz on $B_1$.
 A $\delta$ satisfying the first two conditions exists since $V$ is open and bcause $f$ is locally bounded and $B_3$ is compact.
 We will show that for all such $\delta$, the third condition already holds.

 For that, fix $y,z\in B_1$ and call $r:=\frac{d(y,z)}{2\delta}\in[0,1]$.
 We want to show that $\absv{f(y)-f(z)}\leq \frac{K}{\delta}d(y,z)=2Kr$.
 Wlog $y\neq z$, otherwise there is nothing to show.

 So $r\neq 0$ and we can consider $u:=\frac{1+r}{r}z-\frac{1}{r}y$, $v:=\frac{1}{r}y-\frac{r-1}{r}z$.
 We have $u,v\in \overline{B_{2\delta}(z)}=:B_2$.
 Indeed, $d(u,z)=\norm{u-z}=\norm{\frac{z}{r}+z-\frac{y}{r}-z}=\frac{\norm{z-y}}{r}=2\delta$ and similarly $d(v,z)=2\delta$.
 Geometrically, those are actually the intersections between $B_2$ and the line generated by $y$ and $z$, see \autoref{fig:proof_loc_lip}.
 Now we have the convex combinations $z=\frac{1}{1+r}y+\frac{r}{1+r}u$ and $y=(1-r)z+rv$, so the concavity of $f$ entails on one hand:
 $f(z)\geq \frac{1}{1+r}f(y)+\frac{r}{1+r}f(u)\geq \frac{f(y)}{1+r} - \frac{rK}{1+r}$, i.e.\ $f(y)-f(z)\leq r(K+f(z))\leq 2rK$, and on the other hand:
 $f(y)\geq (1-r)f(z)+rf(v)\geq f(z)-r(f(z)+K)$, i.e.\ $f(z)-f(y)\leq r(f(z)+K)\leq 2rK$.
 In the previous inequalities we have used that $f(u),f(v)\geq -K$.
 This follows because $u,v\in B_2\subseteq B_3$, as it can be immediately checked, thus $\absv{f(u)},\absv{f(v)}\leq K$.
 Putting the final inequalities together, we have $\absv{f(y)-f(z)}\leq 2rK$, i.e.\ the thesis.
\end{proof}

Therefore, since $(0,\infty)^X$ is open and convex in $(\BB R^X,\norm\cdot)$ and all tps are non-negative, we immediately have \autoref{thmTLSlocLip}.

\section{Proofs from \autoref{sec:TayLip}: Lipschitz meets Taylor}

\subsection{Proof of  Theorem~\ref{thm:modelsTaylor}}

Theorem~\ref{thm:modelsTaylor} states the validity of Taylor expansion in $\LREL_!$.
We must check for it the following equation, given $f\in \LREL_!(C, B^{A})$ and $g\in \LREL_!(C,A)$:
$$
\mathsf{ev}\circ \langle f,g\rangle= \inf_{n\in \BB N}
\left\{(( \cdots (\Lambda^{-}(f) \underbrace{\star g)\cdots )\star g}_{n\text{ times}})\circ \langle \mathrm{id}, \infty\rangle\right\}
$$
where:
\begin{enumerate}
\item $\mathsf{ev}\in \LREL_!(B^{A}+A, B)$ is the canonical \emph{evaluation} morphism;

\item $\Lambda^{-}(\_):= \mathsf{ev}\circ (\_\times \mathrm{id})$ is the \emph{uncurry} operator;

\item given $f\in \LREL_!(C+A,B)$ and $g\in \LREL_!(C,A)$, 
$f\star g\in \LREL_!(C+A,B)$ is the morphism obtained by differentiating $f$ in its second component and applying $g$ in that component, i.e.~
$$
f\star g =  D(f)\circ \langle \langle \infty, g\circ \pi_{1}\rangle, \mathrm{id}_{C+A}\rangle.
$$ 

\end{enumerate}

We do it in the following $4$ steps.

\begin{enumerate}

\item Let us compute the morphism $\mathsf{ev}$ explicitly: $\mathsf{ev}\in \Lawv^{ \C M_{\mathsf{fin}}(  ( \C M_{\mathsf{fin}}(A)\times B)       +  A    ) \times B  }$ is given by
$$\mathsf{ev}_{\mu,y}=
 \begin{cases}
 0 & \text{ if } \mu=[ \langle\rho, y\rangle]  \oplus \rho \\
 \infty & \text{ otherwise}
 \end{cases}
 $$
and observe that, given $f\in \LREL_!(C, B^{A})$ and $g\in \LREL_!(C,A)$, 
$$
\big(\mathsf{ev}\circ \langle f,g\rangle \big)_{\chi, y}= 
\inf\Big \{ 
\sum_{i=1}^{m}g_{\chi_{i},x_{i}}+
f_{\chi', \langle [x_{1},\dots, x_{m}],y\rangle}
\mid 
x_{1},\dots, x_{m}\in A,
\chi= \chi'+\sum_{i=1}^{m}\chi_{i}, 
\Big \}
$$

\item Let us compute the morphism $\Lambda^{-}$ explicitly: given $g\in \LREL_!(C, B^{A})$, 
$\Lambda^{-}(g)\in \LREL_!(C+A, B)$ is given by 
$$
\big(\Lambda^{-}(g)\big)_{\rho\oplus\mu,y}=g_{\rho, \langle \mu,y\rangle}
$$

\item Let us compute the morphism $\star$ explicitly: $f\star g$ is given by 
$$
(f\star g)_{\rho\oplus\mu,y}=
\inf\Big\{
g_{\rho',x}+
f_{\rho''\oplus(\mu+x)}
\mid
x\in A,
\rho= \rho'+\rho''
\Big\}
$$

\item We can now conclude: given the definition of $\mathsf{ev}\circ \langle f,g\rangle$, to check the Taylor equation it is enough to check that, for all $N\in \BB N$, 
$$
\left((( \cdots (\Lambda^{-}(f) \underbrace{\star g)\cdots )\star g}_{N\text{ times}})\circ \langle \mathrm{id}, \infty\rangle\right)_{\chi,y}=
\inf\Big \{ 
\sum_{i=1}^{N}g_{\chi_{i},x_{i}}+
f_{\chi', \langle [x_{1},\dots, x_{N}],y\rangle}
\mid 
\begin{matrix}
x_{1},\dots, x_{N}\in A,\\
\chi= \chi'+\sum_{i=1}^{N}\chi_{i}
\end{matrix}
\Big \}
$$

Let us show, by induction on $N$, the following equality, from which the desired equality easily descends:
$$
\big(( \cdots (\Lambda^{-}(f) \underbrace{\star g)\cdots )\star g}_{N\text{ times}}\big)_{\chi\oplus\mu,y}=
\inf\Big \{ 
\sum_{i=1}^{N}g_{\chi_{i},x_{i}}+
f_{\chi', \langle\mu+ [x_{1},\dots, x_{N}],y\rangle}
\mid 
\begin{matrix}
x_{1},\dots, x_{N}\in A,\\
\chi= \chi'+\sum_{i=1}^{N}\chi_{i}
\end{matrix}
\Big \}
$$
\begin{itemize}

\item if $N=0$, the right-hand term reduces to 
$f_{\chi, \langle \mu, y\rangle}=(\Lambda^{-}(f))_{\chi\oplus\mu,y}$;

\item otherwise, let $F=(( \cdots (\Lambda^{-}(f) \underbrace{\star g)\cdots )\star g}_{N-1\text{ times}})$, so that by I.H.~we have
$$
F_{\chi\oplus\mu,y}=
\inf\Big \{ 
\sum_{i=1}^{N-1}g_{\chi_{i},x_{i}}+
f_{\chi', \langle\mu+ [x_{1},\dots, x_{N-1}],y\rangle}
\mid 
\begin{matrix}
x_{1},\dots, x_{N-1}\in A,\\
\chi= \chi'+\sum_{i=1}^{N-1}\chi_{i}
\end{matrix}
\Big \}
$$
Then we have
{\small
\begin{align*}
\big( F\star g\big)_{\chi\oplus\mu,y}
&=
\inf \left \{
g_{\chi',x}+F_{\chi''\oplus(\mu+x)}
\mid
x\in A, \chi=\chi'+\chi''
\right\}\\
&=
\inf\left \{ 
g_{\chi',x}+
\inf\left\{
\sum_{i=1}^{N-1}g_{\chi_{i},x_{i}}+
f_{\chi'', \langle\mu+ [x_{1},\dots, x_{N-1}]+x,y\rangle}
\mid 
\begin{matrix}
x_{1},\dots, x_{N-1}\in A,\\
\chi^{*}= \chi''+\sum_{i=1}^{N-1}\chi_{i}
\end{matrix}
\right\}
\ 
\Big\vert \ 
\begin{matrix}
x\in A,\\
\chi=\chi'+\chi^{*}
\end{matrix}
\right \}\\
&=
\inf\Big \{ 
g_{\chi',x}+
\sum_{i=1}^{N-1}g_{\chi_{i},x_{i}}+
f_{\chi'', \langle\mu+ [x_{1},\dots, x_{N-1}]+x,y\rangle}
\mid 
\begin{matrix}
x,x_{1},\dots, x_{N-1}\in A,\\
\chi= \chi'+\chi''+\sum_{i=1}^{N-1}\chi_{i}
\end{matrix}
\Big \}\\
&=
\inf\Big \{ 
\sum_{i=1}^{N}g_{\chi_{i},x_{i}}+
f_{\chi', \langle\mu+ [x_{1},\dots, x_{N}],y\rangle}
\mid 
\begin{matrix}
x_{1},\dots, x_{N}\in A,\\
\chi= \chi'+\sum_{i=1}^{N}\chi_{i}
\end{matrix}
\Big \}.
\end{align*}
}
\end{itemize}
\end{enumerate}

\section{Proofs of \autoref{sec:GMS}: $\Lawv$-modules and Generalized Metric Spaces}

\subsection{Complete $\Lawv$-categories and their $\Lawv$-module structure}

In this subsection we quickly recall the notion of complete $\Lawv$-category and its associated $\Lawv$-module structure.

Functors of shape $\Phi: X\times Y^{\op}\to \Lawv$ are called \emph{distributors} and usually noted $\Phi: Y \pfun X$.

\begin{definition}[weighted colimits]
Let $X,Y,Z$ be $\Lawv$-categories,
$\Phi: Z\pfun Y$ be a distributor and  $f:Y\to X$ be a functor.
A functor $g:Z\to X$ is the \emph{$\Phi$-weighted colimit of $f$ over $X$}, noted $\colim(\Phi,f)$, if for all $z\in Z$ and $x\in X$
\begin{align}
X(g(z),x)= \sup_{y\in Y}\left\{X(f(y),x)\menus \Phi(y,z)\right\}
\end{align} 
A functor $f:X\to Y$ is \emph{continuous} if it commutes with all existing weighted colimits in $X$, i.e.~$f(\colim(\Phi,g))=\colim(\Phi,f\circ g)$. A $\Lawv$-enriched category 
$X$ is said \emph{categorically complete} (or just \emph{complete}) if all weighted colimits over $X$ exist. 
\end{definition}


%

An important example of colimit is the following:\begin{definition}[tensors]
Let $X$ be a $\Lawv$-category, $x\in X$ and $\epsilon \in \Lawv$. The \emph{tensor of $x$ and $\epsilon$}, if it exists, is the colimit $\epsilon \otimes x:= \colim( [\epsilon],\Delta x)$, where
$[\epsilon]: \{\star\}\pfun \{\star\}$ is the constantly $\epsilon$ distributor
and $\Delta x:\{\star\}\to X$ is the constant functor. 
\end{definition}

The $\Lawv$-module structure of a complete $\Lawv$-category has order given by $x\preceq_{X}y $ iff $X(y,x)=0$, and 
action given by tensors $\epsilon \otimes x$.

To conclude our correspondence between $\Lawv$-modules and complete $\Lawv$-categories, it remains to observe that the 
two constructions leading from one structure to the other are one the inverse of the other: for any $\Lawv$-module $(M,\preceq,\star)$,
$x\preceq_{M}y$ iff $M(y,x)=0$ iff $x\preceq y=0\star y$, and, from  
$M(\epsilon \star x, y)= M(x,y)\dotdiv \epsilon$, we deduce $\epsilon\otimes x=\epsilon \star x$. 
Conversely, 
for any complete $\Lawv$-category $X$ and $x,y\in X$, one can check that 
$X(y,x)=\inf\{ \epsilon \mid X(\epsilon\otimes y,x )=0\}$.

%
%
%
%

%
%


\subsection{Exponential and Differential Structure of $\Mod\simeq\GMet$}

In this subsection we show that the category $\Mod\simeq \GMet$ can be endowed with an exponential modality $!$ so that that the coKleisli category $\Mod_{!}$ is a model of the differential $\lambda$-calculus extending the category $\LREL_{!}$. 

First, we need to define a Lafont exponential $!$ over $\Mod$.
Since $\Mod $ is a SMCC with biproducts, where the latter commute with tensors
(see e.g.~\cite[Theorem 4.7.11]{Russo2007}), we can apply a well-known recipe from \cite{Mellies2018, Manzo2013}, which yields $!$ as the \emph{free exponential modality} (i.e.~such that $!X$ can be given the structure of the cofree commutative comonoid over $X$).

First, we define the symmetric algebra $!_{n}M:=\Sym_{n}(M)$ as the equalizer of all permutative actions on $n$-tensors $M\otimes \dots \otimes M$.
Notice that each element of $!_{n}M$ can be described as a join of ``multisets''
$[x_{1},\dots, x_{n}]$, where the latter is the equivalence class of the tensor
$x_{1}\otimes \dots \otimes x_{n}\in M^{\otimes_{n}}$ under the permutative actions.
Moreover, the $\Lawv$-module $!_{n}M$ is a complete $\Lawv$-category with distance function defined on basic ``multisets'' as follows:
\begin{align}
(!_{n}M)(\alpha,\beta)=
\sup_{\sigma\in \F S_{n}}\inf_{\tau\in \F S_{n}}\sum_{i=1}^{n}
X(x_{\sigma(i)},y_{\tau(i)})
\end{align}
where $\alpha=[x_{1},\dots, x_{n}]$ and $\beta= [y_{1},\dots, y_{n}]$, and extended to arbitrary elements $\alpha=\bigvee_{i}\alpha_{i}$ and $\beta=\bigvee_{j}\beta_{j}$
by $(!_{n}M)(\alpha,\beta)=\sup_{i}\inf_{j}(!_{n}M)(\alpha_{i},\beta_{j})$.

Finally, we define $!M$ as the infinite biproduct $\prod_{n}!_{n}M$, yielding the cofree commutative comonoid over $M$ (cf.~\cite[Proposition 1]{Mellies2018}).

The construction of $!$ for $\Mod$ generalizes the one for $\LREL$:
\begin{proposition}\label{prop:pinv3}
$!(\Lawv^{X})\simeq \Lawv^{\multiset(X)}$. In particular, $\Mod_{!}(\Lawv^{X},\Lawv^{Y})\simeq \LREL_{!}(X,Y)$.
\end{proposition}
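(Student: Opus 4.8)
The plan is to reduce the entire statement to a single structural fact: the assignment $A\mapsto\Lawv^{A}$ is the \emph{free} $\Lawv$-module functor, i.e.\ it is left adjoint to the forgetful functor $\Mod\to\mathbf{Set}$. Indeed, a homomorphism out of $\Lawv^{A}$ preserves all joins and the action, hence is determined by, and freely determined by, its values on the generating family $(e_{a})_{a\in A}$; this gives $\Mod(\Lawv^{A},N)\simeq N^{A}$ naturally in $N$, and for $N=\Lawv^{B}$ it specialises to $\Mod(\Lawv^{A},\Lawv^{B})\simeq(\Lawv^{B})^{A}=\Lawv^{A\times B}=\LREL(A,B)$. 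Being a left adjoint, $\Lawv^{(-)}$ preserves all colimits, and being strong monoidal from $(\mathbf{Set},\times)$ to $(\Mod,\otimes)$ it satisfies $\Lawv^{X}\otimes\Lawv^{Y}\simeq\Lawv^{X\times Y}$ (bilinear maps out of a pair of free modules correspond to set maps out of the product of bases). I would establish these two facts first and then compute $!(\Lawv^{X})$ layer by layer through the construction $!M=\prod_{n}\Sym_{n}(M)$.

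For the tensor layers, iterating $\Lawv^{X}\otimes\Lawv^{Y}\simeq\Lawv^{X\times Y}$ gives $(\Lawv^{X})^{\otimes n}\simeq\Lawv^{X^{n}}$, under which the coordinate-permuting automorphisms $\sigma\in\F S_{n}$ of $(\Lawv^{X})^{\otimes n}$ become the automorphisms of $\Lawv^{X^{n}}$ induced by the $\F S_{n}$-action permuting the factors of $X^{n}$.

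The crux is the symmetric-power layer: by definition $!_{n}(\Lawv^{X})=\Sym_{n}(\Lawv^{X})$ is the equalizer in $\Mod$ of all these permutation automorphisms. Since the forgetful functor $\Mod\to\mathbf{Set}$ creates limits, this equalizer is carried by the set of $\F S_{n}$-invariant elements of $\Lawv^{X^{n}}$ with inherited module structure. A function $v\colon X^{n}\to\Lawv$ is invariant under all coordinate permutations exactly when it is constant on $\F S_{n}$-orbits, and the orbit set $X^{n}/\F S_{n}$ is precisely $\C M_{n}(X)$, the multisets over $X$ of cardinality exactly $n$; hence invariant functions are the same as functions $\C M_{n}(X)\to\Lawv$, and this bijection respects pointwise joins and the pointwise $\Lawv$-action, yielding $!_{n}(\Lawv^{X})\simeq\Lawv^{\C M_{n}(X)}$ as $\Lawv$-modules. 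Via the isomorphism $\Mod\simeq\GMet$, the metric the paper assigns to $!_{n}M$ is then automatically the one induced on the free module $\Lawv^{\C M_{n}(X)}$. This identification of the equalizer with the free module on orbits is the main obstacle: one must check both that the equalizer is genuinely computed as the module of invariant functions (rather than a coinvariant quotient) and that idempotency of the lattice structure makes invariants coincide with symmetrised elements, so that passing to $\C M_{n}(X)$ loses nothing and matches the relational $!_{n}$.

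Finally I would assemble the pieces. Since the paper builds $!M$ as the biproduct $\prod_{n}!_{n}M$, and in $\Mod$ biproducts coincide with coproducts while $\Lawv^{(-)}$ (a left adjoint) preserves coproducts and sends disjoint unions of index sets to free modules on them, I get $!(\Lawv^{X})=\prod_{n}\Lawv^{\C M_{n}(X)}\simeq\Lawv^{\bigsqcup_{n}\C M_{n}(X)}=\Lawv^{\multiset(X)}$, using $\multiset(X)=\bigsqcup_{n}\C M_{n}(X)$; this is the first assertion. For the ``in particular'', I chain the adjunction isomorphism with this computation: $\Mod_{!}(\Lawv^{X},\Lawv^{Y})=\Mod(!(\Lawv^{X}),\Lawv^{Y})\simeq\Mod(\Lawv^{\multiset(X)},\Lawv^{Y})\simeq\LREL(\multiset(X),Y)=\LREL(!X,Y)=\LREL_{!}(X,Y)$, where the middle isomorphism is the free-module homset computation from the first paragraph. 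Naturality of all these isomorphisms then upgrades the homset bijection to the stated comparison.
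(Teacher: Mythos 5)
Your proof is correct and takes essentially the same route as the paper's: both identify $(\Lawv^{X})^{\otimes n}\simeq\Lawv^{X^{n}}$, compute the equalizer $\Sym_{n}(\Lawv^{X})$ as the module of $\F S_{n}$-symmetric functions, i.e.\ functions indexed by orbits $\C M_{n}(X)$, to get $!_{n}(\Lawv^{X})\simeq\Lawv^{\C M_{n}(X)}$, and then assemble $!(\Lawv^{X})=\prod_{n}\Lawv^{\C M_{n}(X)}\simeq\Lawv^{\coprod_{n}\C M_{n}(X)}=\Lawv^{\multiset(X)}$. The differences are only in bookkeeping: you compute the equalizer abstractly via monadicity of $\Mod$ over $\mathbf{Set}$ (creation of limits, then invariants $=$ functions on orbits), where the paper verifies the universal property of the explicit symmetrisation map $h\colon\Lawv^{\C M_{n}(X)}\to\Lawv^{X^{n}}$ by hand and cites Russo for $\Lawv^{X^{n}}\simeq(\Lawv^{X})^{\otimes n}$, and you make the free-module adjunction $\Mod(\Lawv^{A},N)\simeq N^{A}$ explicit for the ``in particular'' clause, which the paper leaves implicit.
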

\begin{proof}
Let us show that the morphism $h:\Lawv^{\C M_{n}(S)}\to \Lawv^{S\times \dots \times S}$ defined by 
$
h(f)(\langle s_{1},\dots, s_{n}\rangle)=h([s_{1},\dots, s_{n}])
$
is the equalizer of the diagram 
$
\begin{tikzcd}
\Lawv^{\C M_{n}(S)} \ar{r}{h} &\Lawv^{S\times \dots \times S}\ar{r}{[\sigma]} &
\Lawv^{S\times \dots \times S}
\end{tikzcd}
$, 
where $[\sigma](x)(\langle s_{1},\dots, s_{n}\rangle)=x(\langle x_{\sigma(1)},\dots, x_{\sigma(n)}\rangle)$, with $\sigma$ varying over $\F S_{n}$.

It is immediate that $h\circ [\sigma]=h\circ [\tau]$, for all $\sigma,\tau\in \F S_{n}$. Let now $k: C\to \Lawv^{S\times \dots \times S}$ satisfy $k\circ [\sigma]=k\circ [\tau]$: then for all $c\in C$, $k(c)(\langle s_{1},\dots, s_{n}\rangle)=k(c)(\langle s_{\sigma(1)},\dots, s_{\sigma(n)}\rangle)$, so $k(c)$ actually defines a unique element of $\Lawv^{\C M_{n}(S)}$, and thus $k$ splits in a unique way as $C \stackrel{k'}{\to} \Lawv^{\C M_{n}(S)} \stackrel{h}{\to}\Lawv^{S\times \dots \times S}$.

Now, observe that 
$\Lawv^{S\times \dots \times S}\simeq (\Lawv^{S})^{\otimes_{n}}$ (cf.~\cite[Corollary 4.7.12 ($iii$)]{Russo2007}), and then, since equalizers are unique up to a unique isomorphism, we obtain an isomorphism $\Lawv^{\C M_{n}(S)}\simeq !_{n}\Lawv^{S}$. From this we obtain the claim via $!(\Lawv^{X})=\prod_{n}!_{n}(\Lawv^{X})\simeq\prod_{n}\Lawv^{\mathcal M_{n}(X)}\simeq 
\Lawv^{\coprod_{n}\mathcal M_{n}(X)}\simeq
\Lawv^{\mathcal{M}_{\mathrm{fin}}(X)}$.
%
%
\end{proof}

At this point, \cite[Theorem 21]{LemayCALCO2021}, which states that an additive Lafont category with free exponential modality and finite biproducts is a \emph{differential category} \cite{Blute2006}, yields:
\begin{theorem}\label{thm:linearlemay}
$\Mod$ (equivalently $\GMet$) is a differential category. 
\end{theorem}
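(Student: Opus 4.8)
The statement is obtained by instantiating \cite[Theorem 21]{LemayCALCO2021}, which endows any additive symmetric monoidal category with finite biproducts carrying a \emph{free} exponential modality with a deriving transformation, thereby making it a differential category in the sense of \cite{Blute2006}. Accordingly, the plan is not to build a deriving transformation by hand, but to verify that $\Mod$ satisfies the three hypotheses of that theorem — additive symmetric monoidal structure, finite biproducts, and a free exponential modality $!$ — and then to invoke it, transporting the resulting structure along the isomorphism $\Mod\simeq\GMet$.

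First I would record the ambient algebraic structure. The category $\Mod$ of $\Lawv$-modules is symmetric monoidal, with tensor classifying $\Lawv$-bilinear maps; its hom-objects are themselves sup-lattices, so $\Mod$ is enriched over commutative monoids and possesses finite biproducts, which moreover commute with the tensor product. All of this is standard for modules over a commutative quantale and is available from \cite[Theorem 4.7.11]{Russo2007}. This already exhibits $\Mod$ as an additive symmetric monoidal category with biproducts, the ambient setting required by the cited theorem.

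The substantive step is to verify that the modality $!$ is a \emph{free} exponential, i.e.\ that $!M$ is the cofree commutative comonoid over $M$. Following the recipe of \cite{Mellies2018}, I would take $!_nM=\Sym_n(M)$, the equalizer in $\Mod$ of the $\F S_n$-actions permuting the $n$ tensor factors of $M^{\otimes_n}$, whose elements are the joins of multisets $[x_1,\dots,x_n]$ equipped with the distance function displayed in the construction above, and then set $!M=\prod_n !_nM$. The key point is that, since biproducts commute with tensors in $\Mod$, this infinite biproduct of symmetric powers carries precisely the structure of the cofree commutative comonoid over $M$, exactly as in \cite[Proposition 1]{Mellies2018}; this is what makes $!$ a free (Lafont) exponential. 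As a sanity check that the construction is the correct generalization, I would invoke Proposition \ref{prop:pinv3}, which gives $!(\Lawv^X)\simeq\Lawv^{\multiset(X)}$ and $\Mod_!(\Lawv^X,\Lawv^Y)\simeq\LREL_!(X,Y)$, so that $!$ restricts to the relational exponential on the modules $\Lawv^X$.

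With the free exponential and the additive biproduct structure in place, \cite[Theorem 21]{LemayCALCO2021} applies directly and yields the deriving transformation, hence the differential-category structure on $\Mod$; the equivalence $\Mod\simeq\GMet$ then transports it to $\GMet$. I expect the only genuine difficulty to lie in the free-exponential step: one must be certain that the equalizers defining each $\Sym_n(M)$ and the infinite biproduct $\prod_n !_nM$ truly exist in $\Mod$ and satisfy their universal properties in the enriched, quantale-module setting, and that the induced comonoid structure maps (comultiplication, counit, and the attendant Seely isomorphisms) are the expected ones. Granting these — all of which are underwritten by the completeness and cocompleteness of $\Lawv$-modules together with the commutation of biproducts and tensors — the remaining verification is purely a matter of matching our data to the hypotheses of the cited theorem, requiring no further computation.
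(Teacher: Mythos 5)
Your proposal follows essentially the same route as the paper: both establish that $\Mod$ is an additive SMCC with biproducts commuting with tensors (via \cite[Theorem 4.7.11]{Russo2007}), construct the free exponential $!M=\prod_n\Sym_n(M)$ by the recipe of \cite{Mellies2018, Manzo2013}, and then invoke \cite[Theorem 21]{LemayCALCO2021} to obtain the differential-category structure, transported to $\GMet$ along the isomorphism. Even your sanity check via $!(\Lawv^X)\simeq\Lawv^{\multiset(X)}$ mirrors the paper's Proposition on the compatibility with $\LREL_!$, so the two arguments coincide in both decomposition and supporting citations.
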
 
Finally,  from Theorem \ref{thm:linearlemay} we can conclude that $\Mod_{!}$ can be endowed with a differential operator $E$ making it a CC$\partial$C \cite[Proposition 3.2.1]{Blute2009}, i.e.~Theorem \ref{thm:lemay} is proved.

To conclude, we make the definition of the differential operator $E$ of $\Mod_{!}$ explicit: for $f:!M\to N$, we let  
\begin{align}\label{eq:dermod}
Ef(\alpha)=
\bigvee\left\{
f(\beta\cup [x]) \ \Big \vert  \ 
\iota_{n}(\beta)\otimes \iota_{1}(x) \leq S(\alpha)
\right\}
\end{align}
where $\iota_{k}: M_{k}\to \prod_{i\in I}M_{i}$ is the injection morphism given by $\iota_{k}(x)( k)=x$ and $\iota_{k}(x)(i\neq k)=\infty$,
and $S: !(M\times N)\to !M\otimes !N$ is the Seely isomorphism \cite{Mellies2018}, and $E$ satisfies all required axioms.

One can easily check that, when $f\in \Mod_{!}(\Lawv^{X},\Lawv^{Y})\simeq \LREL_{!}(X,Y)$, its derivative $E f$ coincides with the derivative $D_{!}f$ defined for tps in Section \ref{section3}.

\end{document}